\newcommand{\lName}{1}
\newcommand{\donothing}[1]{#1}
\newcommand{\JACM}{\if\lName1\donothing{Journal of the {ACM}}\else{JACM}\fi}
\newcommand{\SICOMP}{\if\lName1\donothing{{SIAM} Journal on Computing}\else{SICOMP}\fi}
\newcommand{\ToC}{\if\lName1\donothing{Theory of Computing}\else{ToC}\fi}
\newcommand{\ToCGS}{\if\lName1\donothing{Theory of Computing Graduate Surveys}\else{ToC}\fi}
\newcommand{\TOCT}{\if\lName1\donothing{{ACM} Transactions on Computation Theory}\else{TOCT}\fi}
\newcommand{\ToIT}{\if\lName1\donothing{{IEEE} Transactions on Information Theory}\else{TOCT}\fi}
\newcommand{\CCjournal}{\if\lName1\donothing{Computational Complexity}\else{CC}\fi}
\newcommand{\CJTCS}{\if\lName1\donothing{Chicago Journal of Theoretical Computer Science}\else{CJTCS}\fi}
\newcommand{\TCS}{\if\lName1\donothing{Theoretical Computer Science}\else{TCS}\fi}
\newcommand{\IPL}{\if\lName1\donothing{Information Processing Letters}\else{IPL}\fi}
\newcommand{\JCSS}{\if\lName1\donothing{Journal of Computer and System Sciences}\else{JCSS}\fi}
\newcommand{\RSA}{\if\lName1\donothing{Random Structures and Algorithms}\else{RSA}\fi}
\newcommand{\JCTA}{\if\lName1\donothing{Journal of Combinatorial Theory, Series A}\else{JCTA}\fi}
\newcommand{\JCTB}{\if\lName1\donothing{Journal of Combinatorial Theory, Series B}\else{JCTB}\fi}
\newcommand{\PJM}{\if\lName1\donothing{Pacific Journal of Mathematics}\else{PJM}\fi}
\newcommand{\QICjournal}{\if\lName1\donothing{Quantum Information and Computation}\else{QIC}\fi}
\newcommand{\IJQI}{\if\lName1\donothing{International Journal of Quantum Information}\else{IJQI}\fi}
\newcommand{\PRA}{\if\lName1\donothing{Physical Review A}\else{PRA}\fi}
\newcommand{\PRL}{\if\lName1\donothing{Physical Review Letters}\else{PRL}\fi}
\newcommand{\VLDB}{\if\lName1\donothing{International Journal on Very Large Data Bases}\else{VLDB}\fi}
\newtheorem{theorem}{Theorem}
\newtheorem{lemma}[theorem]{Lemma}
\newtheorem{corollary}[theorem]{Corollary}
\newtheorem{definition}[theorem]{Definition}
\newtheorem{open}{Open Problem}
\theoremstyle{definition}
\newcommand{\eq}[1]{\hyperref[eq:#1]{(\ref*{eq:#1})}}
\renewcommand{\sec}[1]{\hyperref[sec:#1]{Section~\ref*{sec:#1}}}
\newcommand{\thm}[1]{\hyperref[thm:#1]{Theorem~\ref*{thm:#1}}}
\newcommand{\lem}[1]{\hyperref[lem:#1]{Lemma~\ref*{lem:#1}}}
\newcommand{\defn}[1]{\hyperref[def:#1]{Definition~\ref*{def:#1}}}
\newcommand{\prop}[1]{\hyperref[prop:#1]{Proposition~\ref*{prop:#1}}}
\newcommand{\cor}[1]{\hyperref[cor:#1]{Corollary~\ref*{cor:#1}}}
\newcommand{\fig}[1]{\hyperref[fig:#1]{Figure~\ref*{fig:#1}}}
\newcommand{\tab}[1]{\hyperref[tab:#1]{Table~\ref*{tab:#1}}}
\newcommand{\alg}[1]{\hyperref[alg:#1]{Algorithm~\ref*{alg:#1}}}
\newcommand{\app}[1]{\hyperref[app:#1]{Appendix~\ref*{app:#1}}}
\newcommand{\conj}[1]{\hyperref[conj:#1]{Conjecture~\ref*{conj:#1}}}
\newcommand{\chap}[1]{\hyperref[chap:#1]{Chapter~\ref*{chap:#1}}}
\newcommand{\problem}[1]{\hyperref[problem:#1]{Problem~\ref{problem:#1}}}
\newcommand{\algor}[1]{\hyperref[alg:#1]{Algorithm~\ref*{alg:#1}}}
\newcommand{\tO}{\tilde{O}}
\DeclareMathOperator{\Dom}{Dom}
\newcommand{\B}{\{0,1\}}
\DeclareMathAlphabet{\mathbbold}{U}{bbold}{m}{n}
\DeclareMathOperator*{\E}{\mathbb{E}} 
\DeclareMathOperator{\bN}{\mathbb{N}}
\newcommand{\OR}{\textsc{OR}}
\newcommand{\AND}{\mathtt{AND}}
\newcommand{\MAJ}{\mathtt{MAJ}}
\newcommand{\tS}{\mathtt{S}}
\newcommand{\tI}{\mathtt{I}}
\newcommand{\PrOR}{\mathtt{PrOR}}
\newcommand{\NAND}{\mathtt{NAND}}
\newcommand{\PARITY}{\mathtt{Parity}}
\DeclareMathOperator{\D}{D}
\DeclareMathOperator{\R}{R}
\DeclareMathOperator{\Q}{Q}
\DeclareMathOperator{\C}{C}
\DeclareMathOperator{\s}{s}
\DeclareMathOperator{\bs}{bs}
\DeclareMathOperator{\fbs}{fbs}
\DeclareMathOperator{\Adv}{Adv}
\DeclareMathOperator{\M}{M} 
\DeclareMathOperator{\N}{N} 
\DeclareMathOperator{\cost}{cost}
\DeclareMathOperator{\noisyR}{noisyR}
\DeclareMathOperator{\LR}{LR}
\DeclareMathOperator{\cF}{\mathcal{F}}
\DeclareMathOperator{\cA}{\mathcal{A}}
\DeclareMathOperator{\adeg}{\widetilde{\deg}}
\begin{document}

\title{Monte Carlo to Las Vegas for Recursively Composed Functions}

\author{
Bandar Al-Dhalaan\\
\small University of Waterloo\\
\small \texttt{bandar.al-dhalaan@uwaterloo.ca}
\and
Shalev Ben{-}David\\
\small Institute for Quantum Computing\\
\small University of Waterloo\\
\small \texttt{shalev.b@uwaterloo.ca}
}

\date{}
\maketitle

\begin{abstract}
For a (possibly partial) Boolean function $f:\{0,1\}^n\to\{0,1\}$ as well as a query complexity measure $\M$ which maps Boolean functions to real numbers, define the \emph{composition limit} of $\M$ on $f$ by $\M^*(f)=\lim_{k\to\infty} \M(f^k)^{1/k}$.

We study the composition limits of general measures in query complexity.
We show this limit converges under reasonable assumptions about the measure.
We then give a surprising result regarding the composition limit of 
randomized query complexity:
we show $\R_0^*(f)=\max\{\R^*(f),\C^*(f)\}$. Among other things, this 
implies that any bounded-error randomized algorithm for recursive
3-majority can be turned into a zero-error randomized algorithm
for the same task. Our result extends also to quantum
algorithms: on recursively composed functions,
a bounded-error quantum algorithm can be converted into
a quantum algorithm that finds a certificate with high
probability.

Along the way, we prove various combinatorial properties
of measures and composition limits.
\end{abstract}

{\footnotesize
\tableofcontents}
\clearpage

\section{Introduction}

\subsection{Composed functions}

Composition is a central concept in the study
of Boolean functions: many functions of interest can
be represented as compositions of simpler functions.
For Boolean functions $f\colon\B^n\to\B$
and $g\colon\B^m\to\B$, their composition
(sometimes called ``block composition'') is the Boolean
function $f\circ g\colon\B^{nm}\to\B$ defined
by applying $g$ to $n$ independent inputs, and plugging
the resulting $n$-bit string into $f$ to get a $1$-bit
answer. (This definition can be extended to partial
functions, which are defined on a subset of $\B^n$;
see \defn{composition}.)

An important line of work in
query complexity tries to establish
\emph{composition theorems} for certain
measures of Boolean functions;
such theorems aim to relate the complexity
of the composed function to the complexities of the
original function, particularly in the lower bound
direction. For example, a well-known composition
theorem for quantum query complexity $\Q(f)$ follows from
the ``negative-weight adversary bound''
\cite{HLS07,Rei11,LMR+11,Kim13}, and shows that
$\Q(f\circ g)=\Theta(\Q(f)\Q(g))$.

In this work, we focus on \emph{recursively composed}
functions, sometimes called tree functions.
That is, using $f^k$ to denote the composition
$f\circ f\circ \dots\circ f$ (with $k$ levels of $f$),
we wish to understand the limiting behavior of $f^k$
as $k\to\infty$. Recursively composing a Boolean function
has become a standard tool in query complexity: functions
constructed this way are commonly used to provide
separations between query complexity measures,
among many other uses. For some examples of specific
recursively composed functions, see
\cite{SW86,WZ88,NW95,KRW95,Aar08,Tal13,Goo15,KRS15,Amb16,GSS16,
ABK16,JK17,BKT18,BBG+21}.

Several works have also studied recursive composition
(or related notions)
in the abstract, including \cite{San95,Tal13,GJ16,GSS16,EMP18}.
Such works often study (explicitly
or implicitly) the \emph{composition limit} of a measure
of Boolean functions. That is, if $\M(f)$ is a real-valued
measure of a Boolean function
(such as certificate complexity or randomized query complexity),
define its composition limit by
\[\M^*(f)\coloneqq \lim_{k\to\infty} \M(f^k)^{1/k}.\]
Since $\M(f^k)$ will generally increase exponentially in $k$
for most measures $\M$ of interest, this limit
(when it exists) should itself be some nontrivial measure of $f$.
For some values of $\M$, such as quantum query complexity $\Q(f)$
and deterministic query complexity $\D(f)$,
the behavior of $\M^*(f)$ is well-understood
(we have $\Q^*(f)=\Adv^{\pm}(f)$ \cite{HLS07,Rei11,LMR+11,Kim13} 
and $\D^*(f)=\D(f)$ \cite{Tal13,Mon14}).
However, for other measures $\M$, the behavior of $\M^*$
seems to be extremely complex.

\subsection{Randomized query complexity}

We are motivated in part by the study of randomized
query complexity $\R(f)$ for composed functions.
Even for just two functions $f$ and $g$, the randomized
query complexity of the composition, $\R(f\circ g)$,
has complex behavior, and has been the subject of a lot
of work 
\cite{AGJ+18,BK18,GJPW18,GLSS19,BDG+20,BB20b,BGKW20,GM21,BBGM22,San24}.
For recursively composed functions, our understanding
is significantly worse.
In fact, even the following question is open:

\begin{open}\label{problem:one}
Is $\R^*(f)$ computable? That is, given the truth
table of a function $f$, can the value of $\R^*(f)$
be computed (to a specified precision) on a Turing machine?
\end{open}

The measure $\R^*(f)$ has mostly been studied for two
specific choices of $f$. The first is $\NAND_2$,
the NAND function on $2$ bits. In this case, $f^k$
becomes the ``NAND-tree'' function, also known as the
AND-OR-tree or game tree. This family of functions $f^k$
is known to separate $\R(\cdot)$ from $\D(\cdot)$
(in fact, it was the best known separation for a total function
before \cite{ABB+17}), because the exact value of $\R^*(\NAND_2)$
is known to be $(1+\sqrt{33})/4\approx 1.686$ \cite{SW86,San95},
so that $\R(\NAND_2^k)\approx 1.686^k\approx (2^k)^{\log_2(1.686)}$
while $\D(\NAND_2^k)= 2^k$. It is also known that
$\R_0^*(\NAND_2)=\R^*(\NAND_2)$, where $\R_0(f)$ is the zero-error
(``Las Vegas'', ``$\mathsf{ZPP}$'')
randomized query complexity of $f$. This originally
led to a conjecture that $\R_0(f)=\Theta(\R(f))$ for all
total Boolean functions, though this conjecture was disproved
in \cite{ABB+17}.

The second choice of $f$ for which $\R^*(f)$ has received significant attention
is $f=\MAJ_3$, the majority function on $3$ bits.
Despite much work \cite{JKS03,MNS+15,Leo13,GJ16},
the value of $\R^*(\MAJ_3)$
is still not known! It is only known to be between
$2.596$ and $2.650$. The value of $\R_0^*(\MAJ_3)$ is also open.
We note that if an algorithm as in \problem{one} were known,
one could use it to simply compute $\R^*(\MAJ_3)$,
since the truth table of $\MAJ_3$ has only $8$ bits of input.

\subsection{Our results}

Our main result is as follows.

\begin{theorem}\label{thm:R0RC}
For all (possibly partial) Boolean functions $f$,
$\R_0^*(f)=\max\{\R^*(f),\C^*(f)\}$.
\end{theorem}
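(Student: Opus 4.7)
The lower bound $\R_0^*(f) \geq \max\{\R^*(f), \C^*(f)\}$ follows from two standard facts about every Boolean function $g$: $\R_0(g) \geq \R(g)$ (zero-error implies bounded-error), and $\R_0(g) \geq \C(g)$ because when a zero-error algorithm terminates on input $x$ with output $b$, its queried variables must form a $b$-certificate for $x$ (otherwise the transcript is consistent with some $x'$ having $g(x') = \bar{b}$, forcing an error on $x'$). Applying both to $g = f^k$ and taking $k$-th roots in the limit gives the bound.

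For the upper bound, set $r = \R^*(f)$, $c = \C^*(f)$, $m = \max\{r, c\}$. For any $\delta > 0$ and large enough $k$, $\R(f^k) \leq (r+\delta)^k$ and $\C(f^k) \leq (c+\delta)^k$. My plan is to construct a zero-error algorithm for $f^k$ with expected cost $O((\R(f^k) + \C(f^k)) \cdot \poly(k)) \leq (m+2\delta)^k$, establishing $\R_0^*(f) \leq m$. The algorithm has two phases. First, run a bounded-error algorithm for $f^k$, amplified to very small error $\epsilon$, to propose a candidate answer $b$ at cost $O(\R(f^k)\log(1/\epsilon))$. Second, perform a zero-error \emph{certificate search} that, given $b$ and oracle access to $x$, produces a $b$-certificate for $f^k$ on $x$ whenever $f^k(x) = b$; if the search succeeds, output $b$, and otherwise run a certificate search for $\bar{b}$. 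Correctness is automatic because the algorithm only outputs a value once an actual certificate is observed.

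The key subroutine is a recursive certificate search, which I aim to implement with expected cost $\C(f^k) \cdot \poly(k)$. At each level, given a target value $b$ for the outer copy of $f$, the search selects a $b$-certificate structure for $f$ (a subset of children and target values whose fixing forces $f = b$) and then recursively certifies each selected subtree's value at level $k-1$. A natural strategy is to spend a small bounded-error budget per subtree to \emph{guess} the correct certificate structure at each level with high probability, then verify the guesses by recursion. The main obstacle will be analyzing this recursion to prevent a multiplicative blowup across the $k$ levels: a careless accounting lets each wrong structure guess trigger a retry whose cost matches the full subtree certificate cost, producing an extra factor per level. Overcoming this likely requires introducing a hybrid intermediate measure that composes cleanly and is sandwiched between $\max\{\R,\C\}$ and $\R_0$, and carefully tuning the amplification so that a union bound over the per-level failure events stays bounded while each level contributes only $\poly(k)$ overhead, yielding the desired composition-limit bound of $\max\{r, c\}$.
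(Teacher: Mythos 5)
Your lower bound and your algorithmic skeleton (amplify a Monte Carlo algorithm to guess subtree values, then recursively certify only the subtrees appearing in a cheapest certificate of the top gate) match the paper's construction. But there is a genuine gap in the analysis, and you have flagged it yourself without resolving it. The subroutine cost you need, $\C(f^k)\cdot\poly(k)$, is not what your recursion delivers. Writing $q_k$ for the cost at depth $k$, the natural recurrence is $q_k \le \tO(n)\,\R(f^{k-1}) + \tO(\C(f))\, q_{k-1}$: at each level you pay the guessing cost \emph{and} branch into $\approx \C(f)\log\C(f)$ recursive calls. Unrolling gives a geometric sum dominated by $\tO(n)\bigl(\tO(\C(f))^k + \tO(\R(f))^k\bigr)$ --- note $\C(f)^k$ and $\R(f)^k$, not $\C(f^k)$ and $\R(f^k)$. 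Since $\C(f^k)$ can be exponentially smaller than $\C(f)^k$ (this is exactly why $\C^*(f)$ can be smaller than $\C(f)$), no per-level accounting of this recursion will produce the bound you assert, and the ``hybrid intermediate measure'' you invoke to fix this is never constructed. A secondary issue: your fallback of ``run a certificate search for $\bar b$ if the search for $b$ fails'' does not obviously yield bounded expected cost, since a search for the wrong value need not fail detectably; the paper instead works with a certificate-finding algorithm allowed to output $\bot$ with probability $1/2$, whose composition limit equals $\R_0^*$.

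The missing idea is the paper's bootstrapping step. One accepts the crude conclusion $\R_0^*(f) \le O\bigl(\C(f)\log\C(f) + \R(f)\log\R(f)\bigr)$ (obtained from the geometric-sum bound above by taking $k$-th roots, which kills the $\tO(n)$ prefactor), and then applies the composition-limit operator $*$ to \emph{both sides of this inequality}. Using $\R_0^{**}(f)=\R_0^*(f)$, the identity $(\M+\N)^*=\max\{\M^*,\N^*\}$, and the fact that constant and logarithmic factors vanish under the star (\lem{StarCompare} and \lem{StarArithmetic} in the paper), the right-hand side collapses to $\max\{\R^*(f),\C^*(f)\}$. Without this second application of the star --- or some substitute for it --- your argument stalls at a bound involving the unstarred $\R(f)$ and $\C(f)$, which is strictly weaker than the theorem. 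You should also note that the convergence of $\R^*$, $\C^*$, and $\R_0^*$, which your first step silently assumes, itself requires the paper's convergence theorem.
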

Here $\R_0$ is the zero-error randomized query complexity,
$\R$ is the randomized query complexity, and $\C$
is the certificate complexity (with the $*$ denoting the
composition limit).
In other words, we show that for recursively
composed functions, a ``Monte Carlo''
(a.k.a. bounded error, $\mathsf{BPP}$)
randomized algorithm can be converted into a ``Las Vegas''
(a.k.a. zero error, $\mathsf{ZPP}$)
randomized algorithm with no additional cost
unless the certificate complexity is large.
Note that since
a zero-error randomized query algorithm must always
find a certificate, it is easy to see that
$\R_0^*(f)\ge \C^*(f)$, and since bounded-error algorithms
can simulate zero-error algorithms,
$\R_0^*(f)\ge \R^*(f)$. This means that the lower bound
direction of \thm{R0RC} is easy, and our main
contribution is the upper bound on $\R_0^*(f)$.

\thm{R0RC} implies, in particular, that
$\R_0^*(\MAJ_3)=\R^*(\MAJ_3)$. Even this special case
was not previously known, despite significant work
on these two measures \cite{JKS03,MNS+15,Leo13,GJ16}.

More generally, \thm{R0RC} gives a strategy which
uses only a bounded-error algorithm to find a certificate
in a recursively composed function.
This strategy
also works to convert a bounded error quantum algorithm
into a quantum algorithm for finding a certificate.

\begin{theorem}\label{thm:Q0QC}
Let $\Q_{\C}(f)$ denote the number of quantum queries
required to find a certificate of $f$ with constant
success probability. Then
\[\Q_{\C}^*(f)=\max\{\Q^*(f),\C^*(f)\}.\]
\end{theorem}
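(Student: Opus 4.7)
The lower bound $\Q_{\C}^*(f) \ge \max\{\Q^*(f), \C^*(f)\}$ is immediate. A certificate of $f$ determines $f$'s value, so a $\Q_{\C}$ algorithm immediately yields a bounded-error evaluator, giving $\Q_{\C}(f) = \Omega(\Q(f))$; and on $f^k$, a certificate has length $\C(f^k) = \C(f)^k$, forcing $\Q_{\C}(f^k) = \Omega(\C(f)^k)$ by an information-theoretic argument. Both inequalities pass to the composition limit.

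For the upper bound, the plan is to port the recursive strategy underlying \thm{R0RC} to the quantum setting. Set $M := \max\{\Q^*(f), \C^*(f)\}$, fix $\epsilon > 0$, and inductively construct a quantum algorithm $B_k$ that finds a certificate of $f^k$ with constant success probability using $\mathrm{poly}(k) \cdot ((1+\epsilon)M)^k$ queries. Given $B_{k-1}$, the algorithm $B_k$ proceeds in three phases: (a) for each of the $n$ top-level sub-instances of $f^{k-1}$, execute a bounded-error quantum evaluator of $f^{k-1}$ (of cost $\mathrm{poly}(k) \cdot ((1+\epsilon)\Q^*(f))^{k-1}$, available via the quantum composition theorem) boosted to error $o(1/n)$, measure, and uncompute to obtain classical values $b_1, \dots, b_n$; (b) from $(b_1, \dots, b_n)$ deterministically identify a size-$\C(f)$ top-level certificate $(S, \pi)$ of $f$; (c) invoke $B_{k-1}$ on each of the $\C(f)$ positions in $S$ and concatenate the returned sub-certificates into a certificate of $f^k$.

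This yields the recurrence
\[ \Q_{\C}(f^k) \le \mathrm{poly}(k) \cdot ((1+\epsilon)M)^k + \C(f) \cdot \Q_{\C}(f^{k-1}). \]
Since $\C(f^k) = \C(f)^k$ gives $\C^*(f) = \C(f) \le M$, unrolling the recurrence yields $\Q_{\C}(f^k) \le \mathrm{poly}(k) \cdot ((1+\epsilon)M)^k$, whence $\Q_{\C}^*(f) \le (1+\epsilon)M$ for every $\epsilon > 0$, and therefore $\Q_{\C}^*(f) \le M$.

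The principal obstacle is phase (a): to feed the outcomes $b_1, \dots, b_n$ as classical conditioning into phases (b) and (c), each bounded-error quantum subroutine must be boosted to failure probability $o(1/n)$ and its workspace cleanly uncomputed, so that the subsequent selection of $(S,\pi)$ and the recursive calls to $B_{k-1}$ do not suffer coherent interference from the parallel sub-evaluations. The classical counterpart in the proof of \thm{R0RC} is the use of repetition and majority voting for bounded-error randomized subroutines; here this is replaced by standard quantum error reduction, which incurs only a polylogarithmic multiplicative overhead that is absorbed by the $\mathrm{poly}(k)$ slack permitted in the composition limit. Verifying that this substitution imports cleanly, and that the error budget does not compound across the $k$ recursive levels, is the heart of the argument and mirrors the combinatorial bookkeeping in the proof of \thm{R0RC}.
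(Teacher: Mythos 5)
Your overall architecture matches the paper's: recursively evaluate the $n$ children with an amplified bounded-error quantum algorithm, pick a cheapest certificate for the guessed top-level input, and recurse on its $\C(f)$ positions. But the quantum error-reduction/uncomputation issue you identify as the heart of the argument is routine and is not where the difficulty lies. The genuine gap is in your unrolling step, which rests on the identity $\C(f^k)=\C(f)^k$ and hence $\C^*(f)=\C(f)$. This is false: only $\C(f\circ g)\le\C(f)\C(g)$ holds in general, and $\C^*(f)$ can be substantially smaller than $\C(f)$ --- which is precisely why the theorem is stated with $\C^*$ rather than $\C$. Your recurrence has branching factor $\C(f)$ in the term $\C(f)\cdot\Q_{\C}(f^{k-1})$, so unrolling it produces a dominant contribution of order $\C(f)^k$ (up to logs), not $((1+\epsilon)M)^k$, whenever $\C(f)>M$; and $\C(f)$ can indeed exceed both $\C^*(f)$ and $\Q^*(f)$. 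What your argument actually delivers is $\Q_{\C}^*(f)\le\max\{\Q(f),O(\C(f)\log\C(f))\}$, which is also where the paper's first pass lands, and which is strictly weaker than the claimed bound.

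The missing idea is the paper's final, somewhat counterintuitive step: treat the crude bound $\Q_{\C}^*(f)\le\max\{O(\C(f)\log\C(f)),\Q(f)\}$ as an inequality between measures of $f$ and apply the composition-limit operator to both sides of it. By \cor{StarStar}, $\Q_{\C}^{**}(f)=\Q_{\C}^*(f)$, and by \lem{StarCompare} and \lem{StarArithmetic} the star of $O(\C(f)\log\C(f))$ is $\C^*(f)$, the star of $\Q(f)$ is $\Q^*(f)$, and sums/maxima become maxima of stars; this converts the crude bound into $\Q_{\C}^*(f)\le\max\{\Q^*(f),\C^*(f)\}$. (An alternative repair is to rebase your recursion on $g=f^m$ for large $m$, so that $\C(g)^{1/m}$ and $\Q(g)^{1/m}$ approximate $\C^*(f)$ and $\Q^*(f)$; either way, some such step is needed.) A minor further point: the lower bound $\Q_{\C}^*(f)\ge\C^*(f)$ should be argued via $\Q_{\C}(f^k)\ge\C(f^k)$, which holds because the definition charges the algorithm for the length of the certificate it outputs, not via the false equality $\C(f^k)=\C(f)^k$.
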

The precise definition of $\Q_{\C}(f)$ is given
in \sec{LasVegas}. We note that our approach is likely to
also work for other computational models:
for example, it is likely possible to construct
a certificate-finding system of polynomials for a
recursively composed function whose degree is only
the approximate degree; we do not do this here, but
the only likely barriers are definitional technicalities.

On the way to \thm{R0RC} and \thm{Q0QC}, we encounter
a perhaps surprising issue: proving that composition
limits such as $\R^*(f)$, $\R_0^*(f)$, and $\Q_{\C}^*(f)$
converge is surprisingly tricky. To address this,
we prove some structural results regarding composition limits
for general measures.

\begin{theorem}[Informal; see \thm{FormalLimit}]\label{thm:limit}
Let $\M$ be a measure of Boolean functions
such that $\M(f\circ g)= \tO(\M(f)\M(g))$
and $\M(f\circ g)=\Omega(\M(f))$
holds for all $f$ and $g$.
Under some minor nicety conditions on $\M$,
the limit $\M^*(f)=\lim_{k\to\infty}\M(f^k)^{1/k}$
converges.
\end{theorem}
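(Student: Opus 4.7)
The strategy is to pass to logarithms and reduce to a classical almost-subadditive version of Fekete's lemma applied to $a_k := \log \M(f^k)$. The nicety assumption (presumed to include $\M(f) \geq 1$) makes $a_k$ well-defined; the $\tO$ upper bound becomes, via $\log$, an almost-subadditive inequality on $(a_k)$; and the $\Omega(\M(f))$ lower bound gives a matching linear lower bound $a_k \geq k \log c - O(1)$ for a constant $c > 0$, preventing the sequence from collapsing.

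The first step is to show $a_k = O(k)$. Unpacking $\M(f^{k+l}) \leq C\,\M(f^k)\M(f^l)\log^d\!\bigl(\M(f^k)\M(f^l)\bigr)$ and specializing to $l=k$ yields $a_{2k} \leq 2a_k + O(\log a_k)$, so the sequence $b_j := a_{2^j}/2^j$ obeys $b_{j+1} \leq b_j + O(j/2^j)$ and is therefore uniformly bounded, giving $a_{2^j}=O(2^j)$. A dyadic induction — writing a general $n$ as $2^j + m$ with $m < 2^j$ and applying the composition inequality once — then extends this to $a_n = O(n)$ for all $n$, possibly with a $O(\log n)$ correction absorbed into a slightly strengthened inductive hypothesis.

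The second step is to exploit the linear bound from step one to simplify the almost-subadditive inequality. Substituting $a_k + a_l = O(k+l)$ into the logarithmic error gives
\[
a_{k+l} \;\leq\; a_k + a_l + O\!\bigl(\polylog(k+l)\bigr).
\]
This is exactly the hypothesis of the classical theorem of de~Bruijn and Erd\H{o}s (1952): whenever $a_{n+m} \leq a_n + a_m + \psi(n+m)$ for a non-decreasing $\psi$ with $\sum_n \psi(n)/n^2 < \infty$, the ratio $a_n/n$ converges to a finite limit. Any polylogarithmic $\psi$ trivially satisfies this summability, so $a_k/k$ converges, and hence so does $\M(f^k)^{1/k} = e^{a_k/k}$.

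The main obstacle I anticipate is not the Fekete-style step but clarifying what the $\tO$ notation hides. If the polylog factor in $\M(f\circ g) = \tO(\M(f)\M(g))$ were allowed to depend on the input size of $f\circ g$ (which is doubly exponential in the depth of the composition tower) rather than on $\M(f)\M(g)$ itself, then the error term $\psi(k+l)$ could be polynomial in $k+l$, defeating the summability condition needed by de~Bruijn--Erd\H{o}s. The ``minor nicety conditions'' of the formal statement presumably pin down a clean interpretation (polylog in $\M$ itself), as well as excluding pathological measures for which $\M(f^k)$ could vanish and logarithms cease to exist. Once these details are fixed, both the dyadic boundedness argument in step one and the invocation of de~Bruijn--Erd\H{o}s in step two are routine.
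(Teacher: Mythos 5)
Your proposal is correct for the statement as literally written, but it takes a genuinely different route from the paper. You pass to $a_k=\log\M(f^k)$, establish $a_k=O(k)$ by a dyadic bootstrap, observe that the $\tO$ factor then contributes only an additive $O(\log(k+l))$ error, and invoke the de Bruijn--Erd\H{o}s almost-subadditive version of Fekete's lemma. The paper instead never takes logarithms: it builds an increasing chain $k_1\,|\,k_2\,|\,\cdots$ whose odd terms approximate $\limsup\M(f^k)^{1/k}$ and whose even terms approximate $\liminf\M(f^k)^{1/k}$, and then iterates the bound $\M(f^k)\le\N(f^\ell)\M(f^{k-\ell})$ exactly $k/\ell$ times along a divisor $\ell\,|\,k$ to force the limsup below the liminf. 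Your route is cleaner and rests on a classical lemma, and your anticipated obstacle (that the hidden factor must be polylog in $\M$ itself, not in the input size) is exactly the right one to worry about.

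The one substantive caveat is that your argument is genuinely narrower than the paper's formal theorem. The formal hypothesis is not $\M(f\circ g)\le\M(f)\M(g)\polylog(\M(f)\M(g))$ but the weaker ``nearly linear'' condition $\M(f\circ g)\le\N(f)\M(g)$ with $\N(f)\le\M(f)^{1+o(1)}$, where the $o(1)$ may decay arbitrarily slowly. After taking logarithms this yields $a_{k+l}\le a_k+a_l+\epsilon(a_k)\,a_k$ with $\epsilon\to 0$, i.e.\ an additive error that is $o(k+l)$ but need not satisfy $\sum_n\psi(n)/n^2<\infty$ (consider $\psi(n)=n/\log\log n$), so de Bruijn--Erd\H{o}s does not apply directly. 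The paper's divisibility trick sidesteps this because iterating along exact multiples of $\ell$ accumulates only a $(\M(f^\ell)^{1/\ell})^{\delta}$-type per-block loss with $\delta$ arbitrarily small. You would also need to say something (as the paper does via its RLBI edge-case analysis) about measures with $\M(f^k)=0$ or $\M(f^k)<1$, where your logarithms are undefined or negative; you correctly flag this as belonging to the ``nicety conditions,'' and it does not affect the soundness of your argument for the informal statement.
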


The conditions in this theorem are relatively easy to
satisfy. We therefore get the following corollary.

\begin{corollary}\label{cor:converge}
The following measures all have convergent composition limits:
\begin{enumerate}
\item The deterministic, randomized, and quantum query complexities
$\D(f)$, $\R(f)$, $\Q(f)$, and their zero-error and exact variants
$\R_0(f)$, $\Q_{\C}(f)$, and $\Q_E(f)$
\item Certain ``local'' measures such as sensitivity $\s(f)$,
fractional block sensitivity $\fbs(f)$,
and certificate complexity $\C(f)$ (also follows from \cite{GSS16})
\item Polynomial degree measures such as degree $\deg(f)$ and
approximate degree $\adeg(f)$, even for partial functions $f$.
\end{enumerate}
\end{corollary}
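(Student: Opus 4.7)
The plan is to apply \thm{limit} to each measure listed, which reduces the task to verifying for every such $\M$ the two composition inequalities $\M(f\circ g)=\tO(\M(f)\M(g))$ and $\M(f\circ g)=\Omega(\M(f))$, along with the unspecified nicety conditions (positivity on non-constant functions, invariance under adding/removing dummy variables, and the like, all of which are immediate for each measure on the list).

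The $\Omega(\M(f))$ direction is uniform across the list: by restricting the inner copies of $g$ to inputs realizing each of its two output values (which exist whenever $g$ is non-constant, and in particular when $g=f$ is non-constant), the composed function collapses to a copy of $f$, and each measure under consideration is monotone under such partial assignments. For the sub-multiplicative direction, I would invoke the standard composition theorems from the literature one measure at a time: $\D(f\circ g)\le\D(f)\D(g)$ by recursive simulation, $\Q(f\circ g)=\Theta(\Q(f)\Q(g))$ via the negative-weight adversary bound \cite{HLS07,Rei11,LMR+11,Kim13}, $\R(f\circ g)=O(\R(f)\R(g)\log\R(f))$ and the analogous bound for $\R_0$ via error-amplified recursive simulation, and for $\Q_{\C}$ an analogous composition/amplification argument combined with the fact that a certificate for $f\circ g$ is a certificate for the outer $f$ together with certificates for the witnessed inner $g$'s. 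For the local measures in item (2), $\C(f\circ g)\le\C(f)\C(g)$ is immediate and comparable multiplicative composition bounds for $\s$ and $\fbs$ are standard (as already noted, this group is also covered by \cite{GSS16}). For the polynomial measures in item (3), $\deg(f\circ g)\le\deg(f)\deg(g)$ follows from direct polynomial substitution, and $\adeg(f\circ g)=O(\adeg(f)\adeg(g))$ follows from Sherstov's robust-polynomial composition theorem, which is formulated so as to apply even in the partial-function setting.

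The main obstacle is the $\adeg$ case for partial functions, where one must be careful that robust composition respects the promise on both the inner and outer levels; this is handled by the existing robust-polynomial machinery but requires stating the right notion of approximation on the domain of $f\circ g$. Once each composition inequality has been recorded, the rest of the argument is essentially bookkeeping: verify the nicety conditions (which are trivial in every case) and appeal to \thm{limit} to conclude convergence of $\M^*$ for each listed $\M$.
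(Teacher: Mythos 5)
Your proposal is correct and follows essentially the same route as the paper: reduce to the convergence theorem by checking, measure by measure, a nearly multiplicative upper bound under composition (via recursive simulation with error amplification, polynomial substitution, certificate merging, etc.) and a lower bound $\M(f\circ g)\ge\M(f)$ obtained by restricting the inner copies of $g$ so that the composition collapses to $f$ (the paper packages this collapse as its ``reduction''/well-behavedness machinery, and handles sensitivity separately since it is not invariant under bit duplication -- one needs $\s(g)>0$ to pick the two inner inputs differing in a single bit). The only substantive difference is cosmetic: for $\adeg$ you invoke Sherstov's robust composition, whereas the paper just amplifies the inner approximating polynomial at a $\log$-factor cost, which the ``nearly linear'' allowance absorbs.
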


\paragraph{Convenience theorems.}
We also establish other structural results regarding
composition limits. For example, we show that relations
between measures such as $\M_1(f)\le \M_2(f)$
imply the corresponding relations for their composition 
limits, and we show that lower-order terms often
disappear when taking composition limits. We also
establish that $(\M^*)^*(f)=\M^*(f)$ (that is, the composition
limit of the composition limit is the original composition
limit), as well as other results such as
$\M^*(f\circ g)=\M^*(g\circ f)$ and $\M^*(f^k)=\M^*(f)^k$.

\paragraph{Combinatorial reductions.}
In order to show that measures satisfy the nicety conditions
of \thm{limit}, we study combinatorial properties of 
measures. We say a measure is well-behaved if it is
invariant under renaming of the indices (permuting
all input strings in the same way), invariant under
duplication of bits (adding a new bit to each input
which, in the promise of the function, always
takes an identical value to another existing bit),
invariant under superfluous bits (adding a bit to each
input that is not used for determining the function value),
and non-increasing under restrictions to a promise.
Another useful invariance property is what we call
alphabet renaming: negating a bit in all inputs
to the function. A well-behaved measure invariant under
alphabet renaming is called strongly well-behaved.

We define a notion of reductions between functions
which amounts to applying the aforementioned transformations.
We prove a variety of properties of how reductions
interact with compositions. A highlight of our results
is the definition of the switch function: this is 
the function $\tS\colon\{01,10\}\to\B$ defined by $\tS(01)=0$
and $\tS(10)=1$. We show that the switch function is the
easiest non-constant function for well-behaved measures,
in the following sense.

\begin{theorem}
For all (possibly partial)
non-constant functions $f$ and $g$
and all well-behaved measures $\M(\cdot)$,
we have $\M(f\circ g)\ge \M(f\circ \tS)$ and
$\M(f\circ g)\ge \M(\tS\circ g)$.
\end{theorem}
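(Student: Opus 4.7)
The plan is to use the reduction framework developed earlier in the paper. Recall that a well-behaved measure $\M$ is non-increasing under the basic operations of restriction, duplicate-bit addition or removal, superfluous-bit addition or removal, and index renaming, and that the paper has already shown these operations are respected by composition, so that a reduction from $g$ to $g'$ lifts to a reduction from $f\circ g$ to $f\circ g'$ (and similarly when one reduces the outer function). The theorem therefore reduces to a single key lemma: \emph{every non-constant (possibly partial) Boolean function $h$ reduces to $\tS$}.

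To prove this key lemma I would pick two inputs $x_0,x_1\in\Dom(h)$ with $h(x_0)=0$ and $h(x_1)=1$ and restrict $h$'s promise to $\{x_0,x_1\}$, which is a valid restriction reduction. On this two-input promise every coordinate falls into one of three classes: constant across $\{x_0,x_1\}$; ``$0$ in $x_0$ and $1$ in $x_1$''; or ``$1$ in $x_0$ and $0$ in $x_1$''. Coordinates in the constant class contribute nothing to the function value on the promise and can be eliminated via the superfluous-bit invariance. Within each of the two non-constant classes, all coordinates take identical values on both promise inputs and are therefore pairwise duplicates of one another, so iterated duplicate-removal leaves at most one coordinate per non-constant class. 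The reduced function therefore has either one or two input bits.

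If two bits survive, one from each non-constant class, the resulting function is literally $\tS\colon\{01,10\}\to\B$ and we are done. The subtle case is when only a single class is populated and the reduced function has just one input bit, so that it is the identity on $\B$ or its negation. Here I would re-enlarge back to two bits by appending a free superfluous bit (invariance) and then restrict the promise to $\{01,10\}$; in the identity case this yields $\tS$ directly, and in the negation case it yields $\tS$ with its outputs swapped, which agrees with $\tS$ under the index renaming that swaps its two input coordinates. Applying the key lemma to $g$ and to $f$ separately, and lifting through composition, gives the two inequalities of the theorem. The main obstacle is precisely this one-bit residue case, where the natural chain of reductions terminates with strictly fewer input bits than $\tS$; the trick of enlarging the function before restricting, legal thanks to the cooperation of superfluous-bit invariance with restriction, is what closes the gap.
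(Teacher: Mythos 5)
Your proposal is correct and is essentially the paper's argument: the paper proves the stronger reduction $\PrOR_{\bs(h)}\circ\tS\lesssim h$ (\thm{bsReduction}) by exactly your combinatorial moves (restrict to a sensitive pair of inputs, delete the now-constant bits as superfluous, collapse each of the two value-classes to a single bit via duplicate removal, and pad a one-bit residue back to two bits with a superfluous bit plus a promise), and then derives the stated inequalities by composing reductions via \lem{SwitchComposition} and $\tI\lesssim\PrOR_n$. Your version simply specializes to a single sensitive block, proving $\tS\lesssim h$ directly for every non-constant $h$ instead of routing through $\PrOR_{\bs(h)}$, which suffices for this theorem but yields slightly less than the paper's general block-sensitivity lower bound.
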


We can also show that the measure of a function
can be lower bounded by a block sensitivity lower bound:
the measure applied to the promise-OR function $\PrOR$
(composed with switch), of size equal to the block
sensitivity $\bs(f)$.

\begin{theorem}
For any (possibly partial) Boolean function $f$
and any well-behaved measure $\M(\cdot)$ on
Boolean functions,
$\M(f)\ge \M(\PrOR_{\bs(f)}\circ \tS)$.
\end{theorem}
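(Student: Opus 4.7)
The plan is to exhibit $\PrOR_{\bs(f)}\circ\tS$ as the result of applying a sequence of well-behaved operations to $f$; since restriction to a sub-promise is non-increasing on $\M$ and the other allowed operations are $\M$-preserving, this yields $\M(f)\ge\M(\PrOR_{\bs(f)}\circ\tS)$. Let $m=\bs(f)$ and fix $x\in\Dom(f)$ together with pairwise-disjoint sensitive blocks $B_1,\dots,B_m$, so that $f(x)\ne f(x^{B_i})$ for every $i$; without loss of generality $f(x)=0$ and $f(x^{B_i})=1$ (otherwise run the argument on $\bar f$, which has the same measure under the usual conventions).

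I first restrict $f$ to the $(m+1)$-point subdomain $D=\{x,x^{B_1},\dots,x^{B_m}\}$. Every bit outside $\bigcup_i B_i$ is now constant on $D$, hence acts as a superfluous bit and can be dropped. Within each block $B_i$, partition the remaining bits according to their value on $x$ into $B_i^0$ and $B_i^1$. Two bits in $B_i^0$ agree on every point of $D$ (both are $0$ on $x$, $1$ on $x^{B_i}$, and $0$ on $x^{B_k}$ for $k\ne i$), and likewise for $B_i^1$, so bits in a common part are duplicates in the promise; by duplication invariance I keep a single representative from each non-empty part.

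When both $B_i^0$ and $B_i^1$ are non-empty, the chosen pair already implements the $i$-th copy of $\tS$: the pair equals $01$ on $x$ and on $x^{B_k}$ for $k\ne i$, and $10$ on $x^{B_i}$. The main obstacle is the case where one part is empty, say $B_i^1=\emptyset$, since then no bit of $f$ undergoes the $1\to 0$ transition that the second coordinate of $\tS$ requires. I resolve this by a superfluous-then-restrict manoeuvre: add a superfluous bit (measure unchanged) so the domain becomes $D\times\{0,1\}$, and then restrict to the sub-promise $\{(y,\phi(y)):y\in D\}$ with $\phi(x)=\phi(x^{B_k})=1$ for $k\ne i$ and $\phi(x^{B_i})=0$. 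Since restriction only decreases $\M$, this is permitted, and the derived bit now behaves exactly like a missing $B_i^1$ representative. Paired with the $B_i^0$ representative it produces the $i$-th switch, and the symmetric trick handles $B_i^0=\emptyset$.

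Performing this block by block yields a $2m$-bit function whose domain is contained in $\{01,10\}^m$, which equals $0$ on the all-$01$ input and $1$ on each input with exactly one $10$ pair; after a final index-renaming putting the switch blocks in the canonical order this is precisely $\PrOR_m\circ\tS$. Since every step was either $\M$-preserving or $\M$-non-increasing, the required inequality follows. The step I expect to need the most care is verifying that the superfluous-then-restrict manoeuvre is formally permissible within the definition of well-behaved measures and that, after it, combining the derived bit with the original $B_i^0$ representative does produce a function equal to $\tS$ rather than merely isomorphic to it under a symmetry the definition may not grant.
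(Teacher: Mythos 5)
Your reduction is, step for step, the paper's own proof of this statement (Theorem~\ref{thm:bsReduction}): restrict to $\{x\}\cup\{x^{B_1},\dots,x^{B_m}\}$, delete the now-constant bits as superfluous, collapse each block to at most one $0$-representative and one $1$-representative by duplication-invariance, and handle a singleton block by adjoining a superfluous bit and then restricting to the sub-promise that forces it to equal the negation of the surviving bit --- that ``superfluous-then-restrict manoeuvre'' you were worried about is exactly the move the paper makes, and it is legitimate under the definition of $\lesssim$. The final index renaming then exhibits $\PrOR_m\circ\tS$ (not merely an isomorphic copy), so that concern is also unfounded.

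The one place you genuinely diverge is the opening ``without loss of generality $f(x)=0$, otherwise run the argument on $\bar f$.'' Under this paper's definitions that step is not available: being (weakly or strongly) well-behaved does \emph{not} include invariance under output negation --- two-sidedness is defined separately, and the paper explicitly warns that composition-defined measures are generally not two-sided. If the block-sensitivity-maximizing input $x$ has $f(x)=1$, your construction (and the paper's) produces $\overline{\PrOR_m\circ\tS}$, whose image of $x$ is the unique $1$-input, and this is \emph{not} $\lesssim$-equivalent to $\PrOR_m\circ\tS$: for instance $\C_0$ is a well-behaved measure with $\C_0(\overline{\PrOR_m\circ\tS})=1$ but $\C_0(\PrOR_m\circ\tS)=m$. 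So your instinct that the orientation needs attention is correct, but the fix you propose assumes $\M(\bar f)=\M(f)$, which the hypotheses do not grant; the paper's proof silently makes the same assumption by declaring the resulting function to be ``precisely $\PrOR_k\circ\tS$.'' To be airtight one must either assume $\M$ is two-sided, restrict $\bs(f)$ to $0$-inputs, or weaken the conclusion to $\M(f)\ge\min\{\M(\PrOR_{\bs(f)}\circ\tS),\M(\overline{\PrOR_{\bs(f)}\circ\tS})\}$.
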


\subsection{Our techniques}

\paragraph{Monte Carlo to Las Vegas.}
To establish \thm{R0RC}, we need to show how one can
use a bounded-error randomized algorithm $R$ for $f^k$
in order to find a certificate for $f^k$ (assuming
$k$ is large compared to the size of $f$).
The function $f^k$ can be represented as a tree of depth
$k$, where each node has $n$ children and applies
the gate $f$ to their values; at the bottom
of the tree is the input of length $n^k$.

Our approach to constructing a zero-error algorithm is
to use a bounded-error algorithm
to evaluate each of the $n$ children of the root of the tree;
that is, we run a bounded-error algorithm for
each of the $n$ copies of $f^{k-1}$ that are fed into
the outermost copy of $f$. We amplify each of these runs
to reduce the error. This gives us a string of length $n$
of the best-guess values for the input to the outermost
$f$, and with high probability (but not certainty),
this guess string is correct.

Next, assuming this $n$-bit string is the true input,
we choose a certificate $c$ for it which is cheapest.
Finally, we recursively call the
zero-error algorithm on each of the copies of $f^{k-1}$
that are used by the certificate $c$; this algorithm
produces a certificate for each of these copies,
and together they form a certificate for the outer copy
of $f$, which we return.

An analysis of this protocol eventually gives a bound
of the form
\[\R_0(f^k)\le \tO(n)\cdot \tO(\R(f)+\C(f))^k.\]
Taking a power of $1/k$ on both sides and the limit
as $k\to\infty$, the $\tO(n)$ factor disappears, and
we are left with $\R_0^*(f)\le \tO(\R(f)+\C(f))$.
This is still too large of a bound: $\R^*(f)$ and
especially $\C^*(f)$ may be substantially smaller
than $\R(f)$ and $\C(f)$. The next trick is perhaps
counterintuitive: we just take the composition limit
of the inequality itself. In other words, we apply
the $*$ operator to both sides:
\[\R_0^{**}(f)\le \tO(\R(f)+\C(f))^*.\]
Using our convenience theorems, we show this simplifies
drastically: $\R_0^{**}(f)$ equals $\R_0^*(f)$,
log factors disappear, the addition becomes a max,
and the measures $\R(f)$ and $\C(f)$ become the potentially
smaller measures $\R^*(f)$ and $\C^*(f)$.
We can therefore extract
a clean and powerful upper bound out of the much cruder
analysis of our simple Las Vegas algorithm:
\[\R_0^*(f)\le \max\{\R^*(f),\C^*(f)\}.\]

\paragraph{Convergence of composition limits.}
For the convergence of composition limits 
(\thm{limit}), roughly speaking,
we form a subsequence that converges to both
the limsup and the liminf of the sequence $\M(f^k)^{1/k}$.
The idea is as follows: we interleave a sequence 
$k_1,k_3,k_5\dots$ which converges to the limsup of
$\M(f^k)^{1/k}$ with a sequence $k_2,k_4,k_6,\dots$
which converges to the liminf of $\M(f^k)^{1/k}$.
However, we do this carefully, picking each $k_i$ one
at a time, and we pick each one to be so large compared
to the previous ones that it is close
(in a multiplicative sense)
to a multiple of all the previous ones. We can therefore
round the $k_i$ in our sequence so that each one is
exactly equal to a multiple of all the previous ones,
and we do this in a careful way that ensures the odd
indices still converge to the limsup and the even ones
to the liminf.

Finally, we use the composition properties that we assumed
about $\M$ to upper bound the limsup subsequence in terms
of the liminf subsequence, forcing them to approach each
other and converge to the same value. This suffices
to show the limit of the original sequence exists.

\section{Preliminaries}\label{sec:prelims}

\subsection{Boolean functions, measures, and composition limits}

We review some basic definitions in query complexity.
We warn that some of the details,
such as the definition of a general
measure on Boolean functions, are not necessarily standard
in the field, since such definitions are not usually needed
in this generality.

\begin{definition}[Boolean functions]
A possibly partial Boolean function on $n\in\bN$ bits is a function
$f\colon S\to\B$, where $S\subseteq\B^n$. We can also write $f$
as a function $f\colon\B^n\to\{0,1,*\}$, with $f(x)=*$ for $x\notin S$.
The domain of $f$ is $\Dom(f)\coloneqq S$, and the input size
is $n(f)\coloneqq n$. The function $f$ is called \emph{total}
if $\Dom(f)=\B^{n(f)}$.
\end{definition}

\begin{definition}
For $n\in\bN$, let $\mathcal{F}_n$ denote the set of all possibly
partial Boolean functions on $n$ bits. Let
$\mathcal{F}\coloneqq\bigcup_{n=1}^\infty\mathcal{F}_n$
denote the set of all (possibly partial) Boolean functions.
\end{definition}

\begin{definition}[Measure]
A measure $M$ on a subset $\mathcal{A}\subseteq\mathcal{F}$
is a function $M\colon\mathcal{A}\to[0,\infty)$.
\end{definition}

\begin{definition}[Composition]\label{def:composition}
Composition is a binary operation 
$\circ\colon\mathcal{F}\times\mathcal{F}\to\mathcal{F}$
on possibly partial Boolean functions. For two functions $f,g\in\mathcal{F}$,
their composition $f\circ g$ is a function on strings of length $n(f)\cdot n(g)$
representing the composition of $f$
with $n(f)$ independent copies of the function $g$.

Formally, a string $x$ of length $n(f)\cdot n(g)$ is in $\Dom(f\circ g)$ if
it can be written $x=y^1y^2\dots y^{n(f)}$ with each $y^i\in\Dom(g)$,
and if in addition the string $g(y^1)g(y^2)\dots g(y^{n(f)})$ is in $\Dom(f)$.
In that case, the value of $f\circ g(x)$ is defined to be
$f(g(y^1)g(y^2)\dots g(y^{n(f)}))$.
\end{definition}

\paragraph{Remark.} It is not hard to see that composition
is associative. Denote $f\circ f$ by $f^2$,
$f\circ f\circ f$ by $f^3$, and so on.
Using $I$ to denote the identity function on one bit,
we observe that $(\mathcal{F},\circ)$ forms a \emph{monoid}
with identity $I$, since $f\circ I=I\circ f=f$ for all
$f\in\mathcal{F}$. We define $f^0$ to be $I$.

\begin{definition}[Composition-closed class]
A subset $\mathcal{A}\subseteq\mathcal{F}$ is called
a \emph{composition-closed class} if it is a submonoid
of $\mathcal{F}$; that is, it must contain $I$ and be
closed under composition.
\end{definition}

\begin{definition}[Composition limits]
For any $f\in\mathcal{F}$ and any measure $M$ defined on $\{f,f^2,\dots\}$,
define
\[M^{\underline{*}}(f)\coloneqq\liminf_{k\to\infty} M(f^k)^{1/k},\]
\[M^{\overline{*}}(f)\coloneqq\limsup_{k\to\infty} M(f^k)^{1/k}.\]
If $M^{\underline{*}}(f)=M^{\overline{*}}(f)$, denote
this quantity by $M^*(f)$; this is called the composition limit of $M$
applied to $f$, and is equal to $\lim_{k\to\infty} M(f^k)^{1/k}$.
\end{definition}

\subsection{Decision trees, certificates, and randomized algorithms}

\begin{definition}[Decision tree]
A decision tree on $n$ bits is a rooted binary tree
with internal nodes labeled by $[n]$, leaves labeled by
$\B$, and arcs labeled by $\B$. We additionally require
that no pair of internal nodes such that one is an ancestor
of the other share the same label; this ensures
the height each leaf is at most $n$. The height of the
tree is defined as the maximum height from the root to a leaf.

For a decision tree $D$ on $n$ bits, let $D(x)$
denote the leaf label reached when we start from the root,
and for each internal node labeled by $i$, we follow
the arc labeled by $x_i$ down the tree. For a possibly
partial Boolean function $f\in\mathcal{F}$, we say
that $D$ computes $f$ if $n=n(f)$ and $D(x)=f(x)$
for all $x\in\Dom(f)$. The deterministic query complexity,
$\D(f)$, is defined as the minimum height of a decision
tree which computes $f$.
\end{definition}

\begin{definition}[Partial assignment]
A partial assignment on $n$ bits is a string
$p\in\{0,1,*\}^n$. We equate $p$ with both the set of
pairs $\{(i,p_i):p_i\ne *\}$ as well as with the function
defined by this set of pairs. We say partial assignments
$p$ and $q$ are consistent if for all $i\in[n]$,
either $p_i=q_i$ or else at least one of $p_i$ and $q_i$
is $*$. The use of set notation such as $p\subseteq q$
and $|p|$ should be interpreted with respect to the
set of ordered pairs.
\end{definition}

\begin{definition}[Certificates]\label{def:certificate}
Let $f\in\mathcal{F}$ be a possibly partial
Boolean function. Let $x\in\Dom(f)$, and let
$p$ be a partial assignment on $n(f)$ bits.
We say that $p$ is a certificate for $x$ if
$p\subseteq x$ and
for all $y\in\Dom(f)$ with $p\subseteq y$,
we have $f(y)=f(x)$. The certificate complexity
of $f$ at $x$, denoted $\C_x(f)$, is the minimum
size $|p|$ of a certificate of $x$ with respect to $f$.

The certificate complexity of $f$, denoted $\C(f)$,
is defined as $\C(f)\coloneqq\max_{x\in\Dom(f)}\C_x(f)$.
We also define $\C_0(f)\coloneqq\max_{x\in f^{-1}(0)} \C_x(f)$
and $\C_1(f)\coloneqq\max_{x\in f^{-1}(1)}\C_x(f)$.
\end{definition}

\begin{definition}[Randomized query complexity]
A randomized query algorithm $R$ on $n$ bits is a probability
distribution over decision trees on $n$ bits.
On each string $x\in\B^n$, we define $\cost(R,x)$
to be $\E_{D\sim R}[\cost(D,x)]$, where $\cost(D,x)$
for a decision tree $D$ is the length of the path
from the root to the leaf reached by $x$;
in other words, $\cost(R,x)$ is the expected number
of queries $R$ makes when run on $x$. The height
of $R$ is defined as the maximum height of any decision
tree in the support of $R$.

We define $R(x)$ to be the random variable $D(x)$
when $D$ is sampled from $R$. We say that $R$
computes $f$ to error $\epsilon\in[0,1/2)$ if
for all $x\in\Dom(f)$, we have
$\Pr[R(x)= f(x)]\ge 1-\epsilon$.
When $\epsilon>0$,
we define $\R_\epsilon(f)$ to be the minimum
height of a randomized query algorithm $R$ which computes
$f$ to error $\epsilon$. We further define
$\overline{\R}_\epsilon(f)$ to be the minimum
of $\max_{x\in\Dom(f)}\cost(R,x)$ over randomized
query algorithms $R$ which compute $f$ to error $\epsilon$.
When $\epsilon=1/3$, we omit it and write
$\R(f)$ and $\overline{\R}(f)$.

When $\epsilon=0$, the measure $\R_\epsilon(f)$
becomes the same as that of $\D(f)$, but the measure
$\overline{\R}_\epsilon(f)$ stays distinct. Following
convention in the literature, we define
$\R_0(f)$ to be $\overline{\R}_0(f)$.
\end{definition}

The definitions above are all standard in query complexity.
We note that $\R_\epsilon(f)$ and
$\overline{\R}_\epsilon(f)$ can both be amplified
(repeating the algorithm a few times to reduce the error
while increasing the cost), which means that the value
of $\epsilon$ does not matter if it is a constant in $(0,1/2)$
and if we do not care about multiplicative constants. 
Moreover, these measures are non-increasing in $\epsilon$,
and $\R_\epsilon(f)\ge\overline{\R}_\epsilon(f)$.
Markov's inequality can be used to cut off an algorithm
that is running too long compared to its expectation;
this can be used to show that $\overline{R}(f)=O(\R(f))$,
so the two measures are equivalent up to constant
factors. We also have
$\R_0(f)\ge \overline{\R}(f)=\Omega(\R(f))$.

In summary, when $\epsilon$ is either $0$ or constant,
the only distinct measures are $\D(f)$, $\R_0(f)$,
and $\R(f)$, each of which is smaller than the last
(up to constant factors). They correspond to the
complexity classes $\mathsf{P}$, $\mathsf{ZPP}$,
and $\mathsf{BPP}$, respectively.

\subsection{Sensitivity and degree measures}

\begin{definition}[Block notation]
A set of indices $B\subseteq[n]$ is called a \emph{block}.
Given a string $x\in\B^n$ and a block $B\subseteq[n]$,
the string $x^B$ is defined as the string $x$ with the bits
in $B$ flipped, i.e. $x^B_i=x_i$ if $i\notin B$ and
$x^B_i=1-x_i$ if $i\in B$. If $B=\{i\}$ contains a single
bit, we use $x^i$ as shorthand for $x^{\{i\}}$.
\end{definition}

\begin{definition}[Sensitivity]
Let $f\in\cF$ and let $x\in\Dom(f)$.
A bit $i\in[n]$ is called \emph{sensitive} for $x$
with respect to $f$ if $x^i\in\Dom(f)$ and $f(x^i)\ne f(x)$.
The \emph{sensitivity} of $x$ with respect to $f$, denoted $\s_x(f)$,
is the number of bits $i\in[n]$ which are sensitive for $x$.
The sensitivity of $f$ is defined as 
$\s(f)\coloneqq\max_{x\in\Dom(f)}\s_x(f)$.
\end{definition}

\begin{definition}[Block sensitivity]
Let $f\in\cF$ and let $x\in\Dom(f)$.
A block $B\subseteq[n]$ is called \emph{sensitive}
for $x$ with respect to $f$ if $x^B\in\Dom(f)$
and $f(x^B)\ne f(x)$. The \emph{block sensitivity} of $x$ with
respect to $f$, denoted $\bs_x(f)$, is the maximum number $k\in\bN$
such that there are $k$ blocks $B_1,\dots,B_k\subseteq[n]$
which are pairwise disjoint and which are all sensitive for $x$.
The block sensitivity of $f$ is defined as
$\bs(f)\coloneqq\max_{x\in\Dom(f)}\bs_x(f)$.
\end{definition}

\begin{definition}[Fractional block sensitivity]
Let $f\in\cF$ and let $x\in\Dom(f)$.
Let $\mathcal{B}$ be the set of sensitive blocks of $x$
with respect to $f$.
The \emph{fractional block sensitivity} of $x$ with respect
to $f$, denoted $\fbs_x(f)$, is the maximum possible sum
$\sum_{B\in\mathcal{B}} w_B$, where the weights $w_B\ge 0$
are constrained to satisfy $\sum_{B\in\mathcal{B}:i\in B} w_B\le 1$
for all $i\in[n]$.
The fractional block sensitivity of $f$ is defined as
$\fbs(f)\coloneqq\max_{x\in\Dom(f)}\fbs_x(f)$.
\end{definition}

The definition of other measures, such as polynomial degree and exact
quantum query complexity, can be found in \cite{BdW02}.

\subsection{Subtleties of little-o notation}

Since little-$o$ notation will often occur in the exponents
of our query complexity measures, we take a moment
to clarify exactly what we mean by this notation. The
details can get a bit subtle.

\begin{definition}[Little-o notation]\label{def:littleo}
Let $S$ be an infinite set and let $N,M\colon S\to[0,\infty)$ be functions.
We say that $M(x)\le N(x)^{o(1)}$ (over $x\in S$)
if $M(x)=0$ whenever $N(x)=0$ and if for every $\epsilon>0$,
there exists $C>0$ such that $M(x)\le N(x)^{\epsilon}$ whenever
$M(x)\ge C$.

We extend this definition in the natural way; for example,
for measures $M,N_1,N_2\colon\mathcal{F}\to[0,\infty)$,
the statement $M(f\circ g)\le N_1(f)^{1+o(1)}N_2(g)^{1+o(1)}$
means the same thing as
$\frac{M(f\circ g)}{N_1(f)N_2(g)}\le (N_1(f)N_2(g))^{o(1)}$
(plus the condition that $M(f\circ g)=0$ when the denominator
$N_1(f)N_2(g)$ is $0$),
where both sides are now functions from $\mathcal{F}^2$ to $[0,\infty)$
(so that the previous definition applies).
Similarly, $M(f\circ g)\ge N_1(f)^{1-o(1)}N_2(g)^{1-o(1)}$
means the same thing as
$\frac{N_1(f)N_2(g)}{M(f\circ g)}\le (N_1(f)N_2(g))^{o(1)}$
(plus the condition that $N_1(f)N_2(f)=0$ whenever $M(f\circ g)=0$).
\end{definition}

\begin{lemma}
The statement $M(x)\le N(x)^{1+o(1)}$ as defined in
\defn{littleo} is equivalent to the statement $N(x)\ge M(x)^{1-o(1)}$.
\end{lemma}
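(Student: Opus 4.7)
The plan is to unpack both sides of the claimed equivalence using the definitional convention for ratios laid out just above \defn{littleo}, and then verify that the two conditions are quantitatively equivalent via a simple substitution of the $\epsilon$ parameter. No new ideas are needed; the content is essentially that sub-polynomial slack can be moved freely across a polynomial-type bound.

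First I would rewrite each side in terms of the base little-$o$ definition. The statement $M(x)\le N(x)^{1+o(1)}$ unfolds to $\frac{M(x)}{N(x)}\le N(x)^{o(1)}$ together with the boundary condition $M(x)=0$ whenever $N(x)=0$; unrolling little-$o$, this asserts that for every $\epsilon>0$ there exists $C_\epsilon>0$ such that $M(x)\le N(x)^{1+\epsilon}$ whenever $M(x)/N(x)\ge C_\epsilon$. The statement $N(x)\ge M(x)^{1-o(1)}$ unfolds to $\frac{M(x)}{N(x)}\le M(x)^{o(1)}$ together with the same zero-condition $M(x)=0$ when $N(x)=0$, and unrolls to: for every $\epsilon>0$ there exists $C'_\epsilon>0$ such that $N(x)\ge M(x)^{1-\epsilon}$ whenever $M(x)/N(x)\ge C'_\epsilon$. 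Since the zero-conditions already agree, it suffices to show these two quantitative statements imply each other.

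For the forward direction, given $\epsilon>0$ I would apply the first statement with parameter $\epsilon':=\epsilon/(1-\epsilon)$, chosen so that $1+\epsilon'=1/(1-\epsilon)$: this produces a threshold $C$ such that $M(x)/N(x)\ge C$ implies $M(x)\le N(x)^{1/(1-\epsilon)}$, and raising both sides to the $(1-\epsilon)$-th power gives $M(x)^{1-\epsilon}\le N(x)$. The reverse direction is symmetric: apply the second statement with $\epsilon':=\epsilon/(1+\epsilon)$ so that $1-\epsilon'=1/(1+\epsilon)$, obtain $N(x)\ge M(x)^{1/(1+\epsilon)}$ in the large-ratio regime, and raise to the $(1+\epsilon)$-th power to recover $M(x)\le N(x)^{1+\epsilon}$.

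The only mild subtlety, and the one point where one has to be careful, is that both unfolded conditions need to threshold on the same quantity in order for the constants $C_\epsilon$ and $C'_\epsilon$ to transfer directly; this is automatic, since after unfolding through the definitional convention both statements threshold on $M(x)/N(x)$ being large. I do not expect any real obstacle: once the definitions are unrolled, the proof is pure algebra with the two substitutions above.
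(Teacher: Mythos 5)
Your proof is correct and takes essentially the same route as the paper: both unfold the two statements through the ratio convention of \defn{littleo}, observe that they threshold on the same quantity $M(x)/N(x)$, and note that the resulting bounds differ only in the exponent ($1+\epsilon$ versus $1/(1-\epsilon)$), which is handled by reparametrizing $\epsilon$ exactly as you do.
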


\begin{proof}
Note that these two definitions are somewhat different. The former says
that for each $\epsilon>0$ there exists $C_\epsilon$ such that $M(x)/N(x)\le N(x)^\epsilon$
whenever $M(x)/N(x)> C_\epsilon$, or equivalently, that for each $\epsilon>0$
there exists $C_\epsilon$ so that for all $x$,
$M(x)\le \max\{N(x)^{1+\epsilon},C_\epsilon N(x)\}$.
The latter says that for each $\epsilon>0$ there exists $C'_\epsilon$
such that $M(x)/N(x)\le M(x)^\epsilon$ whenever $M(x)/N(x)> C'_\epsilon$,
or equivalently, that for each $\epsilon>0$ there exists
$C'_\epsilon$ so that for all $x$,
$M(x)\le \max\{N(x)M(x)^{\epsilon},C'_\epsilon N(x)\}$.
For $\epsilon<1$, we can rearrange this latter condition as
$M(x)\le \max\{N(x)^{1/(1-\epsilon)},C'_\epsilon N(x)\}$.
(Both statements also include the condition that $N(x)=0\Rightarrow M(x)=0$,
but since this condition is present in both we may assume $N(x)\ne 0$
for this proof.)

The only difference between the formal version of the two conditions
is therefore the exponent on $N(x)$: it is either $1+\epsilon$ or $1/(1-\epsilon)$.
For $\epsilon<1$, we can convert between the two just by using a different
value of $\epsilon$, so the two conditions are equivalent.
\end{proof}

\section{Combinatorial properties of measures}

\subsection{Basic properties}

Since we aim to study measures on Boolean functions
with a high degree of generality, we will start by
defining some basic conditions we expect such measures
to satisfy. The first such condition, called
index renaming, says that a measure $\M$
(such as $\R(f)$ or $\C(f)$) should be invariant
to renaming the indices of the input string
(that is, permuting the bits of all the inputs to a function).

\begin{definition}[Index renaming]
Let $f\in\mathcal{F}_n$ be a (possibly partial) Boolean
function on $n$ bits, and let $\pi\colon[n]\to[n]$ be a permutation.
For any $x\in\B^n$, let $x_{\pi}\in\B^n$ denote the shuffling
of the bits of $x$ according to $\pi$, i.e.
$(x_{\pi})_i\coloneqq x_{\pi(i)}$. The \emph{index renaming}
of $f$ according to $\pi$ is the function $f_\pi$
defined on domain $\{x:x_\pi\in\Dom(f)\}$ via
$f_{\pi}(x)\coloneqq f(x_{\pi})$.

We say that a measure $\M$ defined on
$\mathcal{A}\subseteq\mathcal{F}$ is \emph{invariant under
index renaming} if for all $f\in \cA$ and all permutations
$\pi$, we have $f_{\pi}\in\mathcal{A}$ and $\M(f_\pi)=\M(f)$.
\end{definition}

The next property says that adding additional bits to the input
that don't affect the output of the function should
not change $\M(f)$.

\begin{definition}[Superfluous bits]
Let $f\in\mathcal{F}$ be a (possibly partial) Boolean function
on $n$ bits, and let $S\subseteq\B^m$ for some $m\in\bN$.
Define the \emph{superfluous bits} modification to $f$
with respect to $S$ as the function $f_S$ defined on
$\{xy:x\in\Dom(f),y\in S\}$ via $f_S(xy)\coloneqq f(x)$.

We say that a measure $\M$ defined on $\cA\subseteq\cF$
is \emph{invariant under superfluous bits} if for all
$f\in\cF$ and $S\subseteq\B^m$, we have $f\in \cA$
if and only if $f_S\in\cA$, and if $f\in\cA$ then
$\M(f_S)=\M(f)$.
\end{definition}

The above two properties are satisfied by virtually
all query measure of interest (up to some technical
details such some measures being defined only
on total functions, and hence not technically satisfying
the superfluous bits condition).
The next property says that a measure $\M$
should be invariant to duplicating bits.
This one is satisfied by most query measures; of the
ones we defined in \sec{prelims},
the only one that fails it is sensitivity $\s(f)$,
since sensitivity cares about the individual bits,
so turning a bit into a block of identical bits
changes the sensitivity.

\begin{definition}[Bit duplication]
Let $f\in\mathcal{F}$ be a (possibly partial) Boolean function
on $n$ bits, and let $i\in[n]$. Define the
\emph{bit duplication} of $f$ at bit $i$ to be the function
$f_i$ on domain $\{xx_i:x\in\Dom(f)\}$ defined by
$f_i(xx_i)=f(x)$ for all $x\in\Dom(f)$. Here $xx_i$
is the concatenation of $x$ with the additional bit $x_i$.

We say that a measure $\M$ defined on $\cA\subseteq\cF$
is \emph{invariant under bit duplication} if for all
$f\in\cF$ and all $i\in[n]\cup\{0\}$, we have
$f\in\cA$ if and only if $f_i\in\cA$ and
if $f\in\cA$ then $\M(f_i)=\M(f)$.
\end{definition}

Next we define the ``alphabet renaming''
property, which refers to renaming the input alphabet
from ``0'' and ``1'' to ``1'' and ``0'' respectively.
We allow this renaming to happen for only some subset
of the $n$ positions of the string. Nearly
all measures of interest are invariant under such renaming,
but unfortunately, measures defined by composition
(such as the composition limit $\M^*(f)$) are not
invariant under this property.

\begin{definition}[Alphabet renaming]
Let $f\in\mathcal{F}_n$ be a (possibly partial) Boolean
function on $n$ bits, and let $z\in\B^n$ be a string.
Define the \emph{alphabet renaming} of $f$ according to $z$
to be the function $f_z$ on domain $\{x:x\oplus z\in\Dom(f)\}$
defined by $f_z(x)\coloneqq f(x\oplus z)$, where $\oplus$
denotes the bitwise XOR of the two strings.

We say that a measure $\M$ defined on $\cA\subseteq \cF$
is \emph{invariant under alphabet renaming} if for all
$f\in\cA$ and all $z\in\B^n$, we have $f_z\in\cA$ and
$\M(f_z)=\M(f)$.
\end{definition}

Some measures are ``two-sided'', which means
they do not care if the output of $f$ is negated.
Other ``one-sided'' measures such as $\C_1(f)$ do care
about this. Measures defined my composition are generally
not two-sided.

\begin{definition}[Two sided]
Let $f\in\mathcal{F}_n$ be a (possibly partial) Boolean
function on $n$ bits. Define the \emph{negation} of
$f$, denoted $\overline{f}$, to be the function
defined on $\Dom(f)$ via $\overline{f}(x)\coloneqq 1-f(x)$.

We say that a measure $\M$ defined on $\cA\subseteq\cF$
is \emph{two-sided} if it is invariant under negations:
that is, for all $f\in\cA$, we have $\overline{f}\in\cA$
and $\M(\overline{f})=\M(f)$.
\end{definition}

Finally, virtually all measures respect the promise
in the sense that restricting to a smaller promise
cannot increase the measure.

\begin{definition}[Promise respecting]
We say a measure $\M$ defined on $\cA\subseteq \cF$
is \emph{promise respecting} if for any $f\in\cA$
and any $P\subseteq\Dom(f)$, the restriction $f|_P$
of $f$ to the subdomain $P$ satisfies $f|_P\in \cA$
and $\M(f|_P)\le \M(f)$.
\end{definition}

We collect the above definition into one definition
for convenience.

\begin{definition}[Well-behaved]
A measure $\M$ defined on $\cA\subseteq\cF$
is called \emph{weakly well-behaved} if it is invariant
under index renaming, superfluous bits, and bit duplication,
and in addition, it is promise-respecting.

We say $\M$ is \emph{strongly well-behaved} (or just
well-behaved) if in addition it is invariant under
alphabet renaming.
\end{definition}

\begin{definition}[Reductions]
For (possibly partial) Boolean functions $f$ and $g$,
we write $f\lesssim g$ if we can convert from $g$ to $f$
using the operations of index renaming, superfluous bits,
and bit duplication (or their inverse operations,
such as removing duplicated bits) as well as restriction
to a promise. We write $f\lesssim' g$ if we can convert
$g$ to $f$ using these operations in combination with
alphabet renaming (negating input bits).
\end{definition}

We note that $\lesssim$ and $\lesssim'$ are transitive
relations. They also characterize whether a measure
is well-behaved.

\begin{lemma}\label{lem:WellBehavedReductions}
A measure $\M$ is weakly well behaved if and only if
$f\lesssim g\Rightarrow \M(f)\le \M(g)$ for all $f$ and $g$.
Similarly, $\M$ is strongly well behaved if and only if
$f\lesssim' g\Rightarrow \M(f)\le\M(g)$ for all $f$ and $g$.
\end{lemma}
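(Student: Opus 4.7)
My plan is to handle both biconditionals together, since the arguments are parallel: I will work out the weakly well-behaved case in detail and then indicate how the strongly well-behaved case follows by appending alphabet renaming (which is self-inverse under XOR) to the list of allowed moves.

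For the forward direction, I will assume $\M$ is weakly well-behaved and $f\lesssim g$, so that by definition some finite sequence of operations converts $g$ into $f$. Each step of that sequence is either (i) index renaming, (ii) addition or removal of superfluous bits, (iii) bit duplication or removal of a duplicate, or (iv) restriction to a promise. I will proceed by induction on the length of the sequence. For steps of types (i)--(iii), the invariance hypothesis on $\M$ gives an equality between the measure before and after the move: the forward operation gives the equality directly, and the inverse operations are the same equality read in the other direction. Each step of type (iv) can only decrease $\M$, by the promise-respecting hypothesis. Chaining these (in)equalities along the sequence yields $\M(f)\le\M(g)$.

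For the reverse direction, I will assume $f\lesssim g\Rightarrow \M(f)\le\M(g)$ and verify the four defining properties of weakly well-behaved. The crucial point is that the first three operations are invertible within $\lesssim$: for any permutation $\pi$, both $f_\pi\lesssim f$ (apply $\pi$ to $f$) and $f\lesssim f_\pi$ (apply $\pi^{-1}$ to $f_\pi$); for superfluous bits, $f_S\lesssim f$ by adding them and $f\lesssim f_S$ by removing them (both explicitly permitted); and similarly for bit duplication. Applying the hypothesis in both directions forces equality in each case, establishing the three invariance properties. Promise-respecting is then immediate, since $f|_P\lesssim f$ for any $P\subseteq\Dom(f)$. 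The strongly well-behaved version follows verbatim, with the additional observation that alphabet renaming is self-inverse ($z\oplus z=0$), so both $f\lesssim' f_z$ and $f_z\lesssim' f$ hold and the hypothesis yields $\M(f_z)=\M(f)$.

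I do not anticipate a serious obstacle; the lemma is essentially an unpacking of definitions. The one point I will want to be careful about is that the ``inverse operations'' explicitly built into the definition of $\lesssim$ are to be treated as independent reduction moves, not as steps that must cancel an earlier forward application, since this is precisely what makes the invertibility argument in the reverse direction go through.
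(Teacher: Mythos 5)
Your proposal is correct and follows essentially the same route as the paper: the forward direction unwinds the reduction sequence step by step (equality for the invertible operations, non-increase for promise restriction), and the converse uses the invertibility of index renaming, superfluous bits, bit duplication, and alphabet renaming to get both $f\lesssim g$ and $g\lesssim f$ and hence equality of the measure. The paper's own proof is just a terser version of this, spelling out only the bit-duplication case and noting the others are analogous.
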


\begin{proof}
It follows immediately from definitions that for a well-behaved
measure respects the corresponding reductions.
For the converse direction,
we want to show that $\M$ is well-behaved assuming it
respects reductions. Being well-behaved means being non-increasing
under restriction to a promise, and invariant under
the other operations. The former property follows immediately
from the fact that $\M$ respects reductions (since restriction
to a promise is a type of reduction). For the latter,
consider the bit duplication property. We need to show that
if $f'$ is a bit-duplication of $f$, then $\M(f)=\M(f')$.
However, if $f'$ is a bit-duplication of $f$, then
$f\lesssim f'$ and $f'\lesssim f$, so we know that
$\M(f)\le\M(f')$ and $\M(f')\le \M(f)$, so $\M(f)=\M(f')$
as desired. Since all the other properties in the
definitions of well-behaved are reversible in this way
(except restriction to a promise), the same argument
works for all of them.
\end{proof}

\subsection{Switchable functions and reductions for composed functions}

\begin{definition}[Switchable function]
A function $f$
is called \emph{switchable} if $\overline{f}\lesssim' f$.
It is called \emph{strongly switchable} if
$\overline{f}\lesssim f$.

We also define the \emph{switch function}
to be $\mathtt{S}\colon\{01,10\}\to\B$
defined by $\mathtt{S}(01)=0$ and $\mathtt{S}(10)=1$.
\end{definition}

\paragraph{Remark.} Many familiar functions, such as $\MAJ$
and $\PARITY$, are switchable (for majority, negating
all its bits negate the output; for parity, negating a single
bit negates the output). Fewer natural functions are strongly switchable,
though we will see in \lem{Sof} that composition with
the switch function gives rise to them. Functions like
$\AND$ and $\OR$ do not seem to be switchable (we do not prove this
here).

\begin{lemma}\label{lem:NegationSwitch}
$f$ is (strongly) switchable if and only if $\overline{f}$ is 
(strongly) switchable.
\end{lemma}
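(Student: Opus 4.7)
The plan is to establish a more general preservation statement: both $\lesssim$ and $\lesssim'$ are stable under output-negation, in the sense that $g \lesssim f$ implies $\overline{g} \lesssim \overline{f}$ and likewise for $\lesssim'$. Once this is in hand, the lemma follows immediately: if $f$ is switchable then $\overline{f} \lesssim' f$, and preservation gives $\overline{\overline{f}} \lesssim' \overline{f}$, i.e.\ $f \lesssim' \overline{f}$, which is exactly the statement that $\overline{f}$ is switchable. The converse direction is identical after swapping the roles of $f$ and $\overline{f}$, and the strongly switchable case is the same verbatim with $\lesssim$ replacing $\lesssim'$.

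To prove the preservation statement I would unpack the definition of $\lesssim'$ (resp.\ $\lesssim$) into its atomic operations---index renaming, superfluous bits, bit duplication, restriction to a promise, and, in the case of $\lesssim'$, alphabet renaming---and check for each such operation $\phi$ that $\phi(\overline{f}) = \overline{\phi(f)}$ as functions. The reason this must hold is conceptual rather than computational: every one of these operations acts purely on the input side, either permuting, appending, duplicating, or restricting input bits, or XOR-ing inputs against a fixed mask, and simply passes the value returned by $f$ through untouched. For example, $(\overline{f})_\pi(x) = \overline{f}(x_\pi) = 1 - f(x_\pi) = \overline{f_\pi}(x)$, and an identical one-line verification handles superfluous bits, bit duplication, promise restriction, and alphabet renaming. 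Iterating this commutation along the finite sequence of atomic operations witnessing $g \lesssim f$ then yields $\overline{g} \lesssim \overline{f}$ as desired.

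The only minor bookkeeping issue is confirming that the domain of $\phi(\overline{f})$ coincides with that of $\overline{\phi(f)}$, but this is automatic since $\Dom(f) = \Dom(\overline{f})$ and the domains produced by each atomic operation are determined only by the input-side data. I do not expect a genuine obstacle here: the real content of the lemma is the observation that the reducibility relations $\lesssim$ and $\lesssim'$ are \emph{output-agnostic}, so any reduction witnessing $\overline{f} \lesssim' f$ transports mechanically, via term-wise complementation, to a reduction witnessing $f \lesssim' \overline{f}$.
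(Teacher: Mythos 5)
Your proposal is correct and follows essentially the same route as the paper: the paper's proof likewise observes that every atomic operation in a reduction acts only on the input side and hence commutes with output negation, so the witnessing reduction for $\overline{f}\lesssim' f$ transports to one for $f\lesssim'\overline{f}$. Your version merely makes the implicit preservation statement ($g\lesssim f\Rightarrow\overline{g}\lesssim\overline{f}$) explicit, which is a fine way to organize the same argument.
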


\begin{proof}
Suppose $f$ is switchable. Then $\overline{f}\lesssim' f$.
Consider applying this reduction to $\overline{f}$
instead of to $f$. Note that the operations in the reduction
(such as adding/removing duplicate bits) do not depend on the
output values of $f$, and indeed, all the operations commute
with negating the output values of $f$. Therefore,
applying the reduction to $\overline{f}$ is the same as applying
it to $f$ and negating the output values of the result;
but this gives the function $f$, so $f\lesssim' \overline{f}$,
meaning $\overline{f}$ is switchable.
The converse direction follows from replacing the roles of
$f$ and $\overline{f}$. The proof for strongly switchable
functions is identical.
\end{proof}

\begin{corollary}
If $\M$ is strongly well-behaved and $f$ is switchable,
then $\M(\overline{f})=\M(f)$.
If $\M$ is weakly well-behaved and $f$ is strongly switchable,
then we also have $\M(\overline{f})=\M(f)$. 
\end{corollary}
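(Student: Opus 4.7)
The plan is to combine the two immediately preceding lemmas in a very direct way: \lem{NegationSwitch} already gives us symmetry of the reduction relations around negation, so the corollary should reduce to a sandwich argument applied to \lem{WellBehavedReductions}.

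First I would unpack what switchability of $f$ gives us in both directions. By definition, $f$ switchable means $\overline{f}\lesssim' f$. By \lem{NegationSwitch}, $\overline{f}$ is then also switchable, which (applying the definition to $\overline{f}$ rather than to $f$) gives $\overline{\overline{f}}\lesssim' \overline{f}$, i.e. $f\lesssim' \overline{f}$. So switchability secretly produces a two-sided reduction between $f$ and $\overline{f}$ under $\lesssim'$.

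Next I would invoke \lem{WellBehavedReductions}: since $\M$ is strongly well-behaved, the reductions $\overline{f}\lesssim' f$ and $f\lesssim' \overline{f}$ immediately translate to $\M(\overline{f})\le \M(f)$ and $\M(f)\le \M(\overline{f})$, respectively. Sandwiching gives $\M(\overline{f})=\M(f)$. For the second half of the corollary, the argument is essentially identical but uses the $\lesssim$ version of both lemmas: strong switchability of $f$ gives $\overline{f}\lesssim f$, strong switchability of $\overline{f}$ (via \lem{NegationSwitch}) gives $f\lesssim \overline{f}$, and weak well-behavedness of $\M$ feeds both into \lem{WellBehavedReductions}.

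I do not expect any genuine obstacle here; the content of the corollary is really just the composition of \lem{NegationSwitch} with \lem{WellBehavedReductions}. The only small thing to be careful about is making sure the two applications of \lem{NegationSwitch} are cited in the right direction (once to $f$, once to $\overline{f}$, the latter giving the reverse reduction), and being explicit that the strongly-well-behaved hypothesis is what licenses using $\lesssim'$ rather than only $\lesssim$ in the first case.
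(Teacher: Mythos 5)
Your proof is correct and is exactly the intended argument (the paper states this corollary without proof, as an immediate consequence of \lem{NegationSwitch} and \lem{WellBehavedReductions}). The two-sided reduction $\overline{f}\lesssim' f$ and $f\lesssim'\overline{f}$ (resp.\ with $\lesssim$) combined with the sandwich via \lem{WellBehavedReductions} is precisely what the corollary rests on.
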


\begin{lemma}\label{lem:Sof}
For every (possibly partial) Boolean function $f$,
the following properties of $\tS\circ f$ hold:
\begin{enumerate}
    \item $\tS\circ f\lesssim f$
    \item $\tS\circ f$ is strongly switchable
    \item $f$ is switchable if and only if
    $f\lesssim' \tS\circ f$
    \item $f$ is strongly switchable if and only if
    $f\lesssim \tS\circ f$.    
\end{enumerate}
\end{lemma}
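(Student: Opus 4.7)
The plan is to address the four items in order, using earlier items to streamline later ones. First I would dispatch item (1) by exhibiting an explicit reduction: starting from $f$, add $n$ superfluous bits so the function ignores a second block $y^2 \in \B^n$, then restrict the promise to $\{y^1 y^2 : y^1, y^2 \in \Dom(f),\ f(y^1) \ne f(y^2)\}$. Both operations are allowed in $\lesssim$, and since $\tS(ab) = a$ on $\{01, 10\}$, the restricted function agrees pointwise with $\tS \circ f$.

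For item (2), I would compute the two sides explicitly under the promise: $\tS \circ f(y^1 y^2) = f(y^1)$ while $\overline{\tS \circ f}(y^1 y^2) = f(y^2)$. An index renaming that swaps the first $n$ input positions with the last $n$ carries $\tS \circ f$ into exactly $\overline{\tS \circ f}$, witnessing $\overline{\tS \circ f} \lesssim \tS \circ f$, which is strong switchability.

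The forward directions of items (3) and (4) form the heart of the proof, and this is where I expect the main obstacle to lie. The core idea is a lifting principle: given a chain of $\lesssim'$ (respectively $\lesssim$) operations converting $f$ to $\overline{f}$, the same chain can be applied ``inside the second copy'' of $\tS \circ f$. Each basic operation (index renaming, bit duplication, superfluous bits, alphabet renaming, restriction to a promise) acts locally on input positions, so restricting its action to the last $n$ positions and inserting index renamings to keep things aligned yields a valid reduction of the full $2n$-bit function. The result is a reduction from $\tS \circ f$ down to $\tS$ applied to the pair $(f, \overline{f})$, whose promise is $\{y^1 y^2 : f(y^1) = f(y^2)\}$ and whose value is $f(y^1)$. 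Restricting the promise further to $\{y^1 y^2 : y^1 = y^2\}$ and then removing the duplicated bits recovers $f$, completing both forward directions. Verifying the lifting principle carefully requires some bookkeeping (especially since superfluous/duplicated bits are formally appended ``at the end'' and must be shuffled into the appropriate positions), but no genuinely new idea.

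Finally, for the reverse directions of items (3) and (4), I would proceed by chaining. The preliminary observation is that the reduction operations ignore output values, so $g \lesssim' h$ implies $\overline{g} \lesssim' \overline{h}$ (and likewise for $\lesssim$). Assuming $f \lesssim' \tS \circ f$, negating both sides gives $\overline{f} \lesssim' \overline{\tS \circ f}$; chaining with item (2) and then item (1) yields $\overline{f} \lesssim' \tS \circ f \lesssim' f$, establishing switchability. Item (4) reverse follows the identical chain with $\lesssim$ in place of $\lesssim'$ throughout, which is permissible precisely because items (1) and (2) were already proved in the stronger $\lesssim$ form.
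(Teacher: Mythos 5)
Your proposal is correct and follows essentially the same route as the paper: the same explicit reductions for items (1) and (2), the same ``apply the switching reduction inside one block of $\tS\circ f$, then diagonalize and remove duplicates'' idea for the forward directions of (3) and (4), and the same chaining through items (1) and (2) (using that reductions commute with output negation) for the converses. The only cosmetic difference is that you apply the reduction to the second block rather than the first, which lands you on $f$ directly and saves the paper's final extra invocation of switchability.
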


\begin{proof}
To show that $\tS\circ f\lesssim f$, start with $f$,
add $n(f)$ superfluous bits that can take any value in
$\Dom(f)$ (here $n(f)$ is the input size of $f$),
and impose the promise that $f(x)\ne f(y)$ for every string
$xy$ in the new domain. This gives the function $\tS\circ f$.

To see that $\tS\circ f$ is strongly switchable, observe
that switching the two blocks of $\tS\circ f$ negates
the output value of this function (and this negated function
is actually equal to $\tS\circ \overline{f}$).

For the third item,
if $f$ is switchable, we have $\overline{f}\lesssim' f$.
Apply this reduction to the first block of $\tS\circ f$
(consisting of half the bits). Since the function $\tS\circ f$
is always equal to $f$ applied to the first block,
after the reduction,
the new function is equal to $\overline{f}$ applied to the first
block, and also equal to $\overline{f}$ applied to the second block;
these two blocks are independent other than the condition that
they have the same $f$-value. Now impose the promise that
the two blocks are identical, and remove the resulting duplicate bits.
This gives the function $\overline{f}$. Since $f$ is switchable,
we can convert this to $f$, so $f\lesssim'\tS\circ f$.
The same proof works to show that if $f$ is strongly
switchable then $f\lesssim \tS\circ f$.

Finally, suppose that $f\lesssim'\tS\circ f$.
We will show $f\lesssim' \overline{f}$,
which means $f$ is switchable by \lem{NegationSwitch}.
By the first part of the current lemma, we have
$\tS\circ \overline{f}\lesssim \overline{f}$.
Also, it is easy to see that
$\tS\circ f\lesssim \tS\circ \overline{f}$,
since the only difference between $\tS\circ f$
and $\tS\circ \overline{f}$ is the order of the two
blocks, so rearranging the bits is sufficient to convert between them.
Since we are assuming $f\lesssim'\tS\circ f$
(and since $\lesssim$ is a stronger property than $\lesssim'$),
transitivity gives us $f\lesssim' \overline{f}$.
Similarly, if $f\lesssim \tS\circ f$, the same argument
gives $f\lesssim \overline{f}$. This completes the proof.
\end{proof}

\begin{lemma}\label{lem:SwitchComposition}
For any (possibly partial) Boolean functions
$f$, $g$, $f'$, $g'$, we have:
\begin{enumerate}
    \item if $g'\lesssim g$ then $f\circ g'\lesssim f\circ g$
    \item if $f'\lesssim f$ then $f'\circ g\lesssim f\circ g$
    \item if $g'\lesssim' g$ then $f\circ g'\lesssim' f\circ g$
    \item if $f'\lesssim' f$ and $g$ is switchable,
    then $f'\circ g\lesssim' f\circ g$
    \item if $f'\lesssim' f$ and $g$ is strongly switchable,
    then $f'\circ g\lesssim f\circ g$.
\end{enumerate}
\end{lemma}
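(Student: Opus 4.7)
The plan is to prove each of the five items by decomposing the hypothesized reduction into its elementary steps (index renaming, superfluous bits, bit duplication, promise restriction, and possibly alphabet renaming) and showing that each step at the level of $f$ or $g$ can be simulated by a short sequence of elementary reductions at the level of $f \circ g$. Since $\lesssim$ and $\lesssim'$ are transitive, it suffices to handle one elementary step at a time.

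For items 1 and 3, where $g' \lesssim g$ (resp.\ $\lesssim'$), I would apply each elementary step simultaneously to all $n(f)$ copies of $g$ inside $f \circ g$. An index renaming of $g$ lifts to a block-by-block renaming; superfluous bits described by $S \subseteq \B^m$ added to each copy of $g$ translate into superfluous bits of $f \circ g$ drawn from $\{s^1 \cdots s^{n(f)} : s^i \in S\}$ (interleaved into the correct positions via a further index renaming); a bit duplication inside $g$ becomes $n(f)$ parallel bit duplications of $f \circ g$; a promise restriction to $P \subseteq \Dom(g)$ becomes the restriction to $\{y^1 \cdots y^{n(f)} : y^i \in P\}$; and an alphabet renaming of $g$ by $w \in \B^{n(g)}$ lifts to the alphabet renaming of $f \circ g$ by the $n(f)$-fold concatenation of $w$.

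For item 2 where $f' \lesssim f$, each elementary step on $f$ lifts at the outer (block) level. Index renaming of $f$ becomes a permutation of the $n(f)$ blocks of $f \circ g$. Promise restriction of $f$ to $P \subseteq \Dom(f)$ becomes the restriction of $f \circ g$ to $\{y^1 \cdots y^{n(f)} : g(y^1) \cdots g(y^{n(f)}) \in P\}$. Adding superfluous bits $S \subseteq \B^m$ to $f$ becomes adding $m \cdot n(g)$ bits to $f \circ g$ drawn from the set $\{z^1 \cdots z^m : z^i \in \Dom(g),\ g(z^1) \cdots g(z^m) \in S\}$. Bit duplication of $f$ at index $i$ becomes $n(g)$ successive bit duplications in $f \circ g$ at the positions of block $i$, producing a duplicate block whose $g$-value automatically equals the $i$-th $g$-value.

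For items 4 and 5, the new ingredient beyond item 2 is handling an alphabet-renaming step $f \mapsto f_z$ inside the reduction $f' \lesssim' f$. The identity
\[
f_z \circ g(y^1 \cdots y^{n(f)}) \;=\; f\bigl(g(y^1) \oplus z_1,\ \ldots,\ g(y^{n(f)}) \oplus z_{n(f)}\bigr)
\]
shows that simulating $f \mapsto f_z$ on the composed level amounts to replacing $g$ by $\overline{g}$ in each block where $z_i = 1$. Switchability of $g$ gives a reduction $\overline{g} \lesssim' g$, which I would apply independently to each such block, yielding item 4; strong switchability gives $\overline{g} \lesssim g$, letting us avoid alphabet renaming at the level of $f \circ g$ and yielding item 5. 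The main (and relatively minor) obstacle is the bookkeeping that lets us apply a reduction $\overline{g} \lesssim' g$ to only a chosen subset of blocks: this works because every elementary operation can be restricted to act on any chosen subset of input indices while leaving the remaining indices untouched, so the block-wise application is a legitimate sequence of elementary operations on $f \circ g$.
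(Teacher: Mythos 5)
Your overall plan matches the paper's proof: decompose the reduction into elementary steps, lift inner-function steps block-by-block (items 1 and 3), lift outer-function steps at the block level (item 2), and handle alphabet renaming of $f$ by applying $\overline{g}\lesssim' g$ (resp.\ $\overline{g}\lesssim g$) to the blocks where $z_i=1$ (items 4 and 5). However, there is a genuine error in your simulation of a bit-duplication step of the outer function in item 2. You propose to simulate $f\mapsto f_i$ by performing $n(g)$ bit duplications on $f\circ g$, duplicating block $i$ verbatim. The resulting function has domain consisting only of inputs whose new block is \emph{bit-for-bit identical} to block $i$. But $\Dom(f_i\circ g)$ only requires that the new block have the same \emph{$g$-value} as block $i$; whenever $g$ is not injective on some output value, your function is a strict restriction of $f_i\circ g$. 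Since restriction to a promise is a one-way operation, showing that this restriction reduces to $f\circ g$ does not show $f_i\circ g\lesssim f\circ g$, and for a weakly well-behaved measure $\M$ it would only bound $\M$ of the restricted function, not $\M(f_i\circ g)$, which is what the lemma's applications need.

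The fix is the paper's construction: starting from $f\circ g$, first add a \emph{superfluous} block of $n(g)$ bits ranging over all of $\Dom(g)$ (this is an invariance, so it goes in both directions), and only then restrict to the promise that the $g$-value of the new block equals the $g$-value of block $i$; after an index renaming this is exactly $f_i\circ g$. Relatedly, your plan is silent on the inverse step (removing a duplicated bit of $f$, i.e.\ passing from $f_i$ to $f$), which also needs care: there one starts from $f_i\circ g$, restricts to the promise that the extra block is literally identical to block $i$ (now the verbatim restriction is in the correct direction), and then deletes the resulting duplicated bits. With these two repairs, the rest of your argument, including the block-wise application of $\overline{g}\lesssim' g$ in items 4 and 5, goes through as in the paper.
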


\begin{proof}
The first and third items are straightforward:
it is not hard to see that rearranging bits of $g$
rearranges bits of $f\circ g$, restricting $g$ to a promise
restricts $f\circ g$ to a promise, duplicating bits of $g$
duplicates bits of $f\circ g$ (and hence deleting duplicates
of $g$ deletes duplicates of $f\circ g$), adding or removing
superfluous bits of $g$ adds/removes superfluous bits of
$f\circ g$, and negating bits of $g$ negates bits of $f\circ g$.

For the second item,
we need to show that if we can convert $f$ to $f'$ by adding/deleting
duplicated or superfluous bits, rearranging bits, and restricting to a
promise, then we can also convert $f\circ g$ to $f'\circ g$ using
these operations. It is easy to see that rearranging bits of $f$
amounts to rearranging bits of $f\circ g$, and restricting $f$
to a promise amounts to restricting $f\circ g$ to a promise.
Moreover, adding superfluous bits to $f$ adds superfluous bits to $f\circ g$,
and removing superfluous bits from $f$ removes superfluous bits from $f\circ g$
(here a set of bits $S\subseteq[n]$ is \emph{superfluous} if the
value of the function $f(x)$ depends only on $x_{[n]\setminus S}$,
that is, the partial assignments on the bits other than $S$, and moreover,
for any such partial assignment $x_{[n]\setminus S}$, the possible
assignments to $x_S$ that are in the promise are always the same).

It remains to handle the addition and deletion of duplicated bits.
Suppose we add a bit duplication to $f$ to get $f_i$, and consider
$f_i\circ g$. We can construct this from $f\circ g$ by adding
a superfluous set of bits (corresponding to another input to $g$
which is ignored), and then imposing a promise (to ensure
that the added block of bits has the same $g$-value as the
$i$-th input to $g$). Hence $f_i\circ g\lesssim f\circ g$.
Conversely, suppose we delete a duplicated bit to go from $f_i$ to $f$.
This time, we start with $f_i\circ g$ and wish to construct $f\circ g$.
The function $f_i\circ g$ has an extra input to $g$; we will
impose the promise that this extra input is identical to the $i$-th
input to $g$, which makes the extra input bits duplicated bits,
and then we will delete the duplicated bits from the resulting
function. This completes the proof of the second item.

For the last two items, we need to show how to handle negating
bits of $f$. Note that negating bits of $f$ corresponds
to switching the $g$-values of blocks of $f\circ g$.
If $g$ is (strongly) switchable, we can convert it to $\overline{g}$
via the above operations (excluding bit negations in the case
of strongly switchable $g$). Applying this to a single
block flips the corresponding bit of the outer function $f$
inside the composition $f\circ g$. The desired result follows.
\end{proof}

\begin{corollary}
Let $f$ and $g$ be (possibly partial) Boolean functions.
Then
\begin{enumerate}
    \item If $f$ is strongly switchable, so is $f\circ g$.
    \item If $f$ and $g$ are both switchable, so is $f\circ g$.
    \item If $f$ is switchable and $g$ is strongly switchable,
    then $f\circ g$ is strongly switchable.
\end{enumerate}
\end{corollary}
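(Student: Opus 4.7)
The plan is to reduce each item directly to \lem{SwitchComposition} using the key observation that negating the outer function of a composition commutes with composition, i.e.\ $\overline{f\circ g} = \overline{f}\circ g$. This is immediate from the definition of composition, since $f\circ g$ applies $f$ to the string of $g$-values, so flipping $f$'s output flips the composition's output on the same domain.

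With this identity in hand, each item becomes a one-line application of \lem{SwitchComposition}. For item~1, I would unpack that $f$ strongly switchable means $\overline{f}\lesssim f$, and then invoke item~2 of \lem{SwitchComposition} to conclude $\overline{f}\circ g\lesssim f\circ g$; combined with $\overline{f\circ g}=\overline{f}\circ g$, this says $\overline{f\circ g}\lesssim f\circ g$, i.e.\ $f\circ g$ is strongly switchable. For item~2, $f$ switchable gives $\overline{f}\lesssim' f$, and since $g$ is also switchable, item~4 of \lem{SwitchComposition} yields $\overline{f}\circ g\lesssim' f\circ g$, hence $f\circ g$ is switchable. For item~3, $f$ switchable again gives $\overline{f}\lesssim' f$, but this time $g$ is strongly switchable, so item~5 of \lem{SwitchComposition} yields the stronger conclusion $\overline{f}\circ g\lesssim f\circ g$, making $f\circ g$ strongly switchable.

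There is no real obstacle here: the corollary is essentially a repackaging of the last three items of \lem{SwitchComposition}, and the only small thing to double-check is that $\overline{f\circ g}=\overline{f}\circ g$ (rather than, say, $f\circ\overline{g}$), which is a direct consequence of the definition of composition. I would write the proof as three short paragraphs, each beginning by expanding the hypothesis via the definition of (strongly) switchable, then naming the specific item of \lem{SwitchComposition} invoked, and finally applying the commutation identity.
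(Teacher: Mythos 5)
Your proposal is correct and follows essentially the same route as the paper: each item is exactly an application of items~2, 4, and~5 of \lem{SwitchComposition} respectively, combined with the identity $\overline{f}\circ g=\overline{f\circ g}$.
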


\begin{proof}
If $f$ is strongly switchable, then $\overline{f}\lesssim f$,
so by \lem{SwitchComposition} we have
$\overline{f}\circ g\lesssim f\circ g$. Since
$\overline{f}\circ g$ is the same function as $\overline{f\circ g}$,
the first item follows. The next two items follow from
\lem{SwitchComposition} in a similar way.
\end{proof}

We note that $\M(f\circ g)$ can be viewed as a measure
of $f$ (with fixed $g$) or as a measure of $g$ (with
fixed $f$). The following property follows.

\begin{corollary}
Suppose $\M(f)$ is weakly well-behaved on composition-closed
$\cA$. Then $\M(f\circ g)$ is a weakly well-behaved measure
of $f$ and a weakly well-behaved measure of $g$.

Moreover, if $\M(f)$ is strongly well-behaved,
then $\M(f\circ g)$ is strongly well-behaved as a function
of $g$, and if additionally $g$ is switchable,
then $\M(f\circ g)$ is strongly well-behaved as a function
of $f$.
\end{corollary}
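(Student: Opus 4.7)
The plan is to combine the reduction-based reformulation of well-behavedness given by \lem{WellBehavedReductions} with the composition-compatibility lemma \lem{SwitchComposition}. Once we recast well-behavedness in terms of the orders $\lesssim$ and $\lesssim'$, each of the four assertions collapses to a one-line invocation of the corresponding item of \lem{SwitchComposition}.

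Concretely, fix $f \in \cA$ and view $g \mapsto \M(f\circ g)$ as a candidate measure with domain $\cA$. By \lem{WellBehavedReductions}, to verify weak (resp.\ strong) well-behavedness it suffices to show that $g' \lesssim g$ implies $\M(f\circ g') \le \M(f\circ g)$ (resp.\ with $\lesssim'$). But items~1 and~3 of \lem{SwitchComposition} give $f\circ g' \lesssim f\circ g$ and $f\circ g' \lesssim' f\circ g$ respectively, so the assumed well-behavedness of $\M$ itself delivers the desired inequality.

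Next, fix $g \in \cA$ and consider $f \mapsto \M(f\circ g)$. Weak well-behavedness reduces, via \lem{WellBehavedReductions}, to the implication $f' \lesssim f \Rightarrow \M(f'\circ g) \le \M(f\circ g)$, which is immediate from item~2 of \lem{SwitchComposition}. For the strongly well-behaved case, the only additional operation to handle is alphabet renaming of $f$: negating an input bit of $f$ corresponds, inside the composition, to replacing a single $g$-block by its negation, and this can only be realized as a $\lesssim'$-reduction of $f\circ g$ when $\overline{g}$ itself reduces to $g$ via such operations, i.e.\ precisely when $g$ is switchable. This is exactly the content of item~4 of \lem{SwitchComposition}.

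The only bookkeeping step is to confirm that the domain $\cA$ behaves correctly: it is composition-closed by hypothesis, and the well-behavedness of $\M$ on $\cA$ ensures $\cA$ is also closed under each of the underlying reduction operations (each of the defining invariance/closure clauses explicitly demands $f \in \cA \Leftrightarrow f' \in \cA$ for the transformed $f'$), so all quantities $\M(f\circ g)$, $\M(f'\circ g)$, $\M(f\circ g')$ that appear are in fact defined. I do not anticipate any genuine obstacle: the substance of the argument is already packaged into \lem{SwitchComposition}, and this corollary is essentially a dictionary translation via \lem{WellBehavedReductions}, with the switchability hypothesis in the last clause serving exactly to unlock item~4 of \lem{SwitchComposition}.
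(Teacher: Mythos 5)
Your proposal is correct and matches the paper's own argument, which likewise derives the corollary immediately from \lem{WellBehavedReductions} and the relevant items of \lem{SwitchComposition}. Your write-up simply spells out the item-by-item correspondence that the paper leaves implicit.
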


\begin{proof}
This follows immediately from \lem{SwitchComposition}
and \lem{WellBehavedReductions}.
\end{proof}

\subsection{Block sensitivity bounds}

\begin{definition}
For $n\in\bN$, let the ``promise-OR'' function
$\PrOR_n$ be the function on $n$ bits
with $\Dom(\PrOR_n)=\{x:|x|=0\text{ or }|x|=1\}$ defined by
$\PrOR_n(0^n)=0$ and $\PrOR_n(x)=1$ if $|x|=1$.
\end{definition}

\begin{theorem}\label{thm:bsReduction}
For any (possibly partial) Boolean function $f$,
$\PrOR_{\bs(f)}\circ \tS\lesssim f$.
Therefore, if $\M$ is any weakly well-behaved measure,
then $\M(f)\ge \M(\PrOR_{\bs(f)}\circ \tS)$.
\end{theorem}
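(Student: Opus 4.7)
The plan is to exhibit an explicit reduction $\PrOR_k\circ\tS \lesssim f$ with $k\coloneqq\bs(f)$; the measure inequality then follows immediately from \lem{WellBehavedReductions}. First I would fix a block-sensitivity witness: an input $x\in\Dom(f)$ with $\bs_x(f)=k$, together with pairwise disjoint sensitive blocks $B_1,\dots,B_k\subseteq[n(f)]$. I will describe the construction under the assumption $f(x)=0$ and then address the other case.

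I would build up the target function in four steps. \emph{(i)} Restrict $f$ to the promise $P = \{x\}\cup\{x^{B_m} : m\in[k]\}$, a set of $k+1$ inputs; the restricted function outputs $0$ at $x$ and $1$ at each $x^{B_m}$, matching the output pattern of $\PrOR_k$ on $\{0^k\}\cup\{e_m\}$. \emph{(ii)} Every bit $i\notin \bigcup_m B_m$ is constant on $P$ (always equal to $x_i$), so it is a superfluous bit with $S=\{x_i\}$ and can be removed. \emph{(iii)} For each block $B_m$, the bits inside $B_m$ jointly take only the two configurations $x_{B_m}$ and its bitwise complement; pick any $i\in B_m$, add a fresh bit via the superfluous-bit operation with $S=\{0,1\}$, then impose a promise restriction forcing the new bit to equal $1-x_i$ on every surviving input. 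Now bit $i$ together with the new bit takes values in $\{01,10\}$ across the two block states, exactly $\Dom(\tS)$, with the $x$-state giving $\tS=0$ and the $x^{B_m}$-state giving $\tS=1$ (up to an intra-pair index renaming). The remaining bits of $B_m$ have become duplicates of one of these two and are removed by the inverse of bit duplication. \emph{(iv)} A global index renaming arranges the $k$ resulting pairs into the canonical layout of $\PrOR_k\circ\tS$.

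The main obstacle is the case $f(x)=1$. Without alphabet renaming available under $\lesssim$, the naive repetition of the above construction yields a function whose ``center'' input has value $1$ and whose ``one-flipped'' inputs have value $0$, i.e.\ the output pattern is the negation of $\PrOR_k\circ\tS$'s, and $\lesssim$ does not permit output negation. To reconcile this I would appeal to the strong switchability of $\tS$ established in \lem{Sof}, which allows swapping within each pair which side plays the role of $\tS=0$ and which of $\tS=1$; together with a global intra-pair swap by index renaming, this is the intended way to bring the two output patterns into alignment. Verifying this reconciliation is the substantive idea of the proof; the rest of the argument---in particular checking that the ``add a superfluous bit and then pin it via a promise'' trick in step~(iii) composes cleanly with the previously imposed restrictions, and that the leftover bits inside each block really do collapse into duplicates---is careful but routine bookkeeping.
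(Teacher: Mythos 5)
Your construction is the same as the paper's: fix a witness $x$ with $\bs_x(f)=\bs(f)$, restrict to $x$ and its $k$ block-flips, delete the now-constant bits, collapse each block to a complementary pair of bits, and rearrange. The bookkeeping in your steps (i)--(iv) is fine and matches the paper up to the order of operations. The substantive issue is exactly the one you flag --- the case $f(x)=1$ --- but your proposed repair does not work. The mismatch sits in the \emph{outer} function, not the inner one: when $f(x)=1$ the construction produces $\overline{\PrOR_k}\circ\tS$ (one $1$-input and $k$ $0$-inputs), whereas $\PrOR_k\circ\tS$ has one $0$-input and $k$ $1$-inputs. Strong switchability of $\tS$ only lets you replace the inner $\tS$ by $\overline{\tS}$, which via \lem{SwitchComposition} gives $\PrOR_k\circ\overline{\tS}\lesssim\PrOR_k\circ\tS$; that is merely an intra-pair index renaming of $\PrOR_k\circ\tS$ and does not negate the output. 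What you would actually need is $\PrOR_k\circ\tS\lesssim\overline{\PrOR_k}\circ\tS$, and this is false for $k\ge 2$: the one-sided certificate complexity $\C_0$ is itself a (strongly) well-behaved measure, and $\C_0(\PrOR_k\circ\tS)=k$ while $\C_0(\overline{\PrOR_k}\circ\tS)=1$, so no $\lesssim$- or even $\lesssim'$-reduction can exist in the required direction. Declaring this step ``the intended way to bring the two output patterns into alignment'' without verifying it is where your proof breaks.

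In fairness, the paper's own proof has the identical gap: it never examines the value $f(x)$ and simply asserts that the resulting function ``is precisely'' $\PrOR_k\circ\tS$. Indeed the statement as written fails for $f=\AND_2$: the value $\bs(\AND_2)=2$ is achieved only at the $1$-input $11$, and $\C_0(\PrOR_2\circ\tS)=2>1=\C_0(\AND_2)$, so $\PrOR_2\circ\tS\not\lesssim\AND_2$. The correct statement should use the $0$-side block sensitivity, or equivalently take $\overline{\PrOR_k}\circ\tS$ as the target when the witness is a $1$-input; for two-sided $\M$ the conclusion $\M(f)\ge\M(\PrOR_{\bs(f)}\circ\tS)$ is then restored, and the weaker consequences used elsewhere (such as $\M(f\circ g)\ge\M(f\circ\tS)$ and $\M(f\circ g)\ge\M(\tS\circ g)$ for non-constant $f,g$, which only need $\tS\lesssim h$ for non-constant $h$, available because $\tS$ itself is strongly switchable) are unaffected. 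So: you correctly isolated the one genuine difficulty in this proof, but the resolution you sketch is not valid.
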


\begin{proof}
Let $k=\bs(f)$, and let $n=n(f)$ be the input size of $f$.
Recall that $k=\max_{x\in\Dom(f)}\bs_x(f)$.
Let $x\in\Dom(f)$ satisfy $\bs_x(f)=k$, and let
$B_1,B_2,\dots,B_k$ be disjoint sensitive blocks of $x$.
Restrict $f$ to the promise set $\{x\}\cup\{x^{B_j}:j\in[k]\}$,
that is, the string $x$ together with the block flips of $x$
($k+1$ strings in total). Any bit $i\in[n]$ which is not in
any block will be constant after restricting to this promise,
and hence superfluous; remove all such bits. Moreover,
within each block $B_j$, all bits which take value $0$
in the string $x$ are duplicates, and all bits which take
value $1$ are duplicates; after removing duplicates,
the block $B_j$ becomes either size $1$ or size $2$.
If the block $B_j$ becomes a single bit, we can add a superfluous
bit to the input and add the promise that this new bit
will always equal the negation of the single bit in $B_j$;
this reduces to the case where $B_j$ has two bits that are
always either $01$ or $10$.

Finally, we can rearrange the bits so that all the blocks
are contiguous and, in the input that was originally $x$,
all blocks take the form $01$. This means that the input
$x$ becomes $(01)^k=010101\dots01$, while each input
$x^{B_j}$ becomes the same string with the $j$-th pair $01$
becoming $10$ instead. This is precisely the function
$\PrOR_k\circ \tS$, so we have shown $\PrOR_k\circ S\lesssim f$,
as desired.
\end{proof}

\begin{lemma}\label{lem:IS}
For all $n\in\bN$, $\tI\lesssim \PrOR_n$.
We also have $\tI\lesssim' \tS$.
\end{lemma}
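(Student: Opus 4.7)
My plan is to treat the two claims independently and, in each case, exhibit an explicit short sequence of the operations allowed under $\lesssim$ (respectively $\lesssim'$) that turns the right-hand function into $\tI$.

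For the first claim $\tI\lesssim \PrOR_n$, I note that $n=1$ is immediate, since $\PrOR_1$ has domain $\{0,1\}$ and maps $0\mapsto 0$, $1\mapsto 1$, which is exactly $\tI$. For $n\ge 2$, I would first restrict the promise of $\PrOR_n$ to the two-element set $\{0^n,\,10^{n-1}\}$. Under this restricted promise, coordinates $2,\dots,n$ are identically $0$, hence constant on the promise, and therefore removable as superfluous bits (in the same sense used in the proof of \thm{bsReduction}). What remains is a function of a single bit with $0\mapsto 0$ and $1\mapsto 1$, which is $\tI$.

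For the second claim $\tI\lesssim' \tS$, alphabet renaming---the extra tool that distinguishes $\lesssim'$ from $\lesssim$---is exactly what is needed. I would apply alphabet renaming to $\tS$ using the string $z=01$, obtaining $\tS_z(x)\coloneqq \tS(x\oplus 01)$ with domain $\{00,11\}$ and values $\tS_z(00)=0$, $\tS_z(11)=1$. On this promise the two input bits always coincide, so they are duplicates of one another; eliminating one via the inverse of the bit-duplication operation yields $\tI$ on the remaining bit.

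There is no real obstacle here: both reductions are mechanical once the right promise restriction and the right alphabet-renaming string are chosen. The only small point to verify is that ``constant bits on the promise'' can legitimately be eliminated as superfluous and that ``always-equal bits on the promise'' can legitimately be eliminated as duplicates; both are instances of inverses of the operations collected in the definitions preceding the lemma, and the paper has already used them in exactly this style inside \thm{bsReduction}.
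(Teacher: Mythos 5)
Your proof is correct and matches the paper's own argument essentially verbatim: the first claim via restriction to the promise $\{0^n,10^{n-1}\}$ followed by removal of the $n-1$ constant (superfluous) bits, and the second via negating the second bit of $\tS$ and removing the resulting duplicate. The explicit verification that $\tS_z$ has domain $\{00,11\}$ with $00\mapsto 0$ and $11\mapsto 1$ is a fine, if slightly more detailed, rendering of the same step.
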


\begin{proof}
The first claim follows by restricting to the promise
$\{0^n,10^{n-1}\}$ and removing the $n-1$ constant (superfluous)
bits. For the second claim, negate the second bit of $\tS$;
then the two bits are duplicate, and removing duplicates gives
$\tI$.
\end{proof}

\begin{corollary}\label{cor:bsComposition}
Let $f$ and $g$ be (possibly partial) functions which
are not constant, and let $\M$ be weakly well-behaved. Then
\begin{enumerate}
\item $\M(f\circ g)\ge \M(f\circ\PrOR_{\bs(g)}\circ\tS)
\ge\M(f\circ \tS)$
\item $\M(f\circ g)\ge \M(\PrOR_{\bs(f)}\circ\tS\circ g)
\ge\M(\tS\circ g)$
\item If $\M$ is strongly well-behaved,
$\M(f\circ g)\ge \M(f\circ\PrOR_{\bs(g)})\ge\M(f)$
\item If $\M$ is strongly well-behaved and $g$ is switchable,
$\M(f\circ g)\ge \M(\PrOR_{\bs(f)}\circ g)\ge\M(g)$.
\end{enumerate}
\end{corollary}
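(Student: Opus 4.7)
The plan is to prove each inequality by exhibiting a reduction ($\lesssim$ or $\lesssim'$) between the two functions in question and then invoking \lem{WellBehavedReductions}. The three ingredients I will combine throughout are \thm{bsReduction} (yielding $\PrOR_{\bs(h)}\circ\tS\lesssim h$ for any non-constant $h$), \lem{IS} (giving $\tI\lesssim\PrOR_n$ and $\tI\lesssim'\tS$), and \lem{SwitchComposition} (which propagates $\lesssim$ and $\lesssim'$ reductions through composition on either side). The assumption that $f$ and $g$ are non-constant guarantees $\bs(f),\bs(g)\ge 1$, so the promise-OR factors are well-defined; I will also use freely that $\tI$ is the identity of the composition monoid and that $\lesssim$ implies $\lesssim'$.

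For item 1, I would obtain the inner inequality by composing \thm{bsReduction} with $f$ on the left via the first item of \lem{SwitchComposition}, giving $f\circ\PrOR_{\bs(g)}\circ\tS\lesssim f\circ g$. For the outer inequality I would start from $\tI\lesssim\PrOR_{\bs(g)}$ (\lem{IS}), compose on the right with $\tS$ via the second item of \lem{SwitchComposition}, and then compose with $f$ on the left via the first item, yielding $f\circ\tS\lesssim f\circ\PrOR_{\bs(g)}\circ\tS$. Item 2 is entirely symmetric: I apply the same moves but push reductions through composition on the right rather than the left, exchanging the roles of the first two items of \lem{SwitchComposition}.

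For items 3 and 4 I would first promote \thm{bsReduction} to $\PrOR_{\bs(h)}\lesssim' h$ by chaining $\PrOR_{\bs(h)}=\PrOR_{\bs(h)}\circ\tI\lesssim'\PrOR_{\bs(h)}\circ\tS\lesssim h$, where the first step uses $\tI\lesssim'\tS$ from \lem{IS} together with the third item of \lem{SwitchComposition}. Item 3 then follows by composing this $\lesssim'$ reduction with $f$ on the left via the third item of \lem{SwitchComposition}, giving $f\circ\PrOR_{\bs(g)}\lesssim' f\circ g$; the outer inequality $f\lesssim f\circ\PrOR_{\bs(g)}$ is immediate from $\tI\lesssim\PrOR_{\bs(g)}$ via the first item of \lem{SwitchComposition}. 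For item 4 the same $\PrOR_{\bs(f)}\lesssim' f$ reduction must instead be pushed through composition on the right with $g$; this is exactly where the fourth item of \lem{SwitchComposition} applies, and it is the single step that requires $g$ to be switchable. The outer inequality $g\lesssim\PrOR_{\bs(f)}\circ g$ follows from $\tI\lesssim\PrOR_{\bs(f)}$ composed with $g$ on the right.

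There is no significant obstacle here; the proof is essentially bookkeeping about which item of \lem{SwitchComposition} to apply at each step. The mildly delicate point worth highlighting is why item 4 needs the switchability of $g$: eliminating the $\tS$ factor between $\PrOR_{\bs(f)}$ and $f$ uses alphabet renaming (an operation of $\lesssim'$ but not $\lesssim$), so the resulting $\lesssim'$ reduction can only be transported across $g$ on the right when $g$ itself is switchable. In item 3 no such assumption on $f$ is needed, because the reduction is on the inner argument $g$ and the third item of \lem{SwitchComposition} already handles $\lesssim'$ reductions on the inner function for free.
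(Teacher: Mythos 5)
Your proposal is correct and is exactly the argument the paper intends: the paper's proof of this corollary is the single sentence that everything ``follows easily from'' \thm{bsReduction}, \lem{SwitchComposition}, and \lem{IS}, and you have simply filled in the bookkeeping of which item of \lem{SwitchComposition} to invoke at each step (including the correct identification of item 4 of that lemma as the sole place where switchability of $g$ is needed). No gaps.
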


\begin{proof}
All of these follow easily from \thm{bsReduction},
\lem{SwitchComposition}, and \lem{IS}.
\end{proof}

\paragraph{Remark.} \cor{bsComposition} shows that the
switch function $\tS$ is the easiest non-constant function
for well-behaved measures in a fairly strong sense
(specifically, $\M(f\circ g)\ge \M(f\circ \tS)$
and $\M(f\circ g)\ge \M(\tS\circ g)$ for all non-constant
$f$ and $g$).

\section{Properties of recursive composition}

\subsection{Reasonably bounded measures}

We introduce some definitions regarding the composition
behavior of measures.

\begin{definition}[Reasonably upper bounded (RUBO, RUBI)]
Let $\cA\subseteq \cF$ be a composition-closed class of
functions, and let $\M$ be a measure on $\cA$. We say
$\M$ is \emph{reasonably upper bounded on the outside
(RUBO)} if there is some measure $\N\colon\cA\to[0,\infty)$
such that for all $f,g\in\cA$, $\M(f\circ g)\le \N(f)\M(g)$.

Similarly, we say that $\M$ is
\emph{reasonably upper bounded on the inside (RUBI)}
if there is some measure $\N$ such that
$\M(f\circ g)\le \M(f)\N(g)$ for all $f,g\in\cA$.

If the measure $\N$ satisfies $\N(f)\le \M(f)^{1+o(1)}$,
we say the corresponding bound holds \emph{nearly linearly}.
\end{definition}

We wish to also define a reasonably lower bounded
measure as one satisfying $\M(f\circ g)\ge \N(f)\M(g)$.
However, this definition is trivial if $\N(f)=0$,
yet we cannot require $\N(f)>0$ always since
$\M(f\circ g)$ may equal $0$. To make it nontrivial,
we require $\N(f)$ to be nonzero unless $\M(f\circ g)$ is zero.

\begin{definition}[Reasonably lower bounded (RLBO, RLBI)]\label{def:RLB}
Let $\cA\subseteq\cF$ be a composition-closed class
of functions, and let $\M$ be a measure on $\cA$.
We say $\M$ is \emph{reasonably lower bounded on the outside
(RLBO)} if there is some measure $\N$ on $\cA$ such that
for all $f,g\in\cA$, $\M(f\circ g)\ge \N(f)\M(g)$, and
additionally, $\N(f)=0$ implies $\M(f\circ g)=0$ for all
$f,g\in\cA$.

Similarly, we say $\M$ is
\emph{reasonably lower bounded on the inside (RLBI)}
if there is some measure $\N$ such that $\M(f\circ g)\ge \M(f)\N(g)$
and $\N(g)=0\Rightarrow \M(f\circ g)=0$ for all $f,g\in\cA$.

If the measure $\N$ satisfies $\N(f)\ge \M(f)^{1-o(1)}$,
we say the corresponding bound holds nearly linearly.
\end{definition}

We note that RLBI holds for all well-behaved measures.

\begin{lemma}\label{lem:RLBI}
Suppose $\M$ is strongly well-behaved and $\M(f)=0$ for all
constant functions $f$. Then $\M$ satisfies RLBI.
\end{lemma}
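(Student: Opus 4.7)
My plan is to exhibit the auxiliary measure $\N$ explicitly as a $\{0,1\}$-valued indicator of non-constancy, and then to verify the RLBI conditions by a short case split, with the main case handled by item 3 of \cor{bsComposition}. Concretely, I will define $\N(g)\coloneqq 0$ if $g$ is constant and $\N(g)\coloneqq 1$ otherwise; this is manifestly a measure on $\cA$, so the only content is checking the two defining inequalities of \defn{RLB}.

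The first inequality $\M(f\circ g)\ge \M(f)\N(g)$ splits into three cases. If $f$ is constant, then $\M(f)=0$ by hypothesis and the inequality is trivial regardless of $\N(g)$. If $g$ is constant, then unwinding \defn{composition} shows that every input in $\Dom(f\circ g)$ feeds the same fixed bit pattern $c^{n(f)}$ (where $c$ is the value $g$ takes) into the outer copy of $f$, so $f\circ g$ is either itself constant or has empty domain; in either subcase our hypothesis gives $\M(f\circ g)=0 = \M(f)\cdot 0 = \M(f)\N(g)$. In the remaining case, both $f$ and $g$ are non-constant, so $\bs(g)\ge 1$ because any two points of $\Dom(g)$ with differing $g$-values determine a sensitive block between them. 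Item 3 of \cor{bsComposition} then applies and yields $\M(f\circ g)\ge \M(f\circ \PrOR_{\bs(g)})\ge \M(f) = \M(f)\N(g)$, which is exactly what is needed. The second RLBI requirement, $\N(g)=0\Rightarrow \M(f\circ g)=0$, is immediate from the analysis of the second case above, since $\N(g)=0$ holds precisely when $g$ is constant.

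I do not expect any substantial obstacle. The only mildly delicate point is the middle case: I need to confirm that composing with a constant $g$ really does produce a constant (or empty) function on which $\M$ vanishes. This is essentially a definitional unwinding of \defn{composition}, and the one corner to watch is whether $f\circ g$ could end up with empty domain when $f$ happens to be undefined at $c^{n(f)}$; one handles this by adopting the convention that $\M$ vanishes on such degenerate inputs (consistent with treating the empty function as vacuously constant), which the hypothesis of the lemma implicitly permits.
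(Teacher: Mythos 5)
Your proof is correct and follows essentially the same route as the paper: the same indicator measure $\N$, with the key inequality $\M(f\circ g)\ge\M(f)$ for non-constant $g$ supplied by item~3 of \cor{bsComposition}. Your treatment is in fact slightly more careful than the paper's (the explicit case split on constant $f$, and the empty-domain corner when $g$ is constant), but the substance is identical.
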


\begin{proof}
Recall \cor{bsComposition}, which gives $\M(f\circ g)\ge\M(f)$
whenever $g$ is not constant. We can then define
$\N(g)$ to be $0$ for all constant $g$ and $\N(g)=1$
for all non-constant $g$; then $\M(f\circ g)\ge \M(f)\N(g)$.
Finally, if $\N(g)=0$, then $g$ is constant, so $f\circ g$
is constant, which means $\M(f\circ g)=0$.
\end{proof}

We also have the following.

\begin{theorem}
The measures $\D(f)$, $\R(f)$, $\R_0(f)$, $\Q(f)$,
$\s(f)$, $\fbs(f)$, $\C(f)$, $\deg(f)$,
and $\adeg(f)$ all satisfy RLBI, and all satisfy
RUBO nearly linearly.
\end{theorem}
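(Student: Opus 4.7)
The plan is to handle the two claims (RLBI and nearly linear RUBO) separately, and within each, to dispatch most measures via a general tool and treat the outliers by hand.

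For RLBI, I would first observe that the measures $\D,\R,\R_0,\Q,\fbs,\C,\deg,\adeg$ are all strongly well-behaved and all vanish on constant functions, so \lem{RLBI} immediately yields RLBI for these eight. The only outlier is sensitivity $\s$, which is not invariant under bit duplication and therefore falls outside the scope of \lem{RLBI}. For $\s$ I would instead prove RLBI directly with the witness measure $\N(g)=1$ if $\s(g)\ge 1$ and $\N(g)=0$ otherwise. If $\N(g)=0$, then for every block $x^i$ of every input to $f\circ g$, flipping any bit of $x^i$ either leaves $\Dom(g)$ or preserves $g(x^i)$, so no bit of the full input is sensitive and $\s(f\circ g)=0$; this supplies the $\N(g)=0\Rightarrow\M(f\circ g)=0$ clause from \defn{RLB}. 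If $\N(g)\ge 1$, fix $a\in\Dom(g)$ with a sensitive bit $i_0$ at $a$, and given $y\in\Dom(f)$ achieving $\s(f)$ build an input $x=x^1\cdots x^{n(f)}$ to $f\circ g$ by choosing each block $x^i\in\{a,a^{i_0}\}$ so that $g(x^i)=y_i$; then in each block whose $y$-coordinate is sensitive for $f$, bit $i_0$ becomes sensitive for $f\circ g$, yielding $\s(f\circ g)\ge \s(f)=\s(f)\N(g)$.

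For RUBO nearly linearly, my plan is to exhibit, for each measure, a bound $\M(f\circ g)\le \N(f)\M(g)$ with $\N(f)\le \M(f)^{1+o(1)}$. For $\D,\s,\fbs,\C,\deg$ I would invoke the standard block-by-block multiplicative composition $\M(f\circ g)\le \M(f)\M(g)$ and simply take $\N=\M$. For $\Q$ I would cite the adversary-method composition theorem $\Q(f\circ g)=O(\Q(f)\Q(g))$. For $\R_0$ I would note that substituting a zero-error subroutine for $g$ into a zero-error algorithm for $f$ gives total expected cost $\R_0(f)\R_0(g)$ by linearity of expectation (Wald's identity on fresh randomness), with no amplification needed. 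For the bounded-error measures $\R$ and $\adeg$ I would amplify the inner object (algorithm or approximating polynomial) to error below $1/\M(f)$ using $O(\log\M(f))$ repetitions, and then union-bound over the at-most-$\M(f)$ outer queries; this yields $\N(f)=O(\M(f)\log\M(f))$, which satisfies $\N(f)\le\M(f)^{1+o(1)}$ and is therefore nearly linear.

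The main obstacle is sensitivity in the RLBI part, since it is the only measure on the list for which \lem{RLBI} is not directly applicable and a separate combinatorial argument must be given to produce the witness measure $\N$. A secondary technical point is the partial-function case of amplification for $\R$ and $\adeg$, where I would have to check that the amplified inner object remains valid on $\Dom(g)$ so that the union-bound step produces a valid outer object on $\Dom(f\circ g)$; for $\adeg$ in particular this uses the standard polynomial error-reduction trick of composing with a low-degree univariate amplifier.
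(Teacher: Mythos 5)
Your proposal is correct and follows essentially the same route as the paper: \lem{RLBI} dispatches RLBI for everything except $\s$, a direct combinatorial argument handles $\s$ (your explicit construction with $\N(g)=\mathbf{1}[\s(g)\ge 1]$ fills in the detail the paper leaves as ``not hard to check''), and RUBO is obtained by composing algorithms/polynomials/certificates block by block, with $O(\log\M(f))$ amplification of the inner object for the bounded-error measures. Two small differences: for $\Q$ you invoke the adversary composition theorem $\Q(f\circ g)=O(\Q(f)\Q(g))$, which is cleaner (no log factor) than the paper's error-reduction argument, and either suffices for near-linearity. The one place your plan as stated would not go through is $\fbs$: a ``block-by-block'' primal composition of sensitive blocks naturally gives a \emph{lower} bound on $\fbs(f\circ g)$, not the upper bound $\fbs(f\circ g)\le\fbs(f)\fbs(g)$ you need; the paper (correctly) routes this through the LP dual, i.e., composing fractional certificates. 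This is a standard fact with a one-line fix, not a conceptual gap, but you should not lump $\fbs$ with $\D,\C,\deg$ there.
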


\begin{proof}
It is straightforward to check that all of these measures
except for $\s(f)$ are strongly well-behaved, and all are
equal to $0$ for constant functions; hence by \lem{RLBI},
they all satisfy RLBI. It is also not hard to check
that $\s(f\circ g)\ge \s(f)$ if $\s(g)>0$, so $\s$
also satisfies RLBI.

The algorithmic measures $\D$, $\R$, $\R_0$, and $\Q$
can all compute $f\circ g$ by running the algorithm for $f$,
and whenever that algorithm queries a bit of $f$, run the corresponding
algorithm for $g$. The measures with error, $\R$ and $\Q$,
will need to apply error reduction to the subroutine for $g$
to reduce the error down to $o(1/\R(f))$ and $o(1/\Q(f))$
respectively; this requires an overhead of $O(\log \R(f))$
and $O(\log\Q(f))$ respectively. From this, it is easy to see
that $\D(f\circ g)\le \D(f)\D(g)$, $\R_0(f\circ g)\le\R_0(f)\R_0(g)$,
$\R(f\circ g)\le O(\R(f)\log\R(f))\cdot\R(g)$,
and $\Q(f\circ g)\le O(\Q(f)\log\Q(f))\cdot \Q(g)$;
hence these measures satisfy RUBO nearly linearly.
A similar strategy works for the degree measures $\deg$ and $\adeg$,
since polynomials can be composed and error reduction for polynomials
works similarly to error reduction for algorithms.

This also works for certificates, since to certify
$f\circ g$ it suffices to certify the outer function $f$
and then for each bit of the certificate, certify the inner
function $g$; thus $\C(f\circ g)\le \C(f)\C(g)$.
For fractional block sensitivity $\fbs$, we use its dual
form as fractional certificate complexity \cite{Aar08,KT16}.
We can specify a fractional certificate for $f\circ g$
by picking a fractional certificate for $f$ and composing
it with a fractional certificate for $g$, so
that $\fbs(f\circ g)\le \fbs(f)\fbs(g)$.
Finally, one can directly show that $\s(f\circ g)\le \s(f)\s(g)$,
since in any input to $f\circ g$ the only sensitive
bits correspond to sensitive bits for an input to $g$
which in turn lies in a sensitive bit for the input to $f$.
This completes the proof.
\end{proof}

\paragraph{Remark.} Notably absent from the above list is
block sensitivity $\bs(f)$, which does not satisfy RUBO
nearly linearly. The composition behavior of block sensitivity
was investigated in \cite{Tal13,GSS16}.

The above motivates the following definition.

\begin{definition}[Composition bounded]
A measure $\M$ on composition-closed $\cA$ is said to
be \emph{composition bounded} if it satisfies RLBI
and satisfies RUBO nearly linearly.
\end{definition}

By this definition, all the above measures (except $\bs(f)$)
are composition bounded.

\subsection{Convergence theorem}

\begin{theorem}\label{thm:FormalLimit}
Let $\mathcal{A}\subseteq\mathcal{F}$ be closed under
composition, and let $M$ be a
composition-bounded measure on $\mathcal{A}$.
Then $M^*(f)$ converges for all $f\in\mathcal{A}$.
\end{theorem}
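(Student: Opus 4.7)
The plan is to show that $\limsup_k \M(f^k)^{1/k} \le \liminf_k \M(f^k)^{1/k}$, which, combined with the trivial reverse inequality, yields convergence. Write $\M_k := \M(f^k)$, $L_k := \M_k^{1/k}$, $L := \limsup_k L_k$, and $\ell := \liminf_k L_k$. I first dispose of degenerate cases. If $\M(f) = 0$, then iterated RUBO yields $\M_k = 0$ for all $k$; if the RLBI witness $\N$ satisfies $\N(f) = 0$, then the zero-propagation clause of RLBI forces $\M_k = 0$ for all $k \geq 2$. In either case $L_k \to 0$ trivially, so henceforth I assume $\M_k > 0$ for every $k$.

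Fix $\epsilon > 0$. Unpacking the nearly-linear RUBO hypothesis on $\N$, one extracts a constant $C_\epsilon \geq 1$ such that for every $h \in \mathcal{A}$, $\N(h) \leq C_\epsilon \max\!\bigl(\M(h),\, \M(h)^{1+\epsilon}\bigr)$. The core of the argument is an iterative RUBO bound. Choose $k_0 \geq 1$, and for any $k \geq k_0$ write $k = m k_0 + r$ with $0 \leq r < k_0$. Peeling off $m$ copies of $f^{k_0}$ from the outside of $f^k$ and applying RUBO at each step gives
\[\M_k \leq \N(f^{k_0})^m \, \M_r.\]
Taking $k$-th roots, using $m/k \to 1/k_0$ as $k \to \infty$, and noting that $\M_r^{1/k} \to 1$ (since $\M_r$ takes only finitely many values as $r$ ranges over $[0, k_0)$), one obtains
\[\limsup_{k\to\infty} L_k \leq \N(f^{k_0})^{1/k_0} \leq C_\epsilon^{1/k_0} \max\!\bigl(L_{k_0},\, L_{k_0}^{1+\epsilon}\bigr).\]

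To finish, by the definition of $\ell$ I can choose $k_0$ arbitrarily large with $L_{k_0}$ arbitrarily close to $\ell$. Along any such sequence of $k_0$'s, $C_\epsilon^{1/k_0} \to 1$ and the right-hand $\max$ converges to $\max(\ell, \ell^{1+\epsilon})$, so $L \leq \max(\ell, \ell^{1+\epsilon})$ for every $\epsilon > 0$; letting $\epsilon \to 0$ yields $L \leq \ell$, as desired. The step I expect to demand the most care is the bound on $\N(h)$, which behaves like $C_\epsilon \M(h)^{1+\epsilon}$ when $\M(h) \geq 1$ but like $C_\epsilon \M(h)$ when $\M(h) < 1$; the uniform $\max$ formulation handles both regimes at once and is crucial for ensuring that the final $\epsilon \to 0$ limit goes through without any restriction on the size of $\ell$ (in particular it covers the case $\ell < 1$, where a naive argument treating $\M(f^{k_0})$ as ``large'' would break down).
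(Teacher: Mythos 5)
Your proof is correct, but it takes a genuinely different route from the paper. The paper first builds an interleaved subsequence $k_1, k_2, \dots$ in which each term divides the next, with even terms tending to $\M^{\underline{*}}(f)$ and odd terms to $\M^{\overline{*}}(f)$; the divisibility lets it chain the nearly-linear RUBO bound with no remainder, but constructing that subsequence is itself delicate and requires \emph{both} an upper and a lower composition bound to control $\M(f^{ck_{n-1}+r})$ versus $\M(f^{ck_{n-1}})$. You instead run a direct Fekete-style argument: write $k = mk_0 + r$, peel off $m$ outer copies of $f^{k_0}$ via RUBO to get $\M_k \le \N(f^{k_0})^m \M_r$, and kill the remainder by the trivial observation that $\M_r^{1/k}\to 1$ since $\M_r$ ranges over a finite set of positive values. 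This is shorter, avoids the subsequence lemma entirely, and isolates the roles of the two hypotheses cleanly: nearly-linear RUBO does all the quantitative work, while RLBI is used only to rule out the degenerate case where some but not all $\M_k$ vanish. Your handling of the $\max\{\M(h)^{1+\epsilon}, C_\epsilon\M(h)\}$ unpacking of the $o(1)$ and the final $\epsilon\to 0$ limit (covering $\ell<1$) matches what the paper also has to do. Two trivial points you should make explicit: (i) in the non-degenerate case you need $\M_r>0$ for all $0\le r<k_0$ including $r=0$, i.e.\ $\M(\tI)>0$, which follows since $\M(\tI)=0$ would force $\M(f)=\M(f\circ\tI)=0$ (or just take $r\in[k_0,2k_0)$); (ii) the claim that $\M(f)>0$ and $\N(f)>0$ jointly imply $\M_k>0$ for all $k$ needs the one-line RLBI iteration $\M(f^k)\ge\M(f)\N(f)^{k-1}$.
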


To prove this theorem, we will need the following lemma.

\begin{lemma}\label{lem:subsequence}
Suppose $M$ is reasonably upper bounded
(either on the outside or on the inside)
and also reasonably lower bounded (either on the
outside or on the inside).
Then there is a sequence
of positive integers $k_1,k_2,\dots$ such that the following conditions hold:
\begin{enumerate}
\item the sequence is increasing: for all $\ell\in\bN^+$, we have $k_\ell<k_{\ell+1}$
\item the integers divide each other: for all $\ell\in\bN^+$, we have $k_\ell|k_{\ell+1}$
\item the even elements of the sequence are such that
$M(f^{k_{2\ell}})^{1/k_{2\ell}}\to M^{\underline{*}}(f)$ as $\ell\to\infty$
\item the odd elements of the sequence are such that
$M(f^{k_{2\ell+1}})^{1/k_{2\ell+1}}\to M^{\overline{*}}(f)$ as $\ell\to\infty$.
\end{enumerate}
\end{lemma}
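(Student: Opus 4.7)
The plan is to build $(k_\ell)_\ell$ inductively, alternating between choosing $k_{2m}$ to force $M(f^{k_{2m}})^{1/k_{2m}}$ close to $M^{\underline{*}}(f)$ and $k_{2m+1}$ to force $M(f^{k_{2m+1}})^{1/k_{2m+1}}$ close to $M^{\overline{*}}(f)$. The key conceptual observation is that any subsequence $(b_i)$ of $(a_k)\coloneqq (M(f^k)^{1/k})$ automatically satisfies $\liminf_k a_k \le \liminf_i b_i \le \limsup_i b_i \le \limsup_k a_k$; so to force a subsequence to converge to $M^{\underline{*}}(f)$ it is enough to force each of its terms to be at most $M^{\underline{*}}(f) + \epsilon_m$ for some $\epsilon_m \to 0$, and analogously for the odd-indexed subsequence.

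First I would dispose of the degenerate case $M(f) = 0$: applying RUB inductively gives $M(f^k) = 0$ for all $k \ge 1$, so both limits equal $0$ and $k_\ell \coloneqq \ell!$ works. Otherwise $M(f) > 0$, and the defining implication $N(g) = 0 \Rightarrow M(f \circ g) = 0$ coming from RLB (applied with the other argument equal to the identity $I$) together with an easy induction shows that $M(f^j) > 0$ and the lower-bound auxiliary measure is strictly positive on $f^j$ for every $j \ge 0$. This is precisely what will let the lower-bound half of the argument go through.

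For the inductive step, suppose $k_1 \mid \cdots \mid k_\ell$ are already constructed and say $\ell+1 = 2m$ is even. Let $C_\ell$ be the maximum of the RUB auxiliary measure on $\{f^j : 0 \le j < k_\ell\}$ and $c_\ell$ the minimum of the RLB auxiliary; both are finite, and $c_\ell > 0$ by the previous step. By the definition of $\liminf$, pick $n$ very large with $M(f^n)^{1/n} \le M^{\underline{*}}(f) + \frac{1}{2m}$, and set $k_{\ell+1} \coloneqq k_\ell \lceil n/k_\ell \rceil$, so that $k_\ell \mid k_{\ell+1}$, $k_{\ell+1} > k_\ell$, and $0 \le k_{\ell+1} - n < k_\ell$. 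Whichever variant of RUB is available then yields $M(f^{k_{\ell+1}}) \le C_\ell \, M(f^n)$ by writing $f^{k_{\ell+1}}$ as $f^{k_{\ell+1}-n}\circ f^n$ or $f^n \circ f^{k_{\ell+1}-n}$, and taking $k_{\ell+1}$-th roots gives
\[M(f^{k_{\ell+1}})^{1/k_{\ell+1}} \;\le\; C_\ell^{1/k_{\ell+1}}\,\bigl(M(f^n)^{1/n}\bigr)^{n/k_{\ell+1}}.\]
Since $C_\ell^{1/k_{\ell+1}} \to 1$ and $n/k_{\ell+1} \to 1$ as $n \to \infty$ with $k_\ell$ fixed, the right-hand side can be made $\le M^{\underline{*}}(f) + \frac{1}{m}$ by choosing $n$ sufficiently large. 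The odd step is completely symmetric, using $M(f^{k_{\ell+1}}) \ge c_\ell \, M(f^n)$ and the corresponding $n$ with $M(f^n)^{1/n} \ge M^{\overline{*}}(f) - \frac{1}{2m}$. The main obstacle is the interaction between the divisibility requirement $k_\ell \mid k_{\ell+1}$ and the asymptotics of $M(f^n)^{1/n}$; the rounding trick resolves it because the composition bounds let us absorb the additive gap $k_{\ell+1} - n < k_\ell$ into the multiplicative constants $C_\ell, c_\ell$, which are then crushed by the $1/k_{\ell+1}$-th root.
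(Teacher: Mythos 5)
Your proof is correct and follows essentially the same strategy as the paper's: an inductive construction that alternates between liminf and limsup witnesses, rounds each new index to a multiple of the previous one, and uses the RUB/RLB hypotheses to absorb the rounding gap (a composition with some $f^j$, $j<k_\ell$) after taking $k_{\ell+1}$-th roots. The only differences are cosmetic --- you round up rather than down and control the gap with a single application of the composition inequality plus a finite max/min over $\{N(f^j): 0\le j<k_\ell\}$ where the paper iterates the single-step bound $r$ times, and your preliminary observation that a subsequence's liminf and limsup are sandwiched by those of the full sequence streamlines the convergence argument and the degenerate cases.
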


\begin{proof}
We describe how to build the infinite sequence $k_1,k_2,\dots$
one at a time. We start with a sequence of length $0$.
Now suppose, at an intermediate step, we already have
$k_1,k_2,\dots,k_{n-1}$ in the sequence (these are increasing integers
that divide each other). For any fixed $\epsilon>0$, we show how to add
$k_n$ such that $k_n>k_{n-1}$, $k_{n-1}|k_n$, and such that
$\M(f^{k_n})^{1/k_n}\ge \M^{\overline{*}}(f)(1-\epsilon)$. We will similarly
show how to add $k_n$ such that
$\M(f^{k_n})^{1/k_n}\le \M^{\underline{*}}(f)(1+\epsilon)$ instead.
We can then alternate adding an approximation to $\M^{\overline{*}}(f)$
or $\M^{\underline{*}}(f)$ to the sequence, and decrease $\epsilon$
as we progress.

To add $k_n$ with $\M(f^{k_n})^{1/k_n}\ge \M^{\overline{*}}(f)(1-\epsilon)$,
we start by noting that there are infinitely many integers $k$ such that
$|\M(f^k)^{1/k}- \M^{\overline{*}}(f)|<\epsilon/2$. For any such $k\ge 2k_{n-1}$,
write $k=ck_{n-1}+r$ with $r\in[0,k_{n-1}-1]$. Then
$\M(f^{ck_{n-1}+r})$ is within a factor of $\N(f)^r$ of $\M(f^{ck_{n-1}})$,
so $\M(f^{ck_{n-1}})^{1/ck_{n-1}}$ is within a factor of $\N(f)^{r/ck_{n-1}}$
of $\M(f^k)^{1/(k-r)}$, which is itself within a factor of
$\M(f^k)^{1/(k-r)-1/k}=(\M(f^k)^{1/k})^{r/(k-r)}$ of $\M(f^k)^{1/k}$.
The expression $(\M(f^k)^{1/k})^{r/(k-r)}$ is closer to $1$ than
$(\M(f^k)^{1/k})^{1/c}$, and $\N(f)^{r/ck_{n-1}}$ is closer to $1$ than $\N(f)^{1/c}$.
These are closer to $1$ than $(\M(f^k)^{1/k})^{1/(k-k_{n-1})}$ and $\N(f)^{1/(k-k_{n-1})}$
respectively. Note that $\M(f^k)^{1/k}\ge \M^{\overline{*}}(f)-\epsilon/2$;
if $\M^{\overline{*}}(f)>0$ and $\epsilon$ is small enough, this means
that $\M(f^k)^{1/k}$ is bounded between two positive constants, so by picking
$k$ large enough, $(\M(f^k)^{1/k})^{1/(k-k_{n-1})}$ can be made arbitrarily
close to $1$, as can $\N(f)^{1/(k-k_{n-1})}$. Hence by choosing $k$ sufficiently
large, we can set $k_n=ck_{n-1}$ and conclude that $\M(f^{k_n})^{1/k_n}$ is
within $\epsilon$ of $\M^{\overline{*}}(f)$.
On the other hand, if $\M^{\overline{*}}(f)=0$, then $\M(f^k)^{1/k}$ can be made
arbitrarily close to $0$. In that case,
$\M(f^k)^{1/k}\cdot (\M(f^k)^{1/k})^{\pm 1/(k-k_{n-1})}
= (\M(f^k)^{1/k})^{1\pm 1/(k-k_{n-1})}$ can also be made arbitrarily close to $0$,
and $\N(f)^{1/(k-k_{n-1})}$ can be made arbitrarily close to $1$.
Hence we can still pick $k_n=ck_{n-1}$ to approximate $\M^{\overline{*}}(f)$
to error $\epsilon$.

These arguments work for $\M^{\underline{*}}(f)$ as well, so we can always find
a value $k$ which is a multiple of anything we want and which approximates
either  $\M^{\underline{*}}(f)$ or  $\M^{\overline{*}}(f)$ to any desired
error $\epsilon$ of our choice. By alternating between adding
an approximation to $\M^{\underline{*}}(f)$ and to $\M^{\overline{*}}(f)$
to our sequence, and by decreasing the chosen value of $\epsilon$ to $0$,
we can construct the desired sequence.
\end{proof}

\begin{proof}[Proof of \thm{FormalLimit}]
We start with a few edge cases. First, we handle
the case where $\M(f^k)=0$ for infinitely many values of $k$.
In this case, we claim that $\M(f^k)=0$ for all $k\ge 1$.
To see this, suppose not, and let $k_1<k_2$ be such that
$\M(f^{k_1})>0$ and $\M(f^{k_2})=0$. Now, since $M$ is RLBI,
there is a measure $\N$ such that
$\M(f\circ g)\ge \M(f)\N(g)$ for all $f$ and $g$;
using $f^{k_2-1}$ and $f$ as the two functions,
we get $M(f^{k_2})\ge \M(f^{k_2-1})\N(f)$, and by repeating we get
$\M(f^{k_2})\ge \N(f)\M(f^{k_2-1})\ge \N(f)^2\M(f^{k_2-2})
\ge\dots \ge \N(f)^{k_2-k_1}\M(f^{k_1})$.
Since $\M(f^k_2)=0$ and $\M(f^{k_1})>0$, it follows that
$\N(f)=0$. However, the condition of \defn{RLB}
says that $\N(f)=0$ implies $\M(f\circ g)=0$ for all $g$
(including $g=\tI$).
In particular, we must have $\M(f^k)=0$ for all $k\ge 1$,
which implies that $\M^*(f)$ converges to $0$.

Next, consider the sequence $k_1,k_2,\dots$ from \lem{subsequence}.
Note that by the RUBO/RLBI properties of $\M$,
it follows that $\M^{\overline{*}}(f)\le \N(f)$
and $\M^{\underline{*}}(f)\ge \N'(f)$ for possibly
different measure $\N$ and $\N'$, so in particular, these
values are both finite. Hence to show that $\M^*(f)$ converges,
we just need to show that $\M^{\overline{*}}(f)=\M^{\underline{*}}(f)$,
which is the same as showing that the even and odd terms
of the sequence $[\M(f^{k_i})^{1/k_i}]_{i=1}^\infty$
converge to the same value.

Since we may assume $\M(f^k)$ is only zero a finite number
of times, we can pick $i$ large enough so that $\M(f^{k_j})>0$
for all $j\ge i$. Then $k_j$ is a multiple of $k_i$
for all $j\ge i$. Let $g=f^{k_i}$; then the sequence
$[\M(f^{k_j})^{1/k_j}]_{j=i}^\infty$ can be written
$[\M(g^{m_\ell})^{1/k_im_\ell}]_{\ell=0}^\infty$
where $m_\ell=k_{\ell+i}/k_i$. From this it is not hard
to see that $\M^*(f)$ converges if and only if $\M^*(g)$
converges; by replacing $f$ with $g$ if necessary,
we may assume that $\M(f^k)>0$ for all $k\ge 1$,
and that we still have a valid sequence
$k_1,k_2,\dots$ as in \lem{subsequence},
yet this time all values $\M(f^{k_i})$ are nonzero.

It suffices to show that $\M^{\overline{*}}(f)\le \M^{\underline{*}}(f)+\epsilon$ for any fixed $\epsilon>0$.
By the nearly-linear RUBO property, we have
$\M(f^k)\le \M(f)^{1+o(1)}\M(f^{k-1})$ for all
$k\ge 2$. Let $\delta>0$ be small enough that
$(\M^{\underline{*}}(f)+\epsilon/3)^{1+\delta}
\le \M^{\underline{*}}(f)+2\epsilon/3$.
From \defn{littleo}, let $C\ge 1$ be such that
$\M(f^k)\le \M(f^\ell)^{1+\delta}\M(f^{k-\ell})$
whenever $k\ge 2$, $1\le \ell\le k-1$, and
$\M(f^k)> C\M(f^\ell)\M(f^{k-\ell})$.
Then for each $k\ge 2$, we have
$\M(f^k)\le \M(f^\ell)\M(f^{k-\ell})\max\{C,\M(f^\ell)^\delta\}$.
Similarly, $\M(f^{k-\ell})\le \M(f^\ell)\M(f^{k-2\ell})\max\{C,\M(f^\ell)^\delta\}$.
Putting it together, we get that whenever $\ell$
divides $k$, we have
\begin{align*}
\M(f^k)&\le \M(f^\ell)\M(f^{k-\ell})\max\{C,\M(f^\ell)^\delta\}
\\&\le \M(f^\ell)^2\M(f^{k-2\ell})\max\{C,\M(f^\ell)^\delta\}^2
\\&\le ...
\\&\le \M(f^\ell)^{k/\ell}\max\{C,\M(f^\ell)^{\delta}\}^{k/\ell}.
\end{align*}
Hence whenever $\ell$ divides $k$, we can write
\[\M(f^k)^{1/k}\le \M(f^\ell)^{1/\ell}\max\{C^{1/\ell},(\M(f^\ell)^{1/\ell})^{\delta}\}.\]
Now, pick an even term $k_i$ large enough so that
$\M(f^{k_i})^{1/k_i}\le \M^{\underline{*}}(f)+\epsilon/3$
and so that $C^{1/k_i}(\M^{\underline{*}}(f)+\epsilon/3)\le
\M^{\underline{*}}(f)+2\epsilon/3$.
Also, pick an odd term $k_j$ larger than $k_i$, and large enough
so that $\M(f^{k_j})^{1/k_j}\ge \M^{\overline{*}}(f)-\epsilon/3$.
Then set $k=k_j$ and $\ell=k_i$ in the above, and write
\begin{align*}
\M^{\overline{*}}(f)&\le \M(f^k)^{1/k}+\epsilon/3
\\&\le \epsilon/3+\M(f^\ell)^{1/\ell}\max\{C^{1/\ell},(\M(f^\ell)^{1/\ell})^{\delta}\}
\\&\le \epsilon/3+\max\{C^{1/\ell}(\M^{\underline{*}}(f)+\epsilon/3),
                        (\M^{\underline{*}}(f)+\epsilon/3)^{1+\delta}\}
\\&\le \epsilon+\M^{\underline{*}}(f).
\end{align*}
Since $\epsilon$ was arbitrary, we must have
$\M^{\overline{*}}(f)=\M^{\underline{*}}(f)$, as desired.
\end{proof}

\subsection{Properties of composition limits}

We list some basic properties of composition limits.

\begin{lemma}
If $\M$ is a weakly well-behaved measure and $\M^*$ converges,
then $\M^*$ is also weakly well-behaved.
\end{lemma}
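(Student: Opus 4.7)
The plan is to use Lemma \ref{lem:WellBehavedReductions}, which characterizes weakly well-behaved measures as exactly those respecting the reduction order $\lesssim$. So it suffices to show that if $f \lesssim g$, then $\M^*(f) \le \M^*(g)$. This reduces the whole statement to a single monotonicity fact about $\lesssim$ under recursive composition.

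The key step is to establish that $\lesssim$ is preserved under taking recursive powers: if $f \lesssim g$, then $f^k \lesssim g^k$ for all $k \ge 1$. I would prove this by induction on $k$, using parts 1 and 2 of Lemma \ref{lem:SwitchComposition}. For the inductive step, assuming $f^{k-1} \lesssim g^{k-1}$ and $f \lesssim g$, apply part 1 to conclude $f \circ f^{k-1} \lesssim f \circ g^{k-1}$, then part 2 to conclude $f \circ g^{k-1} \lesssim g \circ g^{k-1}$, and chain these by transitivity of $\lesssim$. This gives $f^k \lesssim g^k$.

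Once this is in place, the conclusion is routine. Since $\M$ is weakly well-behaved, Lemma \ref{lem:WellBehavedReductions} gives $\M(f^k) \le \M(g^k)$ for every $k$, so
\[
\M(f^k)^{1/k} \le \M(g^k)^{1/k}.
\]
Taking $k \to \infty$ and using the assumption that both limits converge, we obtain $\M^*(f) \le \M^*(g)$. Applying Lemma \ref{lem:WellBehavedReductions} in the reverse direction then yields that $\M^*$ is weakly well-behaved.

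I do not expect any serious obstacle here: the only point requiring a small amount of care is the inductive step showing $\lesssim$ is preserved under recursive composition, and this is an immediate consequence of the composition lemma already proven. The reason the analogous statement for \emph{strongly} well-behaved measures would be more subtle is that parts 4 and 5 of Lemma \ref{lem:SwitchComposition} require (strong) switchability of the inner function, which is not automatically preserved; but the weak version we need here goes through cleanly without any such hypothesis.
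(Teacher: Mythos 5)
Your proposal is correct and takes essentially the same route as the paper: both reduce via Lemma~\ref{lem:WellBehavedReductions} to showing $f\lesssim g\Rightarrow f^k\lesssim g^k$, which is obtained from parts 1 and 2 of Lemma~\ref{lem:SwitchComposition} and transitivity (the paper replaces copies one at a time rather than by induction on $k$, but this is the same argument). No gaps.
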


\begin{proof}
We need to show that if $f'\lesssim f$, then $\M^*(f')\le\M^*(f)$
(see \lem{WellBehavedReductions}).
It suffices to show that $\M((f')^k)\le\M(f^k)$ for all $k\in\bN$.
Using \lem{SwitchComposition}, we can replace each
copy of $f$ in $f^k=f\circ f\circ\dots\circ f$ by $f'$ one at a time,
until we get $(f')^k=f'\circ\dots\circ f'$. Each time we replace
$f$ by $f'$, we get a function that is smaller than or equal to
the previous one in the $\lesssim$ order. Hence
$(f')^k\lesssim f^k$, and the desired result follows.
\end{proof}

\begin{lemma}\label{lem:kstar}
If $\M^*(f)$ converges, then $\M^*(f^k)$ converges for each $k\in\bN$ and
$\M^*(f^k)=\M^*(f)^k$.
\end{lemma}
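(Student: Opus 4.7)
The plan is to leverage associativity of composition, which is noted in the remark following \defn{composition}, in order to express the sequence defining $\M^*(f^k)$ as a subsequence of the one defining $\M^*(f)$. Since $(f^k)^n = f^{kn}$, for any fixed $k \ge 1$ and any $n \ge 1$ I would write
\[\M\bigl((f^k)^n\bigr)^{1/n} \;=\; \M(f^{kn})^{1/n} \;=\; \bigl(\M(f^{kn})^{1/(kn)}\bigr)^{k}.\]

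The key observation is then that $\M(f^{kn})^{1/(kn)}$ is the $m = kn$ subsequence of the full sequence $\M(f^m)^{1/m}$, which converges to $\M^*(f)$ by hypothesis. Every subsequence of a convergent sequence converges to the same limit, so $\M(f^{kn})^{1/(kn)} \to \M^*(f)$ as $n \to \infty$. Continuity of the map $x \mapsto x^k$ on $[0,\infty)$ then gives $\M\bigl((f^k)^n\bigr)^{1/n} \to \M^*(f)^k$, which establishes both the existence of $\M^*(f^k)$ and the identity $\M^*(f^k) = \M^*(f)^k$. The boundary case $k = 0$ is immediate from $f^0 = \tI$ together with the convention $\M^*(f)^0 = 1$ (and is arguably outside the intended scope of the lemma).

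I do not anticipate any real obstacle: the entire argument reduces to the fact that a subsequence of a convergent sequence inherits the limit, combined with the continuity of exponentiation. The only identity worth verifying explicitly is the associativity $(f^k)^n = f^{kn}$, which has already been noted in the preliminaries, so no new machinery is needed.
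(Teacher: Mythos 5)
Your argument is exactly the paper's proof: both rewrite $\M((f^k)^n)^{1/n}$ as $(\M(f^{kn})^{1/(kn)})^k$ via associativity, invoke convergence of the subsequence indexed by multiples of $k$, and pass the limit through the continuous map $x\mapsto x^k$. The proposal is correct and requires no changes.
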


\begin{proof}
We have 
\[\M^*(f^k)=\lim_{n\to\infty}\M((f^k)^n)^{1/n}
=\lim_{n\to\infty}(\M(f^{kn})^{1/kn})^k
=(\lim_{n\to\infty}\M(f^{kn})^{1/kn})^k
=\M^*(f)^k,\]
where we exchanged a limit with the continuous function $x^k$
and we used the fact that a subsequence converges to the same
limit as a convergent sequence.
\end{proof}

\begin{corollary}\label{cor:StarStar}
If $\M^*(f)$ converges, $\M^{**}(f)$ also converges and
$\M^{**}(f)=\M^*(f)$.
\end{corollary}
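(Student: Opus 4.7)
The plan is to unwind the definition of $\M^{**}(f)$ and apply \lem{kstar} directly. By definition,
\[\M^{**}(f) = \lim_{k\to\infty} \M^*(f^k)^{1/k},\]
provided $\M^*$ is defined on all of $\{f, f^2, f^3, \dots\}$ and the limit exists.

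First I would verify that $\M^*$ is in fact defined on every power $f^k$: this is exactly the content of the first half of \lem{kstar}, which says that convergence of $\M^*(f)$ implies convergence of $\M^*(f^k)$ for each $k \in \bN$. So the sequence $\M^*(f^k)^{1/k}$ is well-defined.

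Next I would substitute the identity $\M^*(f^k) = \M^*(f)^k$ from the second half of \lem{kstar}. This gives
\[\M^*(f^k)^{1/k} = \left(\M^*(f)^k\right)^{1/k} = \M^*(f)\]
for every $k \geq 1$. Hence the sequence is constant, trivially converges, and its limit is $\M^*(f)$, giving $\M^{**}(f) = \M^*(f)$.

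There is no real obstacle here: the corollary is essentially a one-line consequence of \lem{kstar}, with the work having already been done in proving that lemma. The only thing to be mindful of is not confusing $\M^{**}$ (the composition limit of the measure $\M^*$) with, say, iterating the limit operation in some other sense; once the definition is written out, the substitution is immediate.
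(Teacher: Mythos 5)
Your proof is correct and follows exactly the same route as the paper: both unwind the definition of $\M^{**}(f)$ and substitute $\M^*(f^k)=\M^*(f)^k$ from \lem{kstar} to obtain a constant sequence. Your extra remark that \lem{kstar} also guarantees $\M^*$ is defined on every power $f^k$ is a nice touch of care, but the argument is the same.
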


\begin{proof}
We have
$\M^{**}(f)=\lim_{k\to\infty}\M^*(f^k)^{1/k}$,
but $\M^*(f^k)=\M^*(f)^k$ by \lem{kstar}, so this is the constant
sequence $\M^*(f)$ which converges to $\M^*(f)$.
\end{proof}

\begin{lemma}
Suppose $\M$ is composition bounded on
a composition-closed class $\mathcal{A}\subseteq\mathcal{F}$.
Then $\M^*(f\circ g)=\M^*(g\circ f)$ for all $f,g\in\mathcal{A}$.
\end{lemma}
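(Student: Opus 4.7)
The plan is to exploit the elementary algebraic identity
\[ (f\circ g)^k \circ f \;=\; f \circ (g\circ f)^k, \]
which follows immediately from associativity of $\circ$, and then sandwich $\M$ of this common function between an RLBI lower bound and an RUBO upper bound to transfer information between $\M((f\circ g)^k)$ and $\M((g\circ f)^k)$.

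First I would note that since $\mathcal{A}$ is composition-closed, both $f\circ g$ and $g\circ f$ lie in $\mathcal{A}$, so by \thm{FormalLimit} the limits $\M^*(f\circ g)$ and $\M^*(g\circ f)$ both exist. Let $\N$ witness RUBO nearly linearly and $\N'$ witness RLBI. Applying RLBI to the left-hand side of the identity and RUBO to the right-hand side gives
\[ \N'(f)\cdot \M((f\circ g)^k) \;\le\; \M\bigl((f\circ g)^k\circ f\bigr) \;=\; \M\bigl(f\circ (g\circ f)^k\bigr) \;\le\; \N(f)\cdot \M((g\circ f)^k). \]
In the main case where $\N'(f)>0$, dividing and taking $k$-th roots yields
\[ \M((f\circ g)^k)^{1/k} \;\le\; \Bigl(\tfrac{\N(f)}{\N'(f)}\Bigr)^{1/k}\cdot \M((g\circ f)^k)^{1/k}, \]
and letting $k\to\infty$ the prefactor tends to $1$, so $\M^*(f\circ g)\le \M^*(g\circ f)$. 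The symmetric identity $(g\circ f)^k\circ g = g\circ (f\circ g)^k$ gives the reverse inequality (provided $\N'(g)>0$), establishing equality.

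The only real subtlety is the degenerate case where $\N'(f)=0$ (the case $\N'(g)=0$ is symmetric). By the defining condition of RLBI in \defn{RLB}, $\N'(f)=0$ forces $\M(h\circ f)=0$ for every $h\in\mathcal{A}$; in particular $\M(g\circ f)=0$. Writing $(g\circ f)^k = (g\circ f)^{k-1}\circ (g\circ f)$ and applying RUBO gives $\M((g\circ f)^k)\le \N((g\circ f)^{k-1})\cdot \M(g\circ f)=0$ for all $k\ge 1$, so $\M^*(g\circ f)=0$. Similarly, using $(f\circ g)^{k+1}=f\circ (g\circ f)^k\circ g$ and two applications of RUBO gives
\[ \M((f\circ g)^{k+1}) \;\le\; \N(f)\cdot \N((g\circ f)^k)\cdot \M(g), \]
and since the nearly-linear condition forces $\N(h)=0$ whenever $\M(h)=0$ (via \defn{littleo}), we get $\N((g\circ f)^k)=0$ for $k\ge 1$, hence $\M^*(f\circ g)=0$ as well. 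Thus both composition limits vanish and the equality still holds.

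The main obstacle I expect is purely bookkeeping: ensuring the degenerate branch with $\N'=0$ is treated cleanly, and verifying that the nearly-linear RUBO assumption is used only to kill prefactors of the form $(\N(f)/\N'(f))^{1/k}\to 1$ (which does not actually need near-linearity — plain RUBO suffices in the main argument, with near-linearity used only to propagate zeros in the degenerate case).
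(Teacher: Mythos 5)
Your proof is correct and follows essentially the same route as the paper's: both use associativity to identify a power of $f\circ g$ with a power of $g\circ f$ up to one extra copy of $f$ on each end, peel off the outer $f$ with RUBO, absorb the inner $f$ with RLBI, and note that the resulting constant prefactor vanishes under $k$-th roots, with a separate degenerate branch when the RLBI witness is zero. Your observation that plain RUBO suffices for the main case (near-linearity being needed only to propagate zeros) is a correct minor refinement.
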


\begin{proof}
Suppose that $\N(f)\ne 0$, where $\N$ is the measure
from the RLBI property. Write
\[\M((f\circ g)^k)=\M(f\circ (g\circ f)^{k-1} \circ g)
\le \frac{\M(f)^{1+o(1)}}{\N(f)}\M((g\circ f)^k),\]
where we used the nearly-linear RUBO property to bound
$\M(f\circ (g\circ f)^{k-1}\circ g)\le \M(f)^{1+o(1)}
\M((g\circ f)^{k-1}\circ g)$
and we used the RLBI property to bound
$\M((g\circ f)^{k-1}\circ g\circ f)\ge
\M((g\circ f)^{k-1}\circ g)\N(f)$.
Raising both sides to the power $1/k$ and taking limits gives
the desired result.

Next, suppose that $\N(g)\ne 0$. We can use the same proof
starting with $\M((g\circ f)^k)$ instead of starting with
$\M((f\circ g)^k)$.

Finally, suppose $\N(f)=\N(g)=0$.
Then $\M(h\circ f)=0$ for all $h\in\cA$,
and in particular $\M((g\circ f)^k)=0$ for all $k$.
Similarly, $\M((f\circ g)^k)=0$ for all $k$.
This means $\M^*(g\circ f)=0=\M^*(f\circ g)$, as desired.
\end{proof}

\begin{lemma}\label{lem:StarCompare}
If $\M(f)\le \N(f)^{c+o(1)}$ for some constant $c$
and $\M^*(f)$ and $\N^*(f)$ both converge with $\M^*(f)\ge 1$,
then $\M^*(f)\le \N^*(f)^c$.
\end{lemma}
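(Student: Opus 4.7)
The plan is to unfold the hypothesis via \defn{littleo}, apply it to each composite $f^k$, take $k$-th roots, and pass to the limit, with the lower bound $\M^*(f)\ge 1$ used at the end to select the correct branch of a max. By the reformulation of little-$o$ given in the lemma following \defn{littleo} (applied with the exponent shifted from $1$ to $c$), the assumption $\M(f)\le \N(f)^{c+o(1)}$ is equivalent to saying that for every $\epsilon>0$ there is a constant $C_\epsilon>0$ such that
\[\M(f)\le \max\bigl\{C_\epsilon\,\N(f)^{c},\ \N(f)^{c+\epsilon}\bigr\}\]
for every $f$ in the domain, with $\N(f)=0$ forcing $\M(f)=0$ as well.

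I would then specialize this inequality to $f^k$ and take the $k$-th root of both sides to obtain
\[\M(f^k)^{1/k}\le \max\bigl\{C_\epsilon^{1/k}\bigl(\N(f^k)^{1/k}\bigr)^{c},\ \bigl(\N(f^k)^{1/k}\bigr)^{c+\epsilon}\bigr\}.\]
Since $C_\epsilon^{1/k}\to 1$ and both $\M(f^k)^{1/k}\to \M^*(f)$ and $\N(f^k)^{1/k}\to \N^*(f)$ by the convergence hypothesis, taking $k\to\infty$ yields
\[\M^*(f)\le \max\{\N^*(f)^{c},\ \N^*(f)^{c+\epsilon}\}.\]

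At this point I would invoke $\M^*(f)\ge 1$ to collapse the max. The inequality above forces $\max\{\N^*(f)^{c},\N^*(f)^{c+\epsilon}\}\ge 1$, and (for the intended positive $c$) this in turn forces $\N^*(f)\ge 1$; with $\N^*(f)\ge 1$ and $\epsilon>0$, we have $\N^*(f)^{c+\epsilon}\ge \N^*(f)^{c}$, so the max equals $\N^*(f)^{c+\epsilon}$. Hence $\M^*(f)\le \N^*(f)^{c+\epsilon}$ for every $\epsilon>0$, and letting $\epsilon\to 0$ (using continuity of $t\mapsto \N^*(f)^{t}$) gives $\M^*(f)\le \N^*(f)^{c}$, as required.

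The only real obstacle I anticipate is bookkeeping around the little-$o$ definition: turning the symbolic hypothesis $\M(f)\le \N(f)^{c+o(1)}$ into the explicit two-branch ``max'' bound above, and then making sure the extraneous branch $C_\epsilon\N(f^k)^c$ survives the $k$-th root and limit cleanly (the factor $C_\epsilon^{1/k}$ vanishes, which is the whole point of taking the composition limit). Once that is in place, the rest of the argument is essentially the observation that $\M^*(f)\ge 1$ is precisely what rules out the degenerate regime $\N^*(f)<1$ (including $\N^*(f)=0$) in which the $C_\epsilon \N^*(f)^c$ branch would otherwise dominate.
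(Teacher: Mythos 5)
Correct, and essentially the paper's own proof: both arguments unfold the little-$o$ hypothesis into a two-branch $\max$ with an $\epsilon$-shifted exponent, specialize to $f^k$, take $k$-th roots and limits (so the constant's $k$-th root tends to $1$), and then use $\M^*(f)\ge 1$ to dispose of the degenerate branch before sending $\epsilon\to 0$. Your unfolding $\M(f)\le\max\{C_\epsilon\N(f)^c,\N(f)^{c+\epsilon}\}$ keeps the constant attached to $\N(f)^c$, which is actually more faithful to the paper's Definition of little-$o$ than the paper's own proof (which uses a bare constant and stops at $\M^*(f)\le\max\{\N^*(f)^c,1\}$), so your extra care in collapsing the $\max$ via $\M^*(f)\ge 1$ is welcome rather than a deviation.
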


\begin{proof}
By the definition of the little-o notation \defn{littleo},
for any fixed $\delta>0$, there exists a constant $C>0$
for which $\M(f)\le \max\{\N(f)^{c+\delta},C\}$ holds for all $f$.
Then $\M(f^k)^{1/k}\le \max\{\N(f^k)^{(c+\delta)/k},C^{1/k}\}$,
and taking limits, $\M^*(f)\le \max\{\N^*(f)^{c+\delta},1\}$.
Since this holds for all $\delta>0$, we get
$\M^*(f)\le \max\{\N^*(f)^c,1\}$, as desired.
\end{proof}

\begin{lemma}\label{lem:StarBounds}
If $\M(f\circ g)\le \N(f)\M(g)$ for all $f$ and $g$
and $\M^*(f)$ converges,
then $\M^{*}(f)\le \N(f)$. Similarly,
if $\M(f\circ g)\le \M(f)\N(g)$ for all $f$ and $g$,
then $\M^{*}(f)\le \N(f)$.

This also works in the lower bound direction:
if either $\M(f\circ g)\ge \N(f)\M(g)$ for all $f$ and $g$
or else $\M(f\circ g)\ge \M(f)\N(g)$ for all $f$ and $g$,
then $\M^{*}(f)\ge \N(f)$ so long as
$\M(f^k)$ is not always $0$ (and so long as $\M^*(f)$ converges).
\end{lemma}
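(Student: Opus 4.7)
The plan is to iterate the given composition inequality to obtain a product bound on $\M(f^k)$, then raise to the $1/k$ power and pass to the limit. Since $\M^*(f)$ is assumed to converge, we can argue with the honest limit rather than liminf/limsup.

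For the first upper-bound statement, under the hypothesis $\M(f\circ g)\le\N(f)\M(g)$, I would write $f^k=f\circ f^{k-1}$ and iterate the inequality $k-1$ times to obtain $\M(f^k)\le \N(f)^{k-1}\M(f)$. Taking $k$-th roots gives $\M(f^k)^{1/k}\le \N(f)^{(k-1)/k}\M(f)^{1/k}$. If $\M(f)>0$, the factor $\M(f)^{1/k}$ tends to $1$ and $\N(f)^{(k-1)/k}\to \N(f)$, so $\M^*(f)\le \N(f)$. If $\M(f)=0$, the same iteration forces $\M(f^k)=0$ for every $k\ge 1$, so $\M^*(f)=0\le \N(f)$. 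The second upper bound is symmetric: under $\M(f\circ g)\le \M(f)\N(g)$, write $f^k=f^{k-1}\circ f$ and iterate to get precisely the same inequality $\M(f^k)\le \N(f)^{k-1}\M(f)$, then conclude identically.

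For the lower bound cases, the naive iteration gives $\M(f^k)\ge \N(f)^{k-1}\M(f)$, which is uninformative when $\M(f)=0$. The hypothesis only provides that $\M(f^k)$ is not identically zero, so I would pick some $k_0$ with $\M(f^{k_0})>0$ and run the iteration starting from $f^{k_0}$ rather than $f$. Under $\M(f\circ g)\ge \N(f)\M(g)$, this yields $\M(f^{k_0+r})\ge \N(f)^r\M(f^{k_0})$ for all $r\ge 0$ (and the dual inequality works identically for the on-the-inside case). Taking $(k_0+r)$-th roots,
\[\M(f^{k_0+r})^{1/(k_0+r)}\;\ge\;\N(f)^{r/(k_0+r)}\,\M(f^{k_0})^{1/(k_0+r)}.\]
If $\N(f)>0$, the right-hand side tends to $\N(f)\cdot 1=\N(f)$ as $r\to\infty$, since $\M(f^{k_0})^{1/(k_0+r)}\to 1$ because $\M(f^{k_0})$ is a fixed positive real. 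Hence $\M^*(f)\ge \N(f)$. If $\N(f)=0$ the desired inequality is trivial.

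The only subtle point, and the only real obstacle, is the edge-case bookkeeping when $\M$ or $\N$ vanishes: in particular, the lower-bound case where $\M(f)=0$ but $\M(f^{k_0})>0$ for some larger $k_0$. That is exactly the scenario the ``$\M(f^k)$ not always $0$'' hypothesis is designed to rule out in the bad direction, and anchoring the iteration at $f^{k_0}$ rather than $f$ is the trick that handles it. Once that is set up, the argument is just the standard fact that $a^{1/k}\to 1$ for any fixed $a>0$.
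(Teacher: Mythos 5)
Your proof is correct and follows essentially the same route as the paper's: iterate the composition inequality to get a geometric bound on $\M(f^k)$, take $k$-th roots, and pass to the limit, anchoring the lower-bound iteration at some $f^{k_0}$ with $\M(f^{k_0})>0$ (the paper's $\ell$). The only cosmetic difference is that the paper iterates the upper bound one step further down to $\M(\tI)$ rather than stopping at $\M(f)$, which avoids your case split on $\M(f)=0$.
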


\begin{proof}
If $\M(f\circ g)\le \N(f)\M(g)$, then
$\M(f^k)\le \N(f)\M(f^{k-1})\le \N(f)^2\M(f^{k-2})
\le\dots\le \N(f)^k\M(\tI)$.
Then $\M(f^k)^{1/k}\le \N(f)\M(\tI)^{1/k}$.
As $k\to\infty$, $\M(\tI)^{1/k}$ converges either to $0$
or to $1$, and in both cases we get $\M^*(f)\le \N(f)$,
as desired. This also works when $\M(f\circ g)\le \M(f)\N(g)$.

For the lower bound, the same argument gives
$\M(f^k)\ge \N(f)^{k-\ell}\M(f^{\ell})$ for any $k\ge \ell$.
Pick $\ell$ such that $\M(f^\ell)\ne 0$. Then
$\M(f^k)^{1/k}\ge \N(f)^{1-\ell/k}\M(f^\ell)^{1/k}$,
and as $k\to\infty$, we get $\M^*(f)\ge \N(f)$, as desired.
\end{proof}

\begin{corollary}
$\R^*(f)=O(\R(f)\log\R(f))$. Moreover,
$\R^*(f)=\Omega(\noisyR(f)+\LR(f))$, where the measures
$\noisyR(f)$ and $\LR(f)$ are defined in \cite{BB20b}
and \cite{BBGM22} respectively.
\end{corollary}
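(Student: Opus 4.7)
The plan is to apply \lem{StarBounds} in both directions, invoking composition theorems that are already in hand. Convergence of $\R^*(f)$ is guaranteed by \thm{FormalLimit} since $\R$ was established to be composition bounded. The degenerate case where $\R(f^k)=0$ for all $k$ (which happens only when $f$ is essentially constant) is handled separately, as the right-hand sides $\noisyR(f)$ and $\LR(f)$ also vanish there.

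For the upper bound, I would recall the nearly-linear RUBO inequality for $\R$ proven earlier in the section: $\R(f\circ g)\le O(\R(f)\log \R(f))\cdot \R(g)$. Define $\N(f)\coloneqq c\R(f)\log\R(f)$ where $c$ is the implicit constant. Then $\R(f\circ g)\le \N(f)\R(g)$ is exactly the hypothesis required by the upper-bound direction of \lem{StarBounds}, which yields $\R^*(f)\le \N(f)=O(\R(f)\log\R(f))$ immediately.

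For the lower bound, I would invoke the composition theorems of \cite{BB20b} and \cite{BBGM22}, which state (respectively) that $\R(f\circ g)=\Omega(\noisyR(f)\cdot \R(g))$ and $\R(f\circ g)=\Omega(\LR(f)\cdot\R(g))$ for all $f$ and $g$. Absorbing the hidden constants into the measures $\noisyR$ and $\LR$ (which is permissible because \lem{StarBounds} only requires that the inequality $\M(f\circ g)\ge \N(f)\M(g)$ hold with some specific measure $\N$, and scaling $\N$ by a constant gives another valid measure), each inequality satisfies the hypothesis of the lower-bound direction of \lem{StarBounds}. Applying the lemma twice gives $\R^*(f)=\Omega(\noisyR(f))$ and $\R^*(f)=\Omega(\LR(f))$. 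Since $\noisyR(f)+\LR(f)\le 2\max\{\noisyR(f),\LR(f)\}$, taking the larger of the two lower bounds yields $\R^*(f)=\Omega(\noisyR(f)+\LR(f))$.

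There is no real obstacle here: the corollary is essentially a direct application of the framework developed in the paper to known composition theorems. The only mild care needed is verifying that constants pass correctly through \lem{StarBounds} (they do, since iterating $k$ times gives $(c\N(f))^{k-\ell}$ and taking the $k$-th root leaves just $c\N(f)$), and noting that the condition $\M(f^k)$ is not always zero in \lem{StarBounds} is satisfied unless $f$ is trivially constant.
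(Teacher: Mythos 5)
Your approach is the same as the paper's: both bounds are read off from \lem{StarBounds} applied to the known composition theorems, with convergence supplied by \thm{FormalLimit} and constants absorbed into the auxiliary measure $\N$. One correction, though: the composition theorem of \cite{BBGM22} places $\LR$ on the \emph{inner} function, i.e.\ $\R(f\circ g)=\Omega(\R(f)\LR(g))$, not $\R(f\circ g)=\Omega(\LR(f)\R(g))$ as you wrote. Your conclusion survives because \lem{StarBounds} explicitly covers both the outer form $\M(f\circ g)\ge\N(f)\M(g)$ and the inner form $\M(f\circ g)\ge\M(f)\N(g)$, and the inner form with $\N=\LR$ still yields $\R^*(f)\ge\LR(f)$; but as stated, the inequality you invoke is not the one the cited work proves, so you should route the $\LR$ bound through the inner-function branch of the lemma rather than the outer one.
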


\begin{proof}
This follows from \lem{StarBounds} when combined with the observations
 $\R(f\circ g)=O(\R(f)\log\R(f)\cdot \R(g))$,
$\R(f\circ g)=\Omega(\noisyR(f)\R(g))$ \cite{BB20b}, and
$\R(f\circ g)=\Omega(\R(f)\LR(g))$ \cite{BBGM22}, which hold
for all partial functions $f$ and $g$.
\end{proof}

\begin{lemma}\label{lem:StarArithmetic}
Let $\M$ and $\N$ be measures with convergent composition limits,
and let $c>0$ be a constant. Then
\begin{enumerate}
    \item $c\M$ has a convergent composition limit and $(c\M)^*=\M^*$
    \item $\M^c$ has a convergent composition limit and 
    $(\M^c)^*=(\M^*)^c$
    \item $\M+\N$ has a convergent composition limit and 
    $(\M+\N)^*=\max\{\M^*,\N^*\}$
    \item $\M\cdot \N$ has a convergent composition limit and 
    $(\M\cdot \N)^*=\M^*\cdot\N^*$.
\end{enumerate}
\end{lemma}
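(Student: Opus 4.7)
The plan is to attack each of the four items in turn, in each case reducing the claim to a standard fact about limits of real sequences (continuity of $c^{1/k}\to 1$, of $x\mapsto x^c$, of multiplication, and the max–sum sandwich). Throughout, I use that a convergent composition limit is the limit of the real sequence $\M(f^k)^{1/k}$, so continuous operations can be passed through the limit.

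For item~1, I would just write $(c\M)(f^k)^{1/k}=c^{1/k}\cdot\M(f^k)^{1/k}$ and note that $c^{1/k}\to 1$, so the sequence converges to $\M^*(f)$, giving both convergence of $(c\M)^*$ and the equality. For item~2, the identity $(\M^c)(f^k)^{1/k}=\bigl(\M(f^k)^{1/k}\bigr)^c$ together with continuity of $x\mapsto x^c$ on $[0,\infty)$ immediately yields convergence and the value $(\M^*(f))^c$.

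For item~4, I would similarly use $(\M\cdot\N)(f^k)^{1/k}=\M(f^k)^{1/k}\cdot\N(f^k)^{1/k}$ and the fact that the product of two convergent sequences converges to the product of the limits. For item~3 (the sum case), the key inequality is
\[\max\{\M(f^k),\N(f^k)\}\le \M(f^k)+\N(f^k)\le 2\max\{\M(f^k),\N(f^k)\}.\]
Taking $k$-th roots gives
\[\max\{\M(f^k)^{1/k},\N(f^k)^{1/k}\}\le (\M+\N)(f^k)^{1/k}\le 2^{1/k}\max\{\M(f^k)^{1/k},\N(f^k)^{1/k}\},\]
and since $2^{1/k}\to 1$ and $\max$ is continuous, both bounds converge to $\max\{\M^*(f),\N^*(f)\}$, so the sandwich theorem gives convergence and the stated equality.

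There is no real obstacle; the mildly subtle point is just to make sure the arguments survive the degenerate cases where one or both of $\M^*(f)$, $\N^*(f)$ equals $0$. In item~2 with $c<1$ one should note that $0^c=0$ by convention; in item~3 the sandwich above handles the $0$ cases automatically; in item~4, if either factor has composition limit $0$ then the product sequence is dominated by a sequence going to $0$, so the product limit is $0=\M^*(f)\cdot\N^*(f)$. No composition-theoretic machinery (RUBO/RLBI, \lem{SwitchComposition}, etc.) is needed here — the statement is purely about how $\lim_{k\to\infty}(\cdot)^{1/k}$ interacts with basic arithmetic operations.
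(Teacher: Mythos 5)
Your proposal is correct and matches the paper's approach: the paper also dispatches items 1, 2, and 4 as immediate consequences of standard limit laws and singles out the sum case, handling it via the same inequality $\M(f^k)+\N(f^k)\le 2\max\{\M(f^k),\N(f^k)\}$. Your write-up simply fills in the routine details (the $c^{1/k}\to 1$ and $2^{1/k}\to 1$ steps and the degenerate zero cases) that the paper leaves implicit.
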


\begin{proof}
All of these follow immediately from properties of limits.
The trickiest one is $(\M+\N)^*=\max\{\M^*,\N^*\}$,
which is equivalent to showing
$\lim_{k\to\infty} (\M(f^k)+\N(f^k))^{1/k}=\max\{\M^*(f),\N^*(f)\}$.
This is not hard to show by using
$\M(f^k)+\N(f^k)\le 2\max\{\M(f^k),\N(f^k)\}$.
\end{proof}

\section{Composition limits for Las Vegas algorithms}
\label{sec:LasVegas}

Define the measure $\Q_{\C}(f)$ to be the number of queries
required by a quantum algorithm that finds a certificate.
That is, the quantum algorithm must output a certificate
$c$ certifying the value $f(x)$ of the input $x$, and
must succeed in producing $c$ with probability at least
$1/2$ (when it fails, it should output a failure symbol
$\bot$). One subtlety of the definition is that we require
the certificate size of the output to contribute to the cost;
that is, the cost of returning the certificate $c$ should
be the number of quantum queries used, plus $|c|$, the
number of bits revealed by $c$. The intuition is that
a classical algorithm will verify the returned certificate
using an additional $|c|$ queries.
Note that this ensures $\Q_{\C}(f)\ge \C(f)$.

We note that since such an algorithm can
be repeated upon failure, it can be used to produce
a Las Vegas style quantum algorithm for $f$ which
at most $2\Q_{\C}(f)$ expected queries.
We further note that the randomized version of this measure
is the same as $\R_0(f)$ up to constant factors,
since it is well-known that a
randomized query algorithm which makes zero error must
find a certificate in the input.
(To see why this is true, note that if
a certain run of an $\R_0(f)$ type algorithm terminated
without finding a certificate and claimed to know the
value of $f(x)$ on the input $x$, then by definition
of certificates, there is some $y$-input in the domain
of $f$ which is consistent with the queried bits
and for which $f(y)\ne f(x)$; then if the same algorithm
were to be run on $y$ instead of $x$, there would be a
non-zero probability that it would reach the same leaf
and provide the same output, contradicting the zero-error
property.)

These certificate-finding quantum algorithms are therefore
one possible way to define a zero-error quantum algorithm.

\begin{lemma}
$\Q_{\C}(f)$ is strongly well-behaved, composition bounded,
and $\Q_{\C}^*(f)$ converges for all (possibly partial) $f$.
\end{lemma}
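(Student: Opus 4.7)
The plan is to verify in turn that $\Q_{\C}$ is strongly well-behaved and composition bounded; the convergence of $\Q_{\C}^*$ then follows immediately from \thm{FormalLimit}.

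For \textbf{strongly well-behaved}, by \lem{WellBehavedReductions} it suffices to show that $\Q_{\C}$ does not increase under each of the five atomic transformations defining $\lesssim'$. I would check each by locally transforming a certificate-finding quantum algorithm for one function into one for the other, at no more than the same cost. Index renaming just permutes the names of queried indices. Superfluous bits are ignored by the algorithm and omitted from its output. Alphabet renaming is handled by XOR-ing a fixed mask into each measured query outcome and into the corresponding bit of the output certificate. Bit duplication uses the promise that the duplicated bit agrees with its source: a certificate containing either bit suffices, and one can always be traded for the other without changing $|c|$. Promise-respecting is immediate, since shrinking the domain of the function can only make more partial assignments qualify as certificates, so the same algorithm works.

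For \textbf{composition bounded}, observe that the empty partial assignment certifies any constant function, so $\Q_{\C}(f)=0$ on constants; combined with the previous paragraph, \lem{RLBI} immediately delivers RLBI. For RUBO nearly linearly, I would compose algorithms in the natural way: run a certificate-finding algorithm for $f$, and simulate each of its queries using a bounded-error quantum algorithm for $g$ amplified to error $o(1/\Q_{\C}(f))$, at cost $O(\Q(g)\log\Q_{\C}(f))$ per simulated query. Once the outer algorithm returns an outer certificate $c$ of length at most $\C(f)\le\Q_{\C}(f)$, run an amplified certificate-finding algorithm for $g$ on each of the $|c|$ corresponding inner blocks, and return the concatenation of the resulting inner certificates as the certificate for $f\circ g$. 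Summing query cost and output-certificate size gives a total of at most $O(\Q_{\C}(f)\log\Q_{\C}(f))\cdot\Q_{\C}(g)$, which is $\Q_{\C}(f)^{1+o(1)}\Q_{\C}(g)$; this is exactly RUBO nearly linearly.

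The main subtlety lies in the RUBO step: we need the total failure probability across the up to $\Q_{\C}(f)$ inner invocations to remain below a constant (so that the composed algorithm outputs a genuine certificate, not $\bot$, with probability at least $1/2$), and the contribution of the output certificate size must be absorbed into the $\Q_{\C}(f)^{1+o(1)}\Q_{\C}(g)$ bound. Both are controlled by $|c|\le\C(f)\le\Q_{\C}(f)$ together with the standard logarithmic amplification overhead. With RLBI and nearly-linear RUBO in hand, $\Q_{\C}$ is composition bounded on the composition-closed class $\mathcal{F}$, and \thm{FormalLimit} yields the convergence of $\Q_{\C}^*(f)$ for every $f$.
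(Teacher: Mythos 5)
Your proposal is correct and follows essentially the same route as the paper: direct verification of the reduction-invariances for strong well-behavedness, the empty-certificate observation plus \lem{RLBI} for RLBI, standard amplified composition of quantum algorithms for nearly-linear RUBO, and \thm{FormalLimit} for convergence. The only (immaterial) difference is that the paper plugs the amplified certificate-finding algorithm for $g$ directly into the oracle calls of the outer algorithm, whereas you use a bounded-error algorithm for $g$ during the simulation and a separate certificate-finding pass on the $|c|$ relevant blocks afterward; both yield the same $O(\Q_{\C}(f)\log\Q_{\C}(f))\cdot\Q_{\C}(g)$ bound.
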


\begin{proof}
It is easy to see that $\Q_{\C}(f)$ is strongly well-behaved:
if the certificate returned is minimal, then both the
certificate size and the number of queries used in an
algorithm are invariant under renaming indices, adding
superfluous bits, duplicating bits, and alphabet renaming
(flipping bits from $0$ to $1$); additionally, the measure
does not increase under restrictions to a promise.

Note also that $\Q_{\C}(f)=0$ if $f$ is constant. This is
because we can always return the empty certificate (making
no queries). By \lem{RLBI}, it follows that $\Q_{\C}(f)$
satisfies RLBI. It remains to show that it satifies RUBO
nearly linearly, which makes it composition bounded
and hence by \thm{FormalLimit} $\Q_{\C}^*(f)$ converges.
In other words, we must show $\Q_{\C}(f\circ g)\le \Q_{\C}(f)^{1+o(1)}\Q_{\C}(g)$.

This follows from the usual
composition of quantum algorithms, as follows: start
with a optimal algorithm for $\Q_{\C}(g)$, and amplify
it by repeating it $O(\log \Q_{\C}(f))$ times to reduce
its failure probability to, say, $(10\Q_{\C}(f))^{-10}$.
Then use this amplified algorithm for $g$ in place of
the quantum oracle calls for the bits of the input to $f$,
and run the algorithm for $\Q_{\C}(f)$ (repeated
twice for amplification purposes). When we run the algorithm
for $\Q_{\C}(f)$ in this way, we make sure to make
classical queries (to those quantum $g$ subroutines)
for the certificate of $f$ that we end up finding.
Since each $g$ subroutine returns a certificate for
that copy of $g$, we end up with a certificate for $f$
and, for each of its bits, a certificate for the 
corresponding copy of $g$; together, these form
a certificate for $f\circ g$, which we then return.

Since the failure
probability of the subroutines for $g$ is so small,
the work states in the resulting algorithm for $f$
(when run on the subroutines
for $g$) are close in trace distance to the work states
of the algorithm for $f$ when run on clean oracle queries.
It is then easy to see that the total failure probability
of the resulting algorithm is at most $1/4+o(1)\le 1/2$
(the $1/4$ came from the fact that we ran the $\Q_{\C}(f)$
twice, each time with a $1/2$ chance of failure; the $o(1)$
is the contribution of the failure of the $g$ subroutines).
The total number of queries used is then
$O(\Q_{\C}(f)\Q_{\C}(g)\log\Q_{\C}(f))$.
This gives the desired upper bound on $\Q_{\C}(f\circ g)$,
completing the proof.
\end{proof}

We now prove our main result.

\begin{theorem}
\[\R_0^*(f)=\max\{\R^*(f),\C^*(f)\},\]
\[\Q_{\C}^*(f)=\max\{\Q^*(f),\C^*(f)\}.\]
\end{theorem}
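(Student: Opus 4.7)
The lower bound is immediate from the introduction: $\R_0 \geq \R$ gives $\R_0^* \geq \R^*$, and any zero-error algorithm's transcript must constitute a certificate, so $\R_0 \geq \C$ and thus $\R_0^* \geq \C^*$. The same reasoning gives $\Q_{\C}^* \geq \max\{\Q^*, \C^*\}$ since $\Q_{\C} \geq \max\{\Q, \C\}$ directly from its definition.

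For the upper bound on $\R_0^*$, the plan is to first prove a crude single-shot bound
\[
\R_0(f^k) \leq \tO(n(f) \cdot k) \cdot \bigl(O(\R(f)\log \R(f)) + \C(f)\bigr)^k
\]
via a recursive Las Vegas algorithm, and then clean it up by applying the $*$-operator a second time. The algorithm, on an input to $f^k$, proceeds as follows: at the outer $f$, it runs $n(f)$ copies of an amplified bounded-error algorithm for $f^{k-1}$, producing a guess string $\hat{y} \in \B^{n(f)}$ whose total error probability is at most a small constant; it classically computes a minimum-size certificate $c$ for $f$ consistent with $\hat{y}$; it recursively runs the Las Vegas algorithm on each of the $|c| \leq \C(f)$ subtrees specified by $c$, obtaining a certificate for each; if every returned certificate agrees with $c$ on its bits, it concatenates them and returns the combined certificate; otherwise it restarts the whole attempt. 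The constant success probability per attempt keeps the expected-cost blowup bounded by a constant factor, yielding the recurrence
\[
T(k) \leq c_1 \cdot n(f) \cdot \tO(\R(f^{k-1})) + \C(f) \cdot T(k-1),
\]
where $T(k) = \R_0(f^k)$; combined with the nearly-linear RUBO property $\R(f^{k-1}) \leq O(\R(f)\log \R(f))^{k-1}$, this unrolls to the displayed bound.

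Taking $k$-th roots and letting $k \to \infty$, the subexponential prefactor $\tO(n(f) \cdot k)^{1/k}$ tends to $1$, leaving $\R_0^*(f) \leq O(\R(f) + \C(f)) \cdot \polylog(\R(f) + \C(f))$. Now apply the $*$-operator to both sides of this inequality, viewed as a relation between two measures of $f$. On the left, \cor{StarStar} gives $(\R_0^*)^*(f) = \R_0^*(f)$. On the right, \lem{StarCompare} absorbs the $O(\cdot)$ and $\polylog(\cdot)$ factors since $\log M \leq M^{o(1)}$, and part (3) of \lem{StarArithmetic} turns $(\R + \C)^*$ into $\max\{\R^*, \C^*\}$; the net effect is $\R_0^*(f) \leq \max\{\R^*(f), \C^*(f)\}$, matching the lower bound.

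The quantum statement $\Q_{\C}^*(f) = \max\{\Q^*(f), \C^*(f)\}$ follows from the identical recursion, with bounded-error quantum subroutines (amplified to inverse-polynomial error) replacing the randomized guessing step and $\Q_{\C}$ replacing $\R_0$ in the verification step; the composition-boundedness argument for $\Q_{\C}$ already provides the required bookkeeping for how amplified quantum subroutines perturb the work state negligibly in trace distance, and the final verification queries are classical, so no entanglement between the guessing and verification phases needs to be tracked. The main obstacle I expect is the recurrence analysis itself, in particular verifying that the classical ``verify and restart'' loop maintains the zero-error (respectively certificate-finding) guarantee while keeping the expected cost within the claimed bound; once the crude bound is in hand, the simplifications via \cor{StarStar}, \lem{StarCompare}, and \lem{StarArithmetic} are essentially automatic.
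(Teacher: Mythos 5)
Your plan matches the paper's proof essentially step for step: the same recursive guess-then-certify algorithm (amplified Monte Carlo estimates for the $n$ children, a minimum certificate $c$ for the guess, recursive certificate-finding only on the $\le\C(f)$ positions of $c$), the same recurrence and geometric unrolling, and the same trick of applying the $*$-operator a second time via \cor{StarStar}, \lem{StarCompare}, and \lem{StarArithmetic} to collapse the crude $O(\C(f)\polylog+\R(f)\polylog)$ bound to $\max\{\R^*(f),\C^*(f)\}$. The only substantive difference is your ``restart the whole attempt'' loop, which is fine for the expected-cost measure $\R_0$ but does not transfer literally to the worst-case measure $\Q_{\C}$; the paper instead amplifies each recursive call by $O(\log\C(f))$ repetitions and outputs $\bot$ on failure, a detail you would need to adopt in the quantum case (the extra $\log\C(f)$ factor is harmless since the final star absorbs it).
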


\begin{proof}
We show this for $\Q_{\C}(f)$. The argument for randomized
query complexity will be similar.
Fix a (possibly partial) Boolean function $f$ defined
on $n$ bits. We may assume $n\ge 2$ since the $n=1$ case
is trivial. We also assume $f$ is not constant.

We recursively define a quantum algorithm finding
a certificate for $f^k$, which we denote $A_k$ for
each $k$; each such algorithm $A_k$ will find a certificate
for $f^k$ with probability at least $1/2$.
The worst case number of queries used by $A_k$ will
be denoted $q_k$.
The base case is $k=1$; for the function $f^1=f$
we simply bound $q_1\le n$.

Suppose now that $k>0$ and $A_{k-1}$ has been defined.
The algorithm $A_k$ works as follows. Consider the top-most
copy of $f$ in the composition $f^k$; this copy has $n$
bits, each of which is $f^{k-1}$ applied to disjoint inputs.
For each of those $n$ copies of $f^{k-1}$, we start by
estimating its output value using a Monte Carlo algorithm.
Specifically, we use the best possible bounded-error
quantum algorithm for $f^{k-1}$, which uses
$\Q(f^{k-1})$ queries, and we amplify it (by
repeating several times and taking majority vote)
until its worst-case error drops from $1/3$ to $1/4n$.
This uses $O(\Q(f^{k-1})\log n)$ quantum queries per copy
of $f^{k-1}$, and there are $n$ copies. Note that since
$n\ge 2$, we have $\log n>0$, and the constant in the
big-O notation can be assumed to be multiplicative.
In other words, there is a universal constant $C$
(independent of $f$, $k$, and $n$) such
that using at most $C\Q(f^{k-1})n\log n$ queries,
we can produce estimates of all $n$ inputs to the topmost
copy of $f$. Moreover, each estimated bit is wrong
with probability at most $1/4n$, and by the union bound,
they're all correct except with probability at most $1/4$.

Let $x\in\B^{n^k}$ denote the input to $f^k$, and let
$z\in\B^n$ denote the estimated input to the topmost copy
of $f$. If $z$ is not in $\Dom(f)$, the algorithm $A_k$
declares failure and terminates (outputs $\bot$).
If $z\in\Dom(f)$, the algorithm finds a certificate
$c$ for $f$ consistent with $z$ of size at most $\C(f)$.
The certificate $c$ reveals at most $\C(f)$ bits.
For each of those bits, the algorithm $A_k$ then
applies $A_{k-1}$ on the corresponding copy of $f^{k-1}$,
repeating the algorithm $A_{k-1}$ if it outputs $\bot$
for a maximum of $1+\lceil\log \C(f)\rceil\le 2+\log\C(f)$
times to ensure the probability of failure drops to
at most $1/4\C(f)$. We then check all the resulting
certificates. If the function values of those copies of
$f^{k-1}$ agree with the predictions (from the certificate
$c$), we merge all the resulting certificates into
one big certificate for $f^k$ and return it
(this is a valid certificate since we've certified
all the bits of a certificate $c$ for the topmost copy
of $f$). Otherwise, if one of the function values
fails to match the bit predicted by the certificate $c$,
we return $\bot$. This completes the definition of $A_k$.

Note that $A_k$ has only a $1/4$ probability of generating
a wrong prediction $z$ for the topmost input to $f$.
Moreover, if $z$ is correct, then assuming $x\in\Dom(f^k)$,
we must have $z\in\Dom(f)$, and $c$ is a valid certificate
for $z$. In this case, the only way for the algorithm
to fail is if one of the runs of $A_{k-1}$ fails even
after the $1+\lceil\log\C(f)\rceil$ repetitions.
The probability that any one of these runs fails
for all those repetitions is at most $1/4\C(f)$,
and the probability that all $\C(f)$ of those runs
(for different bits of the certificate $c$) fail
is therefore at most $1/4$. By the union bound, the
probability that $A_k$ outputs $\bot$ is therefore at most
$1/2$, so it is a valid certificate-finding algorithm.

We now analyze its query complexity. In the first part,
when we find the guess string $z$, we use at most
$C\Q(f^{k-1})n\log n$ queries. In the second part,
we repeat $A_{k-1}$ for each of $\C(f)$ bits,
repeated for $2+\log\C(f)$ times each; the number
of queries used for this is
$q_{k-1}\C(f)(2+\log \C(f))$. This gives the recurrence
\[q_k\le C\Q(f^{k-1})n\log n+\C(f)(2+\log \C(f))q_{k-1}.\]
Let $a=Cn\log n$ and $b=\C(f)(2+\log \C(f))$.
Then 
\[q_k\le a\Q(f^{k-1})+bq_{k-1}\le
a\Q(f^{k-1})+ab\Q(f^{k-2})+b^2q_{k-2}\le\dots\qquad\]
\[\qquad\le
a\Q(f^{k-1})+ab\Q(f^{k-2})+ab^2\Q(f^{k-3})+\dots+ab^{k-2}\Q(f^1)+b^{k-1}q_1.\]
We now use $q_1=n\le a$ and $\Q(f^\ell)\le d\Q(f)^\ell$
for a constant $d\ge 1$ and all $\ell$.
This lets us bound $q_k$ by a geometric series:
\[q_k\le da\Q(f)^k\sum_{i=1}^{k-1}\left(\frac{b}{\Q(f)}\right)^i.\]
Now, if $b\ge \Q(f)/2$, then replacing $b$ with $4b$,
we get a geometric series that increases by a factor
of at least $2$ each time, and hence is upper bounded
by twice the largest term; in this case
$q_k\le 2da\Q(f) (4b)^{k-1}\le da(4b)^k$, or
\[\Q_{\C}(f^k)\le C'n\log n\cdot (5\C(f)\log (4\C(f)))^k\]
for $C'=dC$ (the $5$ comes from the addition of an extra
$\C(f^k)\le \C(f)^k$ factor that comes from
the definition of $\Q_{\C}(f^k)=q_k+\C(f^k)$).
Alternatively, if $b\le \Q(f)/2$, then
the geometric series decreases by a factor of $2$ each
time, and hence is upper bounded by twice the first term;
in this case
$q_k\le 2dab\Q(f)^{k-1}\le da\Q(f)^k$, or
\[\Q_{\C}(f^k)\le C'n\log n\cdot \Q(f)^k+\C(f)^k.\]
In both cases, we get
\[\Q_{\C}(f^k)\le C'n\log n\cdot
\left((5\C(f)\log 4\C(f))^k+ \Q(f)^k\right).\]
Taking both sides to the power $1/k$ and sending the limit
as $k\to\infty$, we get
\[\Q_{\C}^*(f)\le \max\{5\C(f)\log 4\C(f),\Q(f)\}.\]
Finally, we take the star of both sides. By
\cor{StarStar}, we have $\Q_{\C}^{**}(f)=\Q_{\C}^*(f)$.
By \lem{StarCompare}, taking the $*$ of the measure
$5\C(f)\log 4\C(f)$ yields $\C^*(f)$. From this it follows
that
\[\Q_{\C}^*(f)\le \max\{\Q^*(f),\C^*(f)\}.\]
The corresponding lower bound is easy: any algorithm
for finding a certificate for $f^k$ must have cost
at least $\C(f^k)$ by definition of certificate-finding
algorithms. Also, any certificate-finding algorithm can
easily be converted to a bounded-error algorithm at no extra
cost (simply guess the output randomly if the certificate
finding algorithm gave $\bot$ as output). This shows
the lower bounds of $\Q^*(f)$ and $\C^*(f)$,
as desired.

The case of randomized algorithms is exactly the same,
except for the final recurrence relation; there, we used
$\Q(f^\ell)\le d\Q(f)^\ell$ for a constant $d$, which
fails for randomized algorithms. Instead, we have
$\R(f^\ell)\le (d\R(f)\log \R(f))^\ell$ for a constant $d$.
The final bound then looks like
\[\R_{\C}(f^k)\le C'n\log n\cdot
\left((5\C(f)\log 4\C(f))^k+(d\R(f)\log \R(f))^k\right),\]
and the rest of the proof proceeds analogously
(first taking powers $1/k$ on both sides and taking
the limit $k\to\infty$ to get the upper bound
$O(\C(f)\log4\C(f)+\R(f)\log\R(f))$, and then
taking stars on both sides to improve this to the upper bound
$\max\{\C^*(f),\R^*(f)\}$).
We note that $\R_{\C}^*(f)=\R_0^*(f)$, since
$\R_{\C}(f)$ and $\R_0(f)$ differ by constant factors
(and using \lem{StarArithmetic}).
\end{proof}

\phantomsection\addcontentsline{toc}{section}{Acknowledgements}
\section*{Acknowledgements}
This research is supported in part by the Natural Sciences and
Engineering Research Council of Canada (NSERC), DGECR-2019-00027 and
RGPIN-2019-04804.\footnote{Cette recherche a été financée par le
Conseil de recherches en sciences naturelles et en génie du Canada
(CRSNG), DGECR-2019-00027 et RGPIN-2019-04804.}

\phantomsection\addcontentsline{toc}{section}{References} 
\renewcommand{\UrlFont}{\ttfamily\small}
\let\oldpath\path
\renewcommand{\path}[1]{\small\oldpath{#1}}
\emergencystretch=1em 
\printbibliography

\end{document}